\newcommand{\newreptheorem}[2]{%
	\newtheorem*{rep@#1}{\rep@title}
	\newenvironment{rep#1}[1]{%
		\def\rep@title{#2 \ref*{##1}}%
		\begin{rep@#1}}%
		{\end{rep@#1}}
}
\newtheorem{theorem}{Theorem}[section]
\newtheorem{definition}[theorem]{Definition}
\newtheorem{corollary}[theorem]{Corollary}
\newtheorem{lemma}[theorem]{Lemma}
\newtheorem{claim}[theorem]{Claim}
\theoremstyle{definition}
\newtheorem{algorithm}{Algorithm}
\newcommand{\savehyperref}[2]{\texorpdfstring{\hyperref[#1]{#2}}{#2}}
\DeclareMathOperator*{\E}{\mathbb{E}} 
\newcommand{\Paren}[1]{\left(#1\right)} 
\newcommand{\Brac}[1]{\left[#1\right]} 
\newcommand{\abs}[1]{\lvert#1\rvert} 
\newcommand{\Abs}[1]{\left\lvert#1\right\rvert} 
\newcommand{\Norm}[1]{\Vert #1 \Vert} 
\newcommand{\inprod}[2]{\langle #1, #2 \rangle} 
\DeclareMathOperator{\poly}{poly} 
\newcommand{\sett}[2]{\left\{ #1 \left| \; \vphantom{#1 #2} \right. #2  \right\}} 
\newcommand{\set}[1]{\{#1\}} 
\newcommand{\Set}[1]{\left\{#1\right\}} 
\newcommand{\reals}{{\mathbb R}}
\newcommand{\st}{\text{ s.t. }} 
\newcommand{\R}{\mathbb{R}}
\newcommand{\bias}{\mathrm{bias}}
\def\bits{\{0,1\}}
\def\S{{\cal S}}
\newcommand{\D}{\mathcal{D}} 
\newcommand{\dist}{\text{dist}}
\newcommand{\cind}[1]{a(#1)}
\newcommand {\ra} {\right \rangle}
\newcommand {\la} {\left  \langle}
\newcommand{\norm}[1]{{\left\|{#1}\right\|}}
\newcommand\remove[1]{{}}
\def\eps{\varepsilon}
\renewcommand{\epsilon}{\varepsilon}
\def\bip{{\rm bip}}
\begin{document}

\title{List-Decoding with Double Samplers \footnote{A preliminary version of this paper appeared in \emph{Proc.\ $30$th Annual {ACM}-{SIAM} Symp.\ on Discrete
			Algorithms (SODA)}, 2019~\cite{DINURHKNT2019}}}
\author{Irit Dinur\thanks{Weizmann Institute of Science, ISRAEL. email: {\tt irit.dinur@weizmann.ac.il}. Supported by ERC-CoG grant number 772839.}
	\and Prahladh Harsha\thanks{Tata Institute of Fundamental Research, INDIA. email: {\tt prahladh@tifr.res.in}. Research supported by the Department of Atomic Energy,
		Government of India, under project no. 12-R\&D-TFR-5.01-0500 and in part by UGC-ISF grant and the Swarnajayanti Fellowship. Part of the work was done when the author was visiting
		the Weizmann Institute of Science.}
	\and Tali Kaufman\thanks{Bar-Ilan University, ISRAEL. email: {\tt kaufmant@mit.edu}. Supported by a BSF grant and an ERC grant.}
	\and Inbal Livni Navon\thanks{Weizmann Institute of Science, ISRAEL. email: {\tt inbal.livni@weizmann.ac.il}. Supported by Irit Dinur's ERC-CoG grant number 772839.}
	\and Amnon Ta-Shma\thanks{Tel-Aviv University, ISRAEL. email: {\tt amnon@tau.ac.il}.  Supported by ISF grant no. 952/18.}}
\maketitle

\begin{abstract}
We strengthen the notion of \emph{double samplers}, first introduced by Dinur and Kaufman~[{\em Proc. $58$th FOCS}, 2017], which are samplers with additional combinatorial properties, and whose existence we prove using high dimensional expanders.

The ABNNR code construction ~[{\em IEEE Trans. Inform. Theory}, 38(2):509--516] achieves large distance by starting with a base code $C$ with moderate distance, and then amplifying the distance using a sampler. We show that if the sampler is part of a larger {double sampler} then the construction has an {efficient} list-decoding algorithm. Our algorithm works even if the ABNNR construction is not applied to a base code $C$ but to any string. In this case the resulting code is  {\em approximate}-list-decodable, i.e. the output list contains an approximation to the original input.

Our list-decoding algorithm works as follows: it uses a local voting scheme from which it constructs a unique games constraint graph. The constraint graph is an expander, so we can solve unique games efficiently. These solutions are the output of the list-decoder. This is a novel use of a unique games algorithm as a subroutine in a decoding procedure, as opposed to the more common situation in which unique games are used for demonstrating hardness results.

Double samplers and high dimensional expanders are akin to pseudorandom objects in their utility, but they greatly exceed random objects in their combinatorial properties. We believe that these objects hold significant potential for coding theoretic constructions and view this work as demonstrating the power of double samplers in this context.
\end{abstract}

\section{Introduction}
We develop the notion of a \emph{double sampler}, which is an enhanced sampler. An $(\alpha,\beta)$ sampler is a bipartite graph $G=(U,V,E)$ such that for every function $f:V\to[0,1]$ with expectation $\mu=\E_{v\in V} [f(v)]$, one has $|\mu_u-\mu|\leq \alpha$ for all but a $\beta$ fraction of the vertices $u$, where $\mu_u=\E_{v\sim u} [f(v)]$ (see \cite{Zuckerman1997} for more details).

Towards defining double samplers we observe that in every given sampler $G=(U,V,E)$, every $u\in U$ can be identified with the set of its neighbors $\sett{v\in V}{v\sim u}$. In this way $U$ is a collection of subsets of $V$. In the other direction, given a ground set $V$ and a collection of subsets $\set{ S\subset V}$, the graph $G$ pops out as the bipartite inclusion graph with an edge from $v\in V$ to $S$ iff $v \in S$.

A double sampler (see \prettyref{fig:double_samp} for an illustration) consists of a triple $(V_2,V_1,V_0)$, where $V_0$ is the ground set, $V_1$ is a collection of $m_1$-subsets of $V_0$ and $V_2$ is a collection of $m_2$-subsets of $V_0$, where $m_2>m_1$. We say that $(V_2,V_1,V_0)$ is an $(\alpha,\beta),(\alpha_0,\beta_0)$ {\em double sampler} if
\begin{itemize}
	\item The inclusion graphs on $(V_2,V_1)$ and $(V_2,V_0)$ are $(\alpha,\beta)$ samplers, the inclusion graph on $(V_1,V_0)$ is an $(\alpha+\alpha_0,\beta+\beta_0)$ sampler. An inclusion graph is a graph where we connect two subsets by an edge if one contains the other; here a single vertex is also considered to be a singleton subset.
	
	\item For every $T\in V_2$, let $V_1(T)= \{ S\in V_1\;:\; S\subset T\}$ be the sets in $V_1$ that are contained in $T$. Let $G_{|T}$ be the bipartite inclusion graph connecting elements in $T$ (viewed as elements in the ground set $V_0$) to subsets in $V_1(T)$. We require that for every $T\in V_2$, the graph $G_{|T}$ is an $(\alpha_0,\beta_0)$ sampler. We call this property the {\em locality property} of the double sampler.
\end{itemize}

Our definition of double samplers is stronger than the previous definition due to Dinur and Kaufman~\cite{DinurK2017}, that was missing the locality property\footnote{The main result of Dinur and Kaufman~\cite{DinurK2017} was proven directly from high dimensional expanders, and not from double samplers, so this locality property was used implicitly. It is possible that the result of Dinur and Kaufman~\cite{DinurK2017} can be proven directly from our revised definition of double samplers.}. Whereas the definition of Dinur and Kaufman~\cite{DinurK2017} can be obtained e.g. by concatenating two samplers, our definition herein is much stronger and carries properties not known to be obtained by any random construction.
It is quite remarkable that high dimensional expanders~\cite{LubotzkySV2005-exphdx,KaufmanO2018} give rise to an infinite family of double samplers for which $|V_1|,|V_2|= O(|V_0|)$:

\begin{theorem}[Informal, see formal version in \prettyref{thm:doublesampler}]\label{thm:ds exists informal}
	For every $\alpha,\beta$, $\alpha_0$, $\beta_0>0$ there are integers $m_1,m_2 = \poly(\frac{1}{\alpha\beta\alpha_0\beta_0}), D = \exp(\poly(\frac{1}{\alpha\beta\alpha_0\beta_0}))$, such that there is an explicit polynomial time construction of an $(\alpha,\beta),(\alpha_0,\beta_0)$ double sampler on $n$ vertices, for infinitely many $n\in\mathbb{N}$, such that the subsets in $V_i$ are of size $m_i$ and the bipartite inclusion graphs on $V_0,V_1$ and on $V_0,V_2$ have degree at most $D$.
\end{theorem}
\paragraph{On random double samplers.}
To appreciate the remarkableness of double samplers, think of concrete parameters such as $m_1=2,m_2=3$. A random construction amounts to placing $n$ vertices in $V_0$, a linear (in $n$) number of edges in $V_1$ and a linear number of triples in $V_2$. Consider for example the $G(n,p)$-like model, where triangles are chosen independently. In this case two triangles will almost never share an edge. In either case the inclusion graph on $V_1,V_2$ is highly disconnected, and certainly not a sampler\footnote{Observe that for the chosen parameters of $m_1=2$ and $m_2=3$, there are obvious limits on the $(\alpha,\delta)$ parameters of the sampler, since each triple is connected to at most $3$ edges.}.

We elaborate more on the construction of double samplers towards the end of the introduction.
\begin{figure}
	\centering
	\begin{tikzpicture}
	\def\length{10}
	\def\points{0,1,2,3}
	\def\layers{0/7,1/8,2/10}
	\foreach \j/\v in \layers{
		\foreach \i in \points {
			\draw [fill=black] (\i + \length*0.5-\v*0.5 ,\j*2) circle (0.1cm);}
		\draw [fill=black] (0.5*\v + 0.5*\length,\j*2) circle (0.1cm);
		\draw[dotted] (\length*0.5-\v*0.5  + 3 + 0.2,\j*2) -- (0.5*\v + 0.5*\length - 0.2,\j*2);
		\node at (\length+0.5,\j*2){$V_{\j}$};}
	\node[label=$T$](T) at (2,4) {};
	\foreach \i in {0,1,2} { \draw (2,4) -- (\i + \length*0.5-8*0.5,2); }
	\draw [decorate,decoration={brace,amplitude=10pt},xshift=0cm,yshift=0.2cm]
	(\length*0.5-8*0.5,2) -- (2 + \length*0.5-8*0.5,2) node [fill=white,midway,xshift=-0cm,yshift=0.6cm, align=center]
	{$V_1(T)$};
	\node[label=$S$](S) at (0.7,1.9){};
	\node[label=$x$](S) at (1.2,-0.6){};
	\foreach \i in {0,1,2} {
		\draw[blue] (\i + \length*0.5-7*0.5 ,0) -- (\length*0.5-8*0.5 ,2);	
	}
	\foreach \i in {0,2} {
		\draw[blue] (\i + \length*0.5-7*0.5 ,0) -- (1+\length*0.5-8*0.5 ,2);	
	}
	\foreach \i in {1,2} {
		\draw[blue] (\i + \length*0.5-7*0.5 ,0) -- (2+\length*0.5-8*0.5 ,2);	
	}
	\draw [blue,decorate,decoration={brace,amplitude=10pt,mirror},xshift=-0.2cm,yshift=0cm]
	(\length*0.5-8*0.5,2) -- (\length*0.5-8*0.5,0) node [blue,midway,xshift=-0.7cm,yshift=0cm, align=center]
	{$G_{|T}$};
	\draw [decorate,decoration={brace,amplitude=10pt,mirror},xshift=0cm,yshift=-0.2cm]
	(\length*0.5-7*0.5,0) -- (2 + \length*0.5-7*0.5,0) node [midway,xshift=-0cm,yshift=-0.6cm, align=center]
	{$T\subset V_0$};
	\end{tikzpicture}
	\caption{Double sampler. Each vertex $S\in V_1$ is a set containing $m_1$ elements from $V_0$, and each vertex $T\in V_2$ is a set containing $m_2$ elements from $V_0$. The edges in the graph denote inclusion, for example  $x\in S\subset T$.}\label{fig:double_samp}
\end{figure}

\paragraph{Samplers and distance amplification.}
Alon, Bruck, Naor, Naor and Roth~\cite{AlonBNNR1992} showed how to amplify the distance of any code, simply by pushing the symbols along edges of a sampler graph. Let us describe their encoding in a notation consistent with the above. We think of the graph as a sampler $G=(V_1,V_0=[n])$, where $V_1$ is a collection of $m$-sets of $[n]$.
Given an $n$-bit string $x\in\bits^n$, we place $x_i$ on the $i$-th vertex and then each subset $S\in V_1$ ``collects'' all of the symbols of its elements and gets a short string $x|_S :S\to\bits$. The resulting codeword is the sequence $E_G(x) := (x|_S)_{S\in V_1}$ which can be viewed as a string of length $|V_1|$ over the alphabet $\Sigma = \bits^m$. We refer to the mapping $x\mapsto E_G(x)$ as the ABNNR encoding.

If the string $x$ happens to come from an initial code $C\subset\bits^n$ with minimum distance $2\alpha$, then, altogether we get $E_G(C) := \sett{ E_G(x)}{x\in C}$. Assuming $G$ is an $(\alpha,\beta)$ sampler, the minimum distance of $E_G(C)$ is at least $1-\beta$. Of course the length of the words in $E_G(C)$ depends on the size of $\abs{V_1}$, so the shorter the better.

\paragraph{Approximate error correcting code.}
The ABNNR encoding itself has some interesting decoding properties. As an encoding, $E_G$ does not have a noticeable minimal distance, but still, it is an \emph{approximate}-list-decodable error correcting code. Namely, for every $z\in \Sigma^{\abs{V_1}}$
there is a short list $L_z\subset \{0,1\}^n$, such that $L_z$ contains an approximation to every string $x$ whose encoding $E_G(x)$ is close to $z$ (in Hamming distance).

The elegant encoding of ABNNR is very local and easy to compute in the forward direction (from $x$ to $E_G(x)$), and indeed it has been found useful in several coding theory constructions, e.g. \cite{GuruswamiI2005, KoppartyMRS2017}. In this work we study the inverse question, also known as decoding: given a noisy version of $E_G(x)$, find an approximation to $x$ (or $x$ itself, in case $x\in C$). Moreover, we wish to be able to recover from as many errors as possible.

\paragraph{Decoding and list-decoding} A decoding algorithm for $E_G$ gets as input a string $(z_S)_{S\in V_1}$, and needs to find a word $x$ such that $x|_S = z_S$ for as many $S\in V_1$ as possible.
A natural approach is the ``maximum likelihood decoding'' algorithm: assign each $i\in [n]$ the most likely symbol, by looking at the ``vote'' of each of the subsets $S\ni i$,
\[ x'_i := {\rm majority}_{S:S\ni i}[ z_S(i)].\]
The resulting $x'$ is an approximation for $x$, and in the case where $x\in C$ we can run the decoder of $C$ and retrieve $x$.
Assuming $G$ is a good sampler, this approach gives an approximate-decoding algorithm for $E_G$ that recovers $x$ from error rates almost up to $1/2$.

Going beyond the unique-decoding radius, we show using the Johnson bound that the ABNNR encoding on a sampler graph is (combinatorially) approximate-list-decodable, up to an error rate approaching one as the sampler parameters go to zero (see \prettyref{sec:approximate_ecc}).  However, the maximum likelihood decoder stops working in this regime: one cannot rule out the situation where for each vertex $i$, both $0$ and $1$ symbols occur with equal likelihood, and it is not known, in general,\footnote{We remark that when the ABNNR encoding is applied over a base code $C$ with additional special properties it is possible that more can be done (see e.g \cite{GuruswamiI2005}), but our focus is on a generic decoding mechanism.} how to recover $x$.

Thus, it is natural to ask for an algorithm that approximate-list-decodes the ABNNR encoding up to the Johnson bound radius.
Our main result is a list-decoding algorithm that goes beyond the unique-decoding barrier of $1/2$  and works for error rates approaching $1$. The algorithm works whenever the underlying graph $G = (V_1,[n])$ is part of a double sampler, namely where there is a collection $V_2$ of sets of size $m_2>m_1=m$ so that the triple $(V_2,V_1, [n])$ is a double sampler.
\begin{theorem}[Main - informal, see  \prettyref{thm:main} and \prettyref{cor:main-ds-exists}]\label{thm:main-informal}
	For every $\gamma,\eps>0$ there exist $\alpha,\beta,\alpha_0,\beta_0>0$, integers $m_1,D$ and an $(\alpha,\beta),(\alpha_0,\beta_0)$-double sampler $(X=(V_2,V_1,[n]))$ such that $V_1 \subseteq \binom{[n]}{m_1}$ and $|V_1| \le D\cdot n$ and such that the following holds. Let $G$ be the restriction of $X$ to layers $V_1$ and $[n]$, and let $E_G:\bits^n\to(\bits^{m_1})^{V_1}$ be the ABNNR encoding defined above. Then there is a polynomial time algorithm which receives an input $z\in (\bits^{m_1})^{V_1}$, and outputs a list $L_z$ of size $O(\frac{1}{\gamma^2}) $ which includes an $\epsilon$-approximation for every $x\in \bits^n$ such that $\dist(E_G(x),z)\leq 1-\gamma$.
\end{theorem}
We omitted here the conditions on the constants and the parameter requirements on $X$, which appear in \prettyref{thm:main}. We remark that the dependence of $m_1$ and $c$ on $\gamma$ and $\eps$ is quite far from optimal, and is discussed in \prettyref{sec:code}.

Combining our main theorem with a unique-decodable base code $C$, we get a code $E_G(C)$ that is list-decodable, whenever $G=([n],V_1)$ is the first two layers of a double sampler,
\begin{corollary}[Informal, see \prettyref{cor:list-dec-full}]\label{cor:list-dec}
	For every $\gamma,\epsilon>0$, let $(X=(V_2,V_1,[n]))$ be a double sampler as above. Suppose $C\subset \bits^n$ is an error correcting code with a polynomial time unique-decoding algorithm from an $\epsilon$-fraction of errors, then the following holds. Let $G$ be the restriction of $X$ to layers $V_1$ and $[n]$. Then the code $E_G(C) = \sett{E_G(x)}{x\in C}$ has a polynomial time list-decoding algorithm from a $(1-\gamma)$-fraction of errors, with list size $O(\frac{1}{\gamma})$.
\end{corollary}
Notice that the list-decoding algorithm in the corollary outputs a shorter list than the approximate-list-decoding algorithm in our main theorem (length $1/\gamma$ versus $1/\gamma^2$). This is because in the case of error correcting codes we can prune the list more efficiently, as explained in \prettyref{sec:approximate_ecc}.

At this point the reader may be wondering how the double sampler property helps facilitate list-decoding.
Roughly speaking, a double sampler is a collection of (small) subsets that have both large overlaps as well as strong expansion properties. The expansion properties are key for distance amplification, and the large overlaps, again with good sampling properties, are key for the list-decoding algorithm.

\subsection{Related work}
There are several known list-decodable codes with efficient decoding algorithms, with varying parameters.
This includes codes which use algebraic structure, such as Reed Solomon codes \cite{Sudan1997,GuruswamiS1999}, folded Reed Solomon codes \cite{ParvareshV2005, GuruswamiR2008}, multiplicity codes \cite{Kopparty2015}, algebraic-geometric codes \cite{GuruswamiS1999,GuruswamiX2014} and constructions using a more combinatorial approach such as \cite{GuruswamiI2001,GuruswamiI2005, GuruswamiI2003}. Most of these constructions get parameters better than our constructions, and some of them get very close to optimal rate and alphabet size. Some of these constructions have other advantages, such as having linear time encoding and decoding algorithms.

Our construction starts with a basic code with constant distance and amplifies it to a code that can be list-decoded from a distance approaching $1$, and does it in a black box way. Among all the codes listed above only the construction by Guruswami and Indyk~\cite{GuruswamiI2003} does something similar. However, their reduction is recursive and has no underlying double sampler.

It is interesting to compare our construction to the work of Trevisan~\cite{Trevisan2003} who showed that derandomized direct product theorems can be used to transform a code from unique-decodable into list-decodable. The work of Impagliazzo, Jaiswal, Kabanets and Wigderson \cite{ImpagliazzoJKW2010} also uses derandomized direct product to create list-decodable codes. The encoding we use in this work can also be viewed as a derandomized
direct product encoding. The rate of the resulting code depends on the quality of the derandomization. Whereas previous derandomizations had a sub constant rate, double samplers give a much stronger derandomization that results in constant rate.

The ABNNR construction starts with a binary code and constructs a code over a larger alphabet by having each vertex $S \in V_1$ collect the bits from all indices $i \in S$. It is natural to consider the \emph{direct sum} variant, where each $S \in V_1$ XORs these bits together into a single bit. It would not be hard to adapt our algorithm to that setting, though we haven't explicitly done so. A subsequent work \cite{AlevJQST2020} went quite a step further and showed an algorithm for list-decoding the direct sum code for a strictly broader family of samplers. Their algorithm uses the Sum-of-Squares (SOS) semi-definite programming hierarchy to list-decode the direct sum for all samplers that satisfy a ``splittability'' condition which they introduce. These include not only samplers that come from double samplers, but also samplers that come from random walks on expanders, which are not covered by our work.

Even more recently, \cite{JeronimoST2021} presented an algorithm for list-decoding the direct sum code in nearly linear time. Furthermore, their work gives an efficient decoding algorithm for the error correcting code of \cite{Tashma2017}. The code in \cite{Tashma2017} is the first explicit construction of binary codes with distance close to half and nearly optimal rate. Prior to \cite{JeronimoST2021}, it was not known whether the binary code of \cite{Tashma2017} can be efficiently decoded.
\subsection{The list-decoding algorithm}
On input $(z_S)_{S\in V_1}$, our algorithm starts out with a voting step, similar to the maximum likelihood decoder.  Here we vote not on the value of each bit $i\in[n]$ but rather on the value of $x$ restricted to an entire  set $T\in V_2$. Since the graph $X_{|T}$ between $V_1(T)$ and $T$ is a sampler (this is the locality property), we can come up with a short list of popular candidates for  $x|_T$. This is done by looking at $z_S$ for all subsets $S\in V_1$, $S\subset T$. We define
\[\forall T\in V_2,\quad list(T) := \{ \sigma\in\bits^T\;:\; \Pr_{S\subset T, S\in V_1}[z_S = \sigma|_S]>\eps/2 \;\}.   \]
Note that since $T$ has constant size, we are able to search exhaustively over all $\sigma\in \bits^T$ in constant time.

Given a list for each $T$, we now need to stitch these lists together, and here we again use the fact that $(V_2,V_1)$ is a good sampler. Whenever $T_1\cap T_2$ is significantly large, we will match $\sigma_1\in list(T_1)$ with $\sigma_2\in list(T_2)$ iff $\sigma_1|_{T_1\cap T_2}=\sigma_2|_{T_1\cap T_2}$. Moreover, the double sampler property allows us to come up with an {\em expander} graph whose vertex set is $V_2$, and whose edges connect $T_1$ to $T_2$ when they have significant overlap. This guarantees that for almost all edges $(T_1,T_2)$ there is a matching between the list of $T_1$ and the list of $T_2$.

At this point what we are looking at is a {\bf unique games instance}, where the said expander is the constraint graph, and the said matchings are the unique constraints.\footnote{For definitions, please see the Preliminary section.} We now make two important observations. First, a word with noticeable correlation with the received word, corresponds to a solution for the unique games instance with very high value (i.e., satisfying a large fraction of the constraints). Second, we have an efficient algorithm for finding a high-value solution, because the underlying unique games constraint graph is an expander! This is originally due to Arora, Khot, Kolla, Steurer, Tulsiani Vishnoi \cite{AroraKKSTV2008} but we actually use the variant of Makarychev and Makarychev~\cite{MakarychevM2010}, because it has better parameters. A more naive greedy belief propagation algorithm would fail miserably because it takes about $\log n$ steps to reach a typical point in an expander graph, and this accumulates an intolerable $\eps \cdot \log n \gg 1$ amount of error.

If we want an approximate error correcting code we are done. Otherwise, it remains to run the unique-decoding algorithm of $C$ on each of the solutions of the unique games instance, to remove any small errors, and this completes the list-decoding.

The above high level description gives the rough idea for our algorithm, but the implementation brings up some subtle difficulties, which we explain below.

Every set $T$ induces a constant size local view $E_G|_T$ on the code $E_G$, which has no reason to be an error correcting code, and in particular has no distance. Thus, there could be several valid candidates $\sigma\in\bits^T$ that are very close in Hamming distance. Suppose $\sigma,\sigma'\in list(T_1)$ differ only in a single bit, then for most $T_2\cap T_1$, we don't know which element in $list(T_2)$ should be matched to $\sigma$ and which to $\sigma'$. Saying it differently, what we really need to do is to approximately list-decode the local view.  Equivalently, for each $T$ we prune $list(T)$ and enforce minimal distance $r$ between each two list items, while holding   a ``covering'' property - that if $\sigma$ was in the initial list $List(T)$, then there exists some $\sigma'$ in the final list that is $r$-close to $\sigma$.

However, for reasons that become clear in the proof we need the following stronger property: We require that the pruned list covers all the elements in $List(T)$ with radius $r$, while elements in the pruned list are at least $R \gg r$ away from each other (think of $R$ as being $2r$). We show that there is a small set of possible radii $r$, such that for every $T$ at least one radius from the set is good, in a sense that pruning $list(T)$ with $r$ results in a pruned list such that its elements are $R=5r$ far from each other. Thus, the pruning algorithm chooses $r$ dynamically over $T$, see \prettyref{sec:well-sep}).

Given $T_1,T_2$ with $list(T_1),list(T_2)$ and the same radius $r$, we match $\sigma_1\in list(T_1)$ to $\sigma_2\in list(T_2)$ if they are close (with respect to $r$) on $T_1\cap T_2$. If however $T_1,T_2$ have different radii, we don't know how to match these lists correctly. Therefore, our unique games instance is created on a subgraph containing only those vertices $T$ that share the same radius $r$. We show that there exists such a subgraph which is itself an expander.

\subsection{Double samplers and high dimensional expanders}
Let us briefly explain how double samplers are constructed from high dimensional expanders (proving \prettyref{thm:ds exists informal}). A high dimensional expander is a $d$-dimensional simplicial complex $X$, which is just a hypergraph with hyperedges of size $\le d+1$ and a closure property: for every hyperedge in the hypergraph, all of its subsets are also hyperedges in the hypergraph. The hyperedges with $i+1$ elements are denoted $X(i)$, and the complex is said to be an expander if certain spectral conditions are obeyed, see \prettyref{sec:doublesampler}.

Dinur and Kaufman~\cite{DinurK2017} prove that a two-sided spectral high dimensional expander gives rise to a multi-partite graph with interesting spectral expansion properties. Kaufman and Oppenheim \cite{KaufmanO2020} proved a stronger bound which allows using one-sided spectral expander.
The multi-graph has vertices $X(d)\cup  X(d-1)\cup \ldots \cup X(0)$, and we place edges for inclusion. Namely, $S\in X(m_1)$ is connected by an edge to $T\in X(m_2)$ if $S\subset T$. It is shown that the graph induced by focusing on layers $i$ and $j$ has $\lambda(G(X(i),X(j))) \le \frac {i+1} {j+1} +o(1)$. We show, in \prettyref{sec:doublesampler}, that by narrowing our focus to three layers in this graph (namely, $X(m_2-1),X(m_1-1)$ and $X(0)$) we get a double sampler. This is proven by observing that the spectral properties are strong enough to yield a sampler (an expander mixing lemma argument suffices since we are only seeking relatively weak sampling properties).

{\paragraph{Better double samplers?}
	Double samplers with super-linear (polynomial and even exponential) size have appeared implicitly (or somewhat similarly as ``intersection codes'') in the works of impagliazzo, Jaiswal, Kabanets and Wigderson~\cite{ImpagliazzoKW2012,ImpagliazzoJKW2010}. Two concrete constructions were studied,
	\begin{itemize} \item The first where $V_i = \binom{V}{m_i}$, so $|V_i| \approx n^{m_i}$, for $n=\abs{V_0}$.
		\item The second where $V$ is identified with a vector space over some finite field and then $V_i$ consists of all $d_i$-dimensional subspaces of $V$. Here $|V_i| \approx n^{d_i}$.
	\end{itemize}
	These constructions could fit our encoding scheme but the polynomial size of the sampler means that the code rate would approach zero. In addition, our algorithm is only efficient when the sets in $V_2,V_1$ are very small (of constant or at most logarithmic size), so constructions with larger sets, such as restricting multivariate polynomials to lines also don't fit our algorithm.
	
	The current work is the first to construct double samplers with linear size. This raises the question of finding the best possible parameters for these objects.  In particular, for given sampler parameters $\alpha$ and $\delta$, how small can $|V_1|/|V_0|$ be?
	
	Our current construction is based on Ramanujan complexes of Lubotzky, Samuels and Vishne~\cite{LubotzkySV2005-exphdx} that are optimal with respect to the spectrum of certain Laplacian operators, and not necessarily with respect to obtaining best possible double samplers. It is an interesting challenge to meet and possibly improve upon these parameters through other constructions.
	
	Unlike other pseudorandom objects, there is no known random construction of a double sampler. In particular, we cannot use it as a yardstick for the quality of our parameters. It remains to explore what possible parametric limitations there are for these objects.

	\bigskip
	We believe that double samplers capture a powerful feature of high dimensional expanders whose potential merit more study. Previously, Dinur and Kaufman~\cite{DinurK2017} showed that high dimensional expanders give rise to a very efficient derandomization of the direct product code that is nevertheless still testable. Part of the contribution of the current work is a demonstration of the utility of these objects in a new context, namely of list-decoding.

	\subsection{Derandomized direct product and approximate-list-decoding}
	Our list-decoding algorithm can also be viewed in the context of decoding derandomized direct products.
	The \emph{direct product} encoding takes a string $g\in \bits^N$ and encodes it into  $Enc(g)=(g|_S)_{S\in\S}$ where
	$\S = \binom{[N]}{k}$ contains all possible $k$-subsets of $[N]$. An encoding with $|\S|\ll \binom{N}{k}$, as in this paper, is called a \emph{derandomized} direct product encoding.
	
	Direct products and derandomized direct products are important in several contexts, primarily for {\bf hardness amplification}. This type of amplification goes back to Yao's XOR lemma \cite{Yao1982,Levin1987} (which concerns direct sum, not direct product). In hardness amplification using direct product one begins with a string $g\in\bits^N$ that is viewed as a truth table of a function $g:\bits^n\to \bits$ (here $N = 2^n$), and analyzes the hardness of the new function defined by $Enc(g)$. A typical hardness amplification argument proceeds by showing that if no algorithm (in a certain complexity class) computes $g$ on more than $1-\eps_0$ of its inputs, then no algorithm computes $Enc(g)$ on more than $\eps$ of its inputs. Namely, $Enc(g)$ is much harder than $g$.

	Such a statement is proven, as first described in \cite{Trevisan2005,Impagliazzo2003}, through a (list-) decoding argument: given a hypothetical algorithm that computes $Enc(g)$ successfully on at least an $\eps$-fraction of the inputs, the approximate-list-decoder computes $g$ on $(1-\eps_0)$ of its inputs.
	
	Our list-decoding result falls short of being useful for hardness amplification, because it is not local, and hardness amplification requires an additional feature called local list-decoding which we discuss in the next subsection.

	\subsection{Future directions}
	Both this work and the work of Alex, Jeronimo, Quintana, Shashank and Tulsiani \cite{AlevJQST2020} have a global decoding algorithm, which reads the entire codeword before decoding. An interesting direction is to achieve local list-decoding. In local-list-decoding the algorithm receives an index $i$ and has query-access to a noisy version of $E_G(x)$. The algorithm queries the codeword in a few locations and should output $x_i$ with high probability (or a list including $x_i$).
	
	There is a significant technical hurdle that one faces, related to the diameter of the bipartite graph corresponding to $(V_1,V_0)$. In the local-list-decoding constructions analyzed in \cite{ImpagliazzoKW2012,ImpagliazzoJKW2010} (both derandomized and non-derandomized) the diameter is $O(1)$, and this is crucially used in the list-decoding algorithm. The reason is that decoding occurs through querying vertices whose distance from a given $v$ is bounded. In the above small-diameter situations nearly all of the vertices in the graph are in a constant distance from $v$, and they can't all be corrupted by an adversary.
	
	When we move to a linear-size derandomized direct product encoding, as we do in this work, we have a sparse graph and a super-constant diameter. Clearly balls of bounded radii remain very small in this case, and can easily be corrupted. This is what makes the approximate-list-decoding algorithm performed in our work much more challenging (even in the non-local setting), and the algorithm more complicated than the analogous task performed by \cite{ImpagliazzoKW2012,ImpagliazzoJKW2010}.
	
In a preliminary version of this manuscript, we asked if there can be an efficient decoding algorithm for the error correcting codes constructed in \cite{Tashma2017} by the last author. These codes are binary error correcting codes with distance close to half, that achieve nearly optimal rate, and whose encoding algorithm is similar to the one presented here. In a recent work, Jeronimo, Srivastava, and Tulsiani \cite{JeronimoST2021} resolved this question by presenting a nearly linear decoding algorithm for variants of the error correcting codes in \cite{Tashma2017}.

\section{Preliminaries and Notations}
\label{sec:prelim}
For $\sigma,\sigma'\in \Sigma^n$ and $S\subseteq n$ we define
\begin{eqnarray*}
	\dist_S(\sigma,\sigma') &=& \Pr_{i \in S}[\sigma_i \neq \sigma'_i].
\end{eqnarray*}
For $L \subseteq \Sigma^n$ we define
\begin{eqnarray*}
	\dist_S(\sigma,L) &=& \min_{\sigma' \in L}\dist_S(\sigma,\sigma').
\end{eqnarray*}
When $S=[n]$ we omit the subscript $S$.

An error correcting code $C$ is a function $C:\Sigma_0^n\rightarrow\Sigma_1^m$. It has distance $r$ if for every $x\neq y\in \Sigma_0^n$, $dist(C(x),C(y))\ge r$. Sometimes we identify the error correcting code $C$ with its image, i.e. $C\subset \Sigma_1^m$. Furthermore, $C$ is $(\eta,\ell)$ list-decodable if for every $y\in\Sigma_1^m$,
\[ \abs{\sett{x\in C}{\dist(x,y)\leq\eta}}\leq \ell. \]

An algorithm is said to uniquely decode an error correcting code $C\subset\Sigma_1^m$ from an $\epsilon$-fraction of errors, if for every $x\in\Sigma_1^m$ such that $\dist(x,C)\leq\epsilon$, it outputs $y\in C$ such that $\dist(x,y)\leq\epsilon$. An algorithm is said to $(\eta,\ell)$ list-decode an error correcting code $C:\Sigma_0^n\rightarrow\Sigma_1^m$, if for every $z\in \Sigma_1^m$, it outputs a list $L=\sett{x\in \Sigma_0^n}{\dist(C(x),z)\leq \eta}$, $\abs{L}\leq\ell$.\\

We use the following version of a Chernoff tail bound \cite{ChungL2006}. Let $Y_1,\dots Y_n$ be independent random variables, with $\Pr[Y_i=1]=p_i,\Pr[Y_i=0]=1-p_i$, and let $Y=\sum_{i=1}^n w_i Y_i$ for $w_i>0$. We define $\nu = \sum_{i=1}^nw_i^2p_i$, then
\[ \Pr[Y\leq (1-\delta)\E[Y]]\leq e^{-\frac{\delta^2(\E[y])^2}{2\nu}}, \]
\[ \Pr[Y\geq \E[Y]+\eta]\leq e^{-\frac{\eta^2}{2(\nu+w\frac{\eta}{3})}}, \]
for $w=\max_{i\in[n]}\{w_i\}$.
\subsection{Weighted graphs}
\label{sec:vertex weight}

We say $(G,W)$ is a weighted graph if $G=(V,E)$ is an undirected graph, and $W:E \to \reals_{\geq 0}$ is a weight function that associates with each edge $e$ a non-negative weight $w_e$. We have the convention that non-edges have zero weight. Given the edge weights $w_e$ the weight of a vertex is defined as
\begin{eqnarray*}
	w_v & := & \sum_{e:v \in e}w_e.
\end{eqnarray*}

The edge weights induce a distribution on edges (and vertices) which we denote by $\mu_G$ where $\mu_G(e)=\frac{w_{e}}{\sum_{e' \in E}w_{e'}}$ and similarly for vertices. We overload $\mu_G$ to denote both the distribution on edges and on vertices.  When the graph $G$ is clear, we omit the subscript $G$ from the distribution $\mu_G$.
We denote by $v\sim V$ a random vertex in the graph according to the distribution $\mu_G$, and by $e\sim E$ a random edge.
For a vertex $v\in V$, we denote by $u\sim v$ a random neighbor of $v$, according to the edge weights.

\begin{definition}
	\label{def:irregular}
	We say a distribution $\Pi$ over $V$ has \emph{irregularity} at most $D$, for an integer $D \in \mathbb{N}$, if  there is some $p \in (0,1]$ such that for every $v\in V$, $\Pi(v) \in\set{p,2p,\ldots,Dp}$.
	The irregularity of $v \in V$ is defined to be $\frac{\Pi(v)}{p}$.
\end{definition}
Clearly the uniform distribution has irregularity  $1$.

\subsection{Expanders}\label{sec:expanders}

Let $(G=(V,E),W)$ be a weighted graph.
\begin{itemize}
	\item The \emph{edge expansion} of $G$ is
	\begin{eqnarray*}
		h_G & := & \min_{V'\subset V,\mu(V')\leq\frac{1}{2}} \frac{\mu(E(V',V\setminus V'))}{\mu(V')},
	\end{eqnarray*}
	where $E(A,B)$ denotes the set of edges between $A$ and $B$.
	
	\item
	The normalized adjacency matrix $A$ of $G$ is defined by
	\begin{eqnarray*}
		A_{u,v} & = & \frac{w_{u,v}}{\sqrt{w_u w_v}}.
	\end{eqnarray*}
	
	\item
	$\lambda_2(G)$ denotes the second largest eigenvalue  (in absolute value) of $A$.
\end{itemize}	
In the case of bipartite graph $G=(U,V,E)$, let $A^\bip$ be the normalized adjacency matrix of $G$ defined by
$A^\bip_{u,v}=\frac{w_{u,v}}{\sqrt{w_u w_v}}$ for every $u\in U, v\in V$ ($A^\bip$ is not symmetric). Let $\lambda^\bip_2(G)$ be the second largest singular value of $A^\bip$.

\subsection{Samplers}

\begin{definition}(Sampler)
	A weighted bipartite graph $(G=(V_2,V_1,E),W)$ is an $(\alpha,\beta)$ sampler if for every $f:V_1 \to [0,1]$,
	\begin{eqnarray*}	
		\Pr_{v_2 \sim V_2} \left[  \Abs{\E_{v_1 \sim v_2}[ f(v_1)]-\E_{v_1 \sim V_1}[f(v_1)]} \ge \alpha \right] & \le  & \beta.
	\end{eqnarray*}
	When the distribution is uniform, we say $G$ is an $(\alpha,\beta)$ sampler.
\end{definition}

It is possible to convert a weighted sampler into an unweighted one, if the weights satisfy certain conditions.
\begin{definition}[Flattening]\label{def:unweighted-ver}
	Let $(G=(V_2,V_1,E),W)$ be an $(\alpha,\beta)$ sampler, such that $W$ is uniform on $V_1$, has irregularity at most $D$ on $V_2$, and for every $v\in V_2$, all the edges touching $v$ have the same weight.
	Then the flattening of $G$ is the unweighted bipartite graph $G'=(V_2',V_1,E')$, with vertex set $V_2'$ containing every vertex $v \in V_2$ repeated $M(v)$ times, where $M(v) \in [D]$ is the irregularity of $v$ in $G$. The edge set $E'$ contains $(v',w)$ if $v'$ is a duplicate of $v\in V_2$ and $(v,w)\in E$.
	
\end{definition}
\begin{claim}\label{claim:weighted_to_un}
	Let $(G=(V_2,V_1,E),W)$ be an $(\alpha,\beta)$ sampler with weights that satisfy the conditions of \prettyref{def:unweighted-ver}, then $G'=(V_2',V_1,E')$ the flattening of $(G,W)$ is also an $(\alpha,\beta)$ sampler.
\end{claim}
\begin{proof}
	Fix an arbitrary function $f:V_1\rightarrow[0,1]$, let $B\subset V_2$ be
	\[B = \sett{v\in V_2}{\abs{\E_{u\sim v}[f(u)]-\E_{u\sim V_1}[f(u)]}\geq\alpha}.\]
	The graph $(G,W)$ is an $(\alpha,\beta)$ sampler, so $\Pr_{v\sim V_2}[v\in B]\leq \beta$.
	
	Let $B'\subset V_2'$ be
	\[B' = \sett{v\in V_2'}{\abs{\E_{u\sim v}[f(u)]-\E_{u\sim V_1}[f(u)]}\geq\alpha}.\]
	The distribution $W$ is uniform over $V_1$, so $\E_{u\sim V_1}[f(u)]$ is the same in $G$ and in $G'$. For every $v'\in V_2'$ a copy of $v\in V_2$, the distribution over its neighbors is uniform, so $\E_{u\sim v'}[f(u)]=\E_{u\sim v}[f(u)]$. Therefore, the set $B'$ contains exactly all $v'\in V_2'$ which are copies of $v\in B$, and
	\[ \Pr_{v'\sim V_2'}[v'\in B'] =  \Pr_{v\sim V_2}[v\in B] \leq \beta.\]
\end{proof}

\subsection{Every sampler contains an induced expander}
\label{sec:G2}
The two-step walk of a weighted bipartite graph $(G' = (V_2,V_1,E'), W')$, is the weighted graph
$$(G=(V_2,E),W).$$
For every $T_1,T_2\in V_2$ and $S\in V_1$ such that $S$ is a common neighbor of $T_1,T_2$, we connect $T_1,T_2$ by an edge and label the edge $(T_1,T_2)_S$.

The weight of the edge $(T_1,T_2)_S$ corresponds to the probability of picking $S\sim V_1$ and then independent neighbors $T_1,T_2$ of $S$ in $G'$ according to the edge weights $W'$. More explicitly, $W((T_1,T_2)_S) = \mu_{G'}(S) \Pr_{T_1,T_2 \sim S} [ T_1=u_1 \wedge T_2=u_2 ]$.

Notice that the graph $G$ contains parallel edges and self loops.

\begin{claim}\label{claim:two_step}
	Choosing a random edge $e\in G$ according to the weights $W$ and a random vertex $T\in e$, has the same distribution as picking $T\sim V_2$ according to the vertex weights $W'$.
\end{claim}
\begin{proof}
	Fix a vertex $T\in V_2$, the probability to choose $T$ in $G$ equals:
	\[ \Pr_{(T_1,T_2)\sim E}[T_1 = T]  = \sum_{S\in V_1}\Pr_{S_1\sim V_1}[S_1=S]\Pr_{T_1\sim S}[T_1=T]. \]
	We omit the second endpoint of the edge because it's independent.
	
	By Bayes' rule,
	\begin{align*}
	\Pr_{(T_1,T_2)\sim E}[T_1 = T]  =& \sum_{S\in V_1}\Pr_{S_1\sim V_1}[S_1=S]\Pr_{(S_1,T_1)\sim E'}[T_1=T | S_1= S]\\
	=& \sum_{S\in V_1}\Pr_{S_1\sim V_1}[S_1=S]\frac{\Pr_{T_1\sim V_2}[T_1=T ]\Pr_{(S_1,T_1)\sim E'}[S_1=S | T_1=T]}{\Pr_{S_1\sim V_1}[S_1=S]}\\
	=&\Pr_{T_1\sim V_2}[T_1=T ]\sum_{S\in V_1}\Pr_{(S_1,T_1)\sim E'}[S_1=S | T_1=T]\\
	=&\Pr_{T_1\sim V_2}[T_1=T ].
	\end{align*}
\end{proof}

In \prettyref{sec:expanding subset} we show that the two-step graph of a sampler always contains an expander.

\begin{theorem}[Every sampler contains an induced expander]	\label{thm:induced-expander}
	Let
	$$(G' = (V_2,V_1,E'), W')$$
	be an $(\alpha,\beta)$ sampler for some $\alpha,\beta \in (0,1)$.
	Let $(G=(V_2,E),W)$ be the two-step walk graph of $G'$.
	Fix  $\eta  \in (0,1)$ such that
	$\eta > 10\sqrt{\max\set{\alpha,\beta}}$.
	
	Then for every $A \subseteq V_2$ with $\mu_{G}(A) = \eta$ there exists a set $B \subseteq A$ such that:
	\begin{itemize}
		\item $\mu_G(B) \ge \frac{\eta}{4}$.
		
		\item $\lambda_2(G_B) \le \frac{99}{100}$, where $G_B$ is the induced graph of $G$ on $B$ with the same edge weights.
	\end{itemize}
	Furthermore, given $A$, such a set $B$ can be found in time polynomial in $|V|$.
\end{theorem}

In order to prove the above theorem, we use a variant of the expander mixing lemma on the two-step walk graph $G$. Even though $G$ is not an expander, we show that large sets in it expand.
\begin{claim} \label{claim:sampler mixing lemma}
	Let $(G' = (V_2,V_1,E'), W')$ be an $(\alpha,\beta)$ sampler and let $(G=(V_2,E)$, $W)$ be the two-step walk on $G'$.
	
	Then for every $A,B\subset V_2$ satisfying $\mu_G(A) > \alpha,\mu_G(B) > \beta$,
	\begin{align*}
	&\Pr_{(T_1,T_2)\sim E}[T_1\in A,T_2\in B]\geq \Paren{\mu_G(A) - \alpha}\Paren{\mu_G(B) - \beta}\\
	&\Pr_{(T_1,T_2)\sim E}[T_1\in A,T_2\in B]\leq \mu_G(A)\Paren{\mu_G(B) + \beta} + \alpha.
	\end{align*}
\end{claim}
\begin{proof}
	Fix sets $A,B\subset V_2$ that satisfy the conditions of the claim. We define the function	$f:V_1\rightarrow[0,1]$ by,
	\[ \forall v\in V_1, \quad f(v) = \Pr_{T\sim v}[T\in A],\]
	In words, $f(v)$ is the probability of a random neighbor (according to the edge weights) of $v$ in $G'$ to be in $A$. From \prettyref{claim:two_step}, $\E_{v\sim V_1}[f(v)] = \Pr_{T \sim V_2}[T\in A] = \mu_G(A)$.
	
	For every $T\in V_2$, let $p_T$ be the probability of a random neighbour of $T$ in $G$ to be in $A$. Using $f$,
	$p_T = \Pr_{(T_1,T_2)\sim E}[T_2\in A | T_1 = T] = \E_{v\sim T}[f(v)]$. The last equality is because $G$ is a two-step random walk of $G'$.
	
	From the sampling properties of $G'$,
	\[ \Pr_{T\sim V_2}\Brac{\Abs{\E_{v\sim T}[f(v)] - \E_{v\sim V_1}[f(v)] } > \alpha}\leq \beta, \]
	Substituting $\E_{v\sim T}[f(v)]$ by $p_T$, we get that $\Pr_{T\sim V_2}\Brac{\Abs{p_T - \mu_G(A) } > \alpha}\leq \beta$.

	Let $R\subset V_2$ be the set
	\[ R = \sett{T\in V_2}{\Abs{p_T - \mu_G(A)} > \alpha }. \]
	From above, $\mu_G(R)\leq \beta$.
	For every $T_1\notin R$, $p_T\in \Brac{\mu_G(A)-\alpha,\mu_G(A)+\alpha}$.
	
	Therefore,
	\begin{align*}
	\Pr_{(T_1,T_2)\sim E}[T_1\in A,T_2\in B]
	\geq& \Pr_{T_2 \sim V_2} [T_2\in B\setminus  R]\Pr_{(T_1,T_2)\sim E} [T_1\in A | T_2\in B\setminus R]\\
	\geq&  \Paren{\mu_G(B) - \beta}\Paren{\mu_G(A) - \alpha}.
	\end{align*}
	\begin{align*}
	\Pr_{(T_1,T_2)\sim E}[T_1\in A,T_2\in B]
	\leq& \Pr_{T_2\sim V_2}[T_2\in R]\\ &+ \Pr_{T_2\sim V_2} [T_2\in B\setminus  R]\Pr_{(T_1,T_2)\sim E} [T_1\in A | T_2\in B\setminus R]\\
	\leq&  \beta + \mu_G(B)\Paren{\mu_G(A) + \alpha}.
	\end{align*}
\end{proof}

\subsection{Double samplers}\label{sec:d_samplers}
A double sampler is a two-layered graph with some additional properties. It is convenient to view it as an inclusion graph, defined as follows
\begin{definition}(Inclusion graph)
	An \emph{inclusion} graph $X=(V_2,V_1,V_0)$ with cardinalities $m_2 > m_1 >0$  is a tri-partite graph with vertices $V=V_2 \cup V_1 \cup V_0$, where $V_i \subseteq \binom{V_0}{m_i}$ for  $i=1,2$ and $(a,b) \in E$ iff $a \subseteq b$.
\end{definition}
Given a distribution $W_2$ on $V_2$, define a distribution $\Pi$ on $V_2\times V_1 \times V_0$ by sampling $v_2\in V_2$ according to $W_2$, then choosing a random neighbor $v_{1}\in V_{1}$ of $v_2$ and then a random neighbor $v_0\in V_0$ of $v_1$.

We denote by $\Pi_i$ the $i$-th coordinate of $\Pi$, and by $\Pi_{i,j}$ the distribution $\Pi$ restricted to layers $i,j$. Notice that $\Pi_2=W$.
For every $j\neq i\in \{0,1,2\}$ and every $v_i\in V_i$, we denote by $(\Pi_j | \Pi_i=v_i)$ the distribution of $\Pi_j$ conditioned on the $i$-th layer vertex equals $v_i$, explicitly $\Pr_{u_j\sim (\Pi_j | \Pi_i=v_i)}[v_j= T] = \Pr_{(u_2,u_1,u_0)\sim \Pi }[u_j = T | u_i = v_i]$.

By the way we constructed $\Pi$, it satisfies the following property: for every $T\in V_2$, the distribution $(\Pi_1|\Pi_2=T)$ is uniform, and for every $S\in V_1$, the distribution $(\Pi_0|\Pi_1=S)$ is also uniform. Note that other conditional distributions may not be uniform, for example, $(\Pi_2|\Pi_1=S)$ might not be uniform.

Given an inclusion graph $X=(V_2,V_1,V_0)$ and a distribution $\Pi$, we denote by $(X(V_{i+1},V_{i})$, $\Pi_{i+1,i})$  the weighted bipartite graph between $V_i,V_{i+1}$. For every $T \in V_2$, we define the weighted bipartite graph $$(X_{|T}=(U,T,E)\;,\;W_T)$$ where:
\begin{itemize}
	\item
	$U=\set{S \in V_1 ~|~ S \subseteq T}$, and recall $T$ is a set of elements from $V_0$,
	\item
	$(S,i) \in E$ for $S \in U$ and $i \in T$ iff $i \in S$, and,
	\item$W_T=(\Pi_{1,0}|\Pi_2=T)$.
\end{itemize}
The graph $X_{|T}$ is the subgraph of $X$ that contains all the subsets of $T$,
see \prettyref{fig:double_samp} for a graphic representation.

With this notation we define double samplers.

\begin{definition}[Double Sampler]
	\label{def:doublesampler}
	Let $X=(V_2,V_1,V_0)$ be an inclusion graph with distribution $\Pi$ on $V_2\times V_1\times V_0$ defined from $W$ on $V_2$ as above. We say $(X,W)$ is a $((\alpha,\beta),(\alpha_0,\beta_0))$ double sampler, if
	\begin{enumerate}
		\item\label{itm:samp1}
		$(X(V_2,V_1), \Pi_{2,1})$ is an  $(\alpha,\beta)$ sampler.	
		
		\item\label{itm:samp3}
		For every $T\in V_2$, $(X_{|T},W_T)$ is an $(\alpha_0,\beta_0)$ sampler.
	\end{enumerate}
	We say that $X$ has irregularity at most $D$ if
	\begin{itemize}
		\item $\Pi_2,\Pi_0$ are uniform, and for each $T\in V_2$, the graph $X_{|T}$ is bi-regular and $W_T$ is the uniform distribution.
		\item The distribution on $\Pi_1$ has irregularity at most $D$.
	\end{itemize}
	We say that $X$ is perfectly regular if it has irregularity $D=1$.
\end{definition}

Note that by the definition of inclusion graph over a ground set, the bipartite graphs $X(V_2,V_0)$, $X(V_1,V_0)$ and $X_{|T}$ are always left-regular. A vertex $S\in V_1$ has exactly $m_1$ neighbors in $V_0$, which are the $m_1$ elements the set $S$ contains, and the same for $T\in V_2$.

The definition of a double sampler implies that the graph $X$ has more sampling properties, proven in the claim below.
\begin{claim}
	\label{claim:sampler}
	Let $X$ be a double sampler as in \prettyref{def:doublesampler}, then:
	\begin{enumerate}
		\item
		\label{item:20sampler}
		$(X(V_2,V_0),\Pi_{2,0})$ is an $(\alpha,\beta)$ sampler, and,
		
		\item
		\label{item:10sampler}
		$(X(V_1,V_0),\Pi_{1,0})$ is an  $(\alpha+\alpha_0,\beta+\beta_0)$ sampler.
	\end{enumerate}
	\item
\end{claim}

\begin{proof}\mbox{ }
	
	\begin{itemize}
		\item
		For \prettyref{item:20sampler},	fix an arbitrary function $f:V_0\rightarrow[0,1]$, let $f':V_1\rightarrow[0,1]$ be $f'(S)=\E_{x\sim (\Pi_0|\Pi_1=S)}[f(x)]$. From the definition of $\Pi$, the expectation of $f$ over $\Pi_0$ and $f'$ over $\Pi_1$ is the same $\E_{S\sim\Pi_1}[f'(S)]=\E_{x\sim\Pi_0}[f(x)]$.
		
		The graph $(V_2,V_1)$ is a $(\alpha,\beta)$ sampler,
		\begin{align}\label{eq:21sampler}
		\Pr_{T \sim \Pi_2} \left[  \Abs{\E_{S \sim (\Pi_1|\Pi_2=T)}[ f'(S)]-\E_{S \sim \Pi_1}[f'(S)]} \ge \alpha \right] \le \beta.
		\end{align}
		
		The joint distribution $\Pi=\Pi_0,\Pi_1,\Pi_2$ is defined so that for every $T\in V_2$, if we first choose $S\sim\Pi_1|\Pi_2=T$ and then $x\sim\Pi_0|\Pi_1=S$, it is the same as when we choose $x\sim\Pi_0|\Pi_2=T$. Therefore, for every $T\in V_2$, $\E_{x \sim (\Pi_0|\Pi_2=T)}[f(x)]=\E_{S \sim (\Pi_1|\Pi_2=T)}[f'(S)]$.
		
		Substituting the expectations in \prettyref{eq:21sampler} we get
		\begin{align} \label{eq:20sampler}
		\Pr_{T \sim \Pi_2} \left[  \Abs{\E_{x \sim (\Pi_0|\Pi_2=T)}[ f(x)]-\E_{x \sim \Pi_0}[f(x)]} \ge \alpha \right] \le \beta,
		\end{align}
		which finished the proof.
		
		\item
		For  \prettyref{item:10sampler}, fix an arbitrary  $f:V_0\rightarrow[0,1]$, and let $f':V_1\rightarrow[0,1]$ be as in the first item, $f'(S)=\E_{x\sim \Pi_0|\Pi_1=S}[f(x)]$. Again, $\E_{S\sim\Pi_1}[f'(S)]=\E_{x\sim\Pi_0}[f(x)]$ and equations \prettyref{eq:21sampler},\prettyref{eq:20sampler} holds here as well.
		
		For every $T\in V_2$, let $\mu_T$ be the expected value of $f$ on $T$, i.e. $\mu_T = \E_{x \sim (\Pi_0|\Pi_2=T)}[f(x)]$. Let $A$ be the set of vertices $T\in V_2$ in which $\mu_T$ is close to the  expectation of $f$ on $V_0$,
		\[A = \sett{T\in V_2}{\Abs{\mu_T - \E_{x \sim \Pi_0}[f(x)]} \leq\alpha }.\]
		From \prettyref{eq:20sampler}, $\Pr_{T\sim\Pi_2}[T\in A]\geq 1-\beta$.
		
		For every $T$, $\mu_T = \E_{S\sim(\Pi_1 | \Pi_2=T)}[f'(S)]$, see explanation in the first item. The bipartite graph $X_{|T}$ is an $(\alpha_0,\beta_0)$ sampler, so
		\[ \Pr_{S \sim (\Pi_1|\Pi_2=T)}\Brac{\Abs{f'(S) - \mu_T} \geq\alpha_0}\leq\beta_0. \]
		By the triangle inequality,
		\[ \Abs{f'(S)-\E_{x \sim \Pi_0}[f(x)]} \leq \Abs{f'(S)-\mu_T}+\Abs{\mu_T-\E_{x \sim \Pi_0}[f(x)]} . \]
		For $T\in A$, $\Abs{\mu_T-\E_{x \sim \Pi_0}[f(x)]}\leq \alpha$. Therefore for $T\in A$
		\begin{align}\label{eq:exp_on_A}
		\Pr_{S \sim (\Pi_1|\Pi_2=T)}\Brac{\Abs{f'(S)-\E_{x \sim \Pi_0}[f(x)]} \ge \alpha+\alpha_0}\leq \beta_0.
		\end{align}

		We finish the proof by using the fact that the probability of $T\sim\Pi_2$ to be in $A$ is at least $1-\beta$. Let $I$ be the event in which $\Abs{f'(S)-\E_{v \sim \Pi_0}[f(v)]} \ge \alpha+\alpha_0$. Using this notation,
		\begin{align*}
		\Pr_{S \sim \Pi_1}\Brac{I} \leq & \Pr_{T\sim\Pi_2}[T\notin A] +\Pr_{T\sim\Pi_2}[I | T\in A]\\
		\leq&\beta + \beta_0 \tag{by \prettyref{eq:20sampler} and \prettyref{eq:exp_on_A}}.
		\end{align*}
		
	\end{itemize}	
\end{proof}

In \prettyref{sec:doublesampler} we prove that bounded-degree double samplers can be constructed from high dimensional expanders. While the distributions in the explicit construction are not uniform, they are not too far off.  $\Pi_1$ has bounded irregularity and $\Pi_2,\Pi_0$ are uniform. The irregularity stems from the irregularity of the high dimensional expanders. Very recent work \cite{FriedgutI2020} has managed to construct {\em regular} high dimensional expanders, which leads to perfectly regular double samplers. These were not available when an earlier version of this manuscript came out.

\begin{theorem}\label{thm:doublesampler}
	For every $\alpha,\beta,\alpha_0,\beta_0>0$ there exist constants  $m_1,m_2,D\in \mathbb{N}$, $m_1,m_2 = \poly(\frac{1}{\alpha\beta\alpha_0\beta_0}), D = \exp(\poly(\frac{1}{\alpha\beta\alpha_0\beta_0}))$, such that there is a family of explicitly constructible double samplers $(X_n,W_n)$ for infinitely many $n\in \mathbb{N}$ satisfying
	\begin{itemize}
		\item  $X_n=(V_2,V_1,V_0)$ is an inclusion graph, where $|V_0|=n$, $V_i \subseteq \binom{V_0}{m_i}$  for $i=1,2$.
		\item $X_n$ is an $((\alpha,\beta),(\alpha_0,\beta_0))$  double sampler.
		\item $|V_1|,|V_2| \le D \cdot n$.
		\item The distributions $\Pi_0,\Pi_2$ are uniform and the distribution $\Pi_1$ has irregularity at most $D$.
		\item For each $m\in \mathbb{N}$ there is some $n\in [m,Dm]$ such that the complex $X_n$ on $n$ vertices is constructible in time $poly(n)$.
	\end{itemize}
\end{theorem}

\subsection{UG constraint graphs}

\begin{definition}(UG constraint graph)
	Let $(G=(V,E),W)$ be a weighted graph. We say $((G,W),\set{\pi_e}_{e \in E})$ is a UG constraint graph with $\ell$ labels if
	$\pi_e:[\ell]\rightarrow[\ell]$ is a permutation.
	An assignment for $G$ is a function $a:V\rightarrow[\ell]$. We say the assignment $a$ satisfies an edge $e=(u,v)$ if $\pi_e(a(u)) = a(v)$.  The \emph{value} of an assignment is the fraction of satisfied edges. We say $((G,W),\set{\pi_e}_{e \in E})$ is $p$-satisfiable if there exists an assignment with value at least $p$.
\end{definition}
The graph $G$ is undirected, and since the constraints are unique it doesn't matter if $\pi_e$ is represented from $u$ to $v$ or vice versa.
Arora, Khot, Kolla, Steurer, Tulsiani and Vishnoi \cite{AroraKKSTV2008} showed how to solve unique games instances on expander graphs in polynomial time. This result was improved by Makarychev and Makarychev \cite{MakarychevM2010}, who proved,
\begin{theorem}[{\cite[Theorem~10]{MakarychevM2010}}]
	\label{thm:pre MM}
	Let $G$ be a regular graph with edge expansion $h_G$. There exist positive absolute constants $c$ and $C$ and a polynomial time approximation algorithm that
	given a $1-\beta$ satisfiable instance of UG on $G$ with $\frac{\beta}{1-\lambda_2(G)}\leq c$, the algorithm finds a solution of value $1-C\frac{\beta}{h_G}$.
\end{theorem}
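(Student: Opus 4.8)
This theorem is quoted from \cite{MakarychevM2010}; the goal is to recall the shape of the argument, which refines that of \cite{AroraKKSTV2008}. The route I would take passes through the standard semidefinite relaxation of unique games together with a rounding step whose loss is governed by the expansion of the constraint graph; a purely combinatorial ``propagate-and-vote'' variant is also possible, but the SDP route is cleanest to describe.

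First, set up the SDP. For each vertex $v$ and label $i\in[\ell]$ introduce a vector $v_i$, with $\langle v_i, v_j\rangle = 0$ for $i\ne j$, $\sum_{i}\|v_i\|^2 = 1$, together with the usual $\ell_2^2$-triangle inequalities, and maximize $\E_{e=(u,v)}\sum_i \langle u_i, v_{\pi_e(i)}\rangle$. Any integral assignment of value $1-\delta$ is feasible for this program with the same value, so the optimum is $\ge 1-\delta$; equivalently the total fractional violation $\E_{e=(u,v)}\big[\tfrac12\sum_i \|u_i - v_{\pi_e(i)}\|^2\big]$ is at most $\delta$. Each vertex $v$ now carries a probability distribution $p_v(i) = \|v_i\|^2$ over labels, and neighbouring distributions are matched through the edge permutations on all but a (weighted) $\delta$-fraction of edges.

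Second, round the SDP solution by an iterative ``orthogonal separator'' procedure: in each round draw a random Gaussian and threshold to obtain a random set of (vertex, label) pairs that contains at most one label per not-yet-assigned vertex and that, for an edge with small SDP violation, tends either to capture both endpoints with labels matched through $\pi_e$ or to leave both for later; commit the captured labels, remove those vertices, and recurse, assigning arbitrarily whatever is still unassigned at the end. The spectral-gap hypothesis $\delta/\lambda_G \le c$ is what makes this rounding make progress at all: relative to the mixing of the random walk on $G$, a $\delta$-fraction of mismatched edges is too sparse to prevent a constant fraction of the surviving vertices from being captured consistently in each round, so $O(\log|V|)$ rounds suffice. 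Charging every violated output edge either to an SDP-mismatched edge (a $\delta$-fraction overall) or to the boundary between rounds --- which is where the edge expansion $h_G$ enters, controlling how the surviving sets shrink and how many edges straddle consecutive rounds --- gives a violated fraction of $O(\delta/h_G)$, i.e. value $1-C\delta/h_G$. Solving the SDP, running the $O(\log|V|)$ rounds, and the accompanying eigenvalue/Cheeger estimates are all polynomial in $|V|$.

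I expect the main obstacle to be exactly this rounding analysis: converting a $\delta$-fraction of violated \emph{SDP} constraints into an $O(\delta/h_G)$-fraction of violated \emph{integral} constraints, with bookkeeping that keeps the role of $\lambda_G$ (needed merely to run the algorithm, through $\delta/\lambda_G \le c$) separate from the role of $h_G$ (which alone appears in the final loss), and doing so without incurring the $\poly(\log\ell)$ loss that generic orthogonal-separator roundings pay on arbitrary constraint graphs. For these quantitative steps I would follow \cite{MakarychevM2010} directly.
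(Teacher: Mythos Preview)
Your SDP setup is correct and you are right that $\lambda_G$ and $h_G$ play distinct roles. But the rounding you describe --- an iterative orthogonal-separator procedure over $O(\log|V|)$ rounds, charging violated edges to inter-round boundaries --- is not the one used in \cite{MakarychevM2010} (and reproduced, for weighted graphs, in \prettyref{app:weighted ug} of this paper). The actual rounding is single-shot, and the structural step you are missing is a \emph{global earthmover-distance} bound.

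The argument runs as follows. From the SDP vectors one builds (via a tensor-power normalization, \cite[Lemma~2.2]{AroraKKSTV2008}) a single vector $\boldsymbol V_u$ per vertex so that $\Norm{\boldsymbol V_u - \boldsymbol V_v}^2$ approximates the earthmover distance $\Delta(u,v)$ between the label-vector tuples at $u$ and $v$. The SDP gives $\E_{(u,v)\in E}[\Delta(u,v)] \le O(\delta)$; applying the Poincar\'e inequality of $G$ to the vectors $\boldsymbol V_u$ --- this is precisely where $\lambda_G$ enters --- upgrades this edge-average to a vertex-average $\E_{u,v\in V}[\Delta(u,v)] \le O(\delta/\lambda_G)$, which is a small constant by the hypothesis $\delta/\lambda_G \le c$. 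So the label distributions are close \emph{globally}, not just across edges. The rounding then picks one random seed vertex $u$, a label $i$ with probability $\Norm{u_i}^2$, a random threshold $t$ and radius $r$, and assigns to each $v$ the unique label $p$ (if any) with $\Norm{v_p}^2 \ge t$ and $\Norm{\tilde u_i - \tilde v_p}^2 \le r$. The global earthmover bound forces the labeled set $X$ to satisfy $\E[\mu(X)] \ge \tfrac14$, while a direct computation gives $\E[\mu(E(X,V\setminus X))] \le O(\delta)$. Now $h_G$ enters exactly once, through a Cheeger step: a set with boundary $O(\delta)$ and measure $\ge \tfrac18$ must have complement of measure $O(\delta/h_G)$. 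Inside $X$ the violated fraction is $O(\delta)$, giving value $1 - C\delta/h_G$.

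Your iterative picture is closer to the generic orthogonal-separator roundings for unique games on arbitrary graphs; it is a different mechanism, and it is not clear it would isolate $h_G$ as the sole loss parameter the way the seed-based argument does.
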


We need a version of this theorem with two modifications:

\begin{itemize}
	\item
	The theorem, as stated, refers to unweighted regular graphs. We need the same results for non-regular weighted graphs.
	
	\item
	The theorem finds one assignment with high value. However, we need to get an approximation to \emph{all} assignments with high value.
\end{itemize}

In \prettyref{app:weighted ug} we go over the algorithm in \cite{AroraKKSTV2008,MakarychevM2010}  and show that the same result holds for weighted non-regular graphs. In \prettyref{sec:list ug} we show how to output a list that contains an approximation to all assignments with high value. To do that we run the algorithm
several times, each time peeling off the solution that is found. We prove:

\begin{theorem}	\label{thm:pre UG}
	Let $(G=(V,E),W)$ be a weighted undirected graph with
	$\lambda_2(G)\leq\frac{99}{100}$.
	Let $\set{\pi_e}_{e \in E}$ be unique constraints over the edges of $G$, with $\ell$ labels.
	
	Then there is an absolute constant $c>1$ and a polynomial time algorithm that outputs a list of assignments $L=\set{a^{(1)},\dots, a^{(t)}}$ with $a^{(i)}:V \to [\ell]$. The list satisfies that for every assignment $a:V\rightarrow[\ell]$ that satisfies $1-\eta$ of the constraints for $\eta <c^{-\ell-1}$,
	there exists $a^{(i)}\in L$ that satisfies $\Pr_{v\sim W}[a(v) = a^{(i)}(v)]\geq 1-\eta c^{\ell}$.
\end{theorem}

\subsection{The Johnson bound}
\label{sec:johnson}
Johnson's bound shows that any code with a good distance is also a good list-decodable code. The bound has several versions, and the quantitative bounds are different over large and small alphabets. The bounds are usually stated for codes, but, essentially address the following geometric problem: How many vectors can be all close to the same vector, and far away from each other. As such, the bounds are also useful for approximate error correcting codes, see \prettyref{sec:approximate_ecc}. In this section we repeat the Johnson bound (with this geometric interpretation in mind) and give the proofs for completeness. We follow the exposition in \cite{Sudan2001,GuruswamiRS}.

\subsubsection{Large alphabet}

\begin{theorem}
	\label{thm:Johnson_large_a}
	Let $\Sigma$ be a finite set of cardinality $q$, let $z \in \Sigma^n$ and $L \subseteq \Sigma^n$. If:
	\begin{itemize}
		\item
		For every $x \in L$, $\dist(x,z) \le 1-\gamma$, and,
		
		\item
		For every distinct $x,x' \in L$, $\dist(x,x') \ge 1-\beta$,
	\end{itemize}
	where $\gamma^2 > \beta$, then, $|L| \le \frac{\gamma - \beta}{\gamma^2-\beta}$.
\end{theorem}

\begin{proof}
	Let $z \in \Sigma^n$ and let $L = x_1,\ldots, x_\ell$ be a list satisfying the theorem requirements, $\abs{L}=\ell$. We prove the Johnson bound by double counting. The expression we bound is the probability that two different list elements agree on a random coordinate,
	\[ p = \Pr_{x\neq x'\in L, i\in [n]}[x_i = x'_i]. \]
	
	By the list requirement, $x,x'\in L$ satisfies $\dist(x,x') \ge 1-\beta$, therefore $p \le \beta$.
	Also,
	
	\begin{eqnarray*}
		p ~=~ \Pr_{x\neq x'\in L, i\in [n]}[x_i = x'_i] & = & \E_{i\in[n]}[\Pr_{x\neq x'\in L}[x_i= x'_i]]\\
		&  \ge &  \E_{i\in[n]}[\Pr_{x\neq x'\in L}[x_i=x'_i=z_i]]
	\end{eqnarray*}
	
	For every coordinate $i\in[n]$, let $\ell_i$ be, $\ell_i = \abs{\sett{x\in L}{x_i=z_i}}.$
	Using this notation,
	\[ \Pr_{x\neq x'\in L}[x_i=x'_i=z_i] = \frac{\ell_i}{\ell}\cdot\frac{\ell_i-1}{\ell-1}. \]
	
	The list requirements imply that $\E_{i\in[n]}[\ell_i]\geq \gamma\ell$, which gives the bound on $p$:
	\begin{align*}
	p &\geq \E_{i\in[n]}\Brac{\frac{\ell_i(\ell_i-1)}{\ell(\ell-1)}} \geq \frac{1}{\ell(\ell-1)}\Paren{\E_{i\in[n]}[\ell_i^2]-\E_{i\in[n]}[\ell_i]}\geq \frac{\gamma^2 \ell -\gamma}{\ell-1}.
	\end{align*}
	Hence $\beta(\ell-1) \ge \gamma^2\ell-\gamma$ and  $\ell \leq \frac{\gamma-\beta}{\gamma^2 - \beta}$.
\end{proof}

\subsubsection{Binary alphabet}

For Binary alphabet and distance $\gamma = \frac{1-\alpha}2$ where $\alpha\approx 0$, the condition $\alpha^2>\beta$ is restrictive, as it forces
$\beta$ to be about $\frac{1}{4}$ or below, which corresponds to the distance $1-\beta$ being $\frac{3}{4}$ which is vacuous for Binary codes. We now state a version useful for Binary alphabets.

\begin{lemma}(See \cite[Lem 7.1]{Sudan2001})
	\label{lem:Johnson-core}
	Fix $\Delta, \Gamma>0$. Suppose $x_1,\ldots,x_m \in \R^n$ are such that
	\begin{itemize}
		\item
		$\norm{x_i}^2 \le \Gamma$, and,
		
		\item
		$\la x_i,x_j \ra \le -\Delta$ for every $i \neq j$.
	\end{itemize}
	Then $m \le 1+\frac{\Gamma}{\Delta}$.
\end{lemma}

\begin{proof}
	Set $z=\sum x_i$. Then $\la z,z \ra \ge 0$ and
	\begin{eqnarray*}
		\la z ,z \ra & =& \sum_i \la x_i,x_i \ra +\sum_{i \neq j} \la x_i,x_j \ra \le m \Gamma +m(m-1)(-\Delta).
	\end{eqnarray*}
	Together this implies $\Gamma-(m-1)\Delta \ge 0$ and the bound.
\end{proof}

\begin{theorem}(Following \cite{Sudan2001})
	\label{thm:Johnson_small_a}
	Let $\Sigma=\set{0,1}$, $z \in \Sigma^n$, $L \subseteq \Sigma^n$. If:
	\begin{itemize}
		\item
		For every $w \in L$, $\dist(w,z) \le \frac{1-\alpha}{2}$, and,
		
		\item
		For every different $x,x' \in L$, $\dist(x,x') \ge \frac{1-\beta}{2}$,
	\end{itemize}
	where $\alpha^2 > \beta$. Then, $|L| \le 1+\frac{4}{\alpha^2-\beta}$.
\end{theorem}

\begin{proof}
	We first define an embedding $E: \set{0,1}^n \to \R^{n}$ by letting $E(b_1,\ldots,b_n)=\frac{1}{\sqrt{n}}((-1)^{b_1}$ $,\ldots,(-1)^{b_n})$. Notice that if $\sigma_1,\sigma_2 \in \set{0,1}^n$ then  $\la E(\sigma_1) ,E(\sigma_2) \ra = 1-2\dist(\sigma_1,\sigma_2)$ and in particular $\norm{E(\sigma)}=1$.
	Suppose $L=\set{\sigma_1,\ldots,\sigma_m}$. Denote $x_i=E(\sigma_i)$, $y=E(z)$ and $z_i=x_i -\alpha y$.
	Then:
	
	\begin{eqnarray*}
		\la z_i,z_i \ra & = & \la x_i-\alpha y,x_i - \alpha y \ra = 1-2\alpha \la x_i,y \ra +\alpha^2 \le 1+2\alpha+\alpha^2 =(1+\alpha)^2 \le 4, \mbox{ and, } \\
		\la z_i,z_j \ra & = & \la x_i-\alpha y,x_j - \alpha y \ra = \la x_i,x_j \ra-\alpha[ \la x_i,y \ra+ \la x_j,y \ra] +\alpha^2 \\
		&  = & 1-2\dist(x_i,x_j)-\alpha(1-2\dist(x_i,y)+1-2\dist(x_j,y)]+\alpha^2 \\
		& \le & \beta-2\alpha^2+\alpha^2=\beta-\alpha^2.
	\end{eqnarray*}
	It follows by  \prettyref{lem:Johnson-core} that $m \le 1+\frac{4}{\alpha^2-\beta}$.
\end{proof}

\section{Approximate Error Correcting Codes}\label{sec:approximate_ecc}
In this section we revisit the definition of approximate error correcting codes (ECCs), and the ABNNR encoding \cite{AlonBNNR1992} in both direct product and direct sum form. We also relate approximate ECC to distance amplification.

\begin{definition}(Approximate ECC)  \cite[Def 1.5]{ImpagliazzoJKW2010}
	$E: \Sigma_0^n \to \Sigma_1^{m}$ is an $(r,\eta,\ell)$ approximate-list-decodable error correcting code if for every $z \in \Sigma_1^{m}$ there is a list $L \subseteq \Sigma_0^n$ of cardinality at most $\ell$ such that for every codeword $E(x)$ with $\dist(E(x),z) \le \eta$ we have $\dist(x,L) \le r$.
\end{definition}

There are a few important differences between an approximate-list-decodable ECC and a list-decodable ECC. A code $C: \Sigma_0^n \to \Sigma_1^{m}$ is $(\eta,\ell)$ \emph{list-decodable}, if for every $z\in \Sigma_1^m$, $\abs{L}= \abs{\sett{x\in \Sigma_0^m}{\dist(C(x),z)\leq\eta}}\leq\ell$. In an approximate-list-decodable ECC, we relax the condition and only require the list $L$ to contain an \emph{approximation} for $x$ such that $\dist(E(x),z)\leq\eta$. This requirement is much weaker, and an approximate-list-decodable ECC might not have a noticeable minimal distance.

Another difference between the codes is the uniqueness of the list. In list-decoding, for every code $C$ and input $z$ there is a \emph{unique} list $L$ which satisfies the requirements, $L=\sett{x\in \Sigma_0^m}{\dist(C(x),z)\leq\eta}$. In addition, given an efficient encoding algorithm for $C$ it is easy to check if $x$ is in $L$, by checking if $\dist(C(x),z)\leq\eta$.
In approximate-list-decoding, for every encoding $E$ and input $z$ there can be many possible lists which satisfy the requirements, as each $x$ satisfying $\dist(E(x),z)\leq\eta$ has many strings that $r$-approximate it. Furthermore, given a string $x'\in \Sigma_0^n$, it is not clear how to check efficiently if $x'$ is an $r$-approximation to some $x$ such that $\dist(E(x),z)\leq\eta$, as going over all strings which are $r$-close to $x'$ takes possibly $\exp(n)$ time.

From the above paragraph we understand why the list size in \prettyref{thm:main}, which is an approximate-list-decoding algorithm, is larger than the list size in \prettyref{cor:list-dec}. In the corollary there is a list-decoding algorithm, where it is possible to check and prune the list, see a proof of the corollary in \prettyref{sec:list-dec-c}.
\subsection{The ABNNR direct-product construction}

We restate the ABNNR encoding \cite{AlonBNNR1992}. In the original paper, it is defined as an amplification step applied on an underlying code. Here we state it as an approximate error correcting code.

\begin{definition}(ABNNR encoding\cite{AlonBNNR1992})\label{def:ABNNR}
	Let $G=(V_1,V_0,E)$ be a $d$ left-regular bipartite graph, and let $\Sigma_0$ be a constant size alphabet. The ABBNR encoding
	$$E_{G,\Sigma_0}:\Sigma_0^{\abs{V_0}}\rightarrow\Sigma_1^{\abs{V_1}}$$
	for $\Sigma_1 = \Sigma_0^d$ is defined as follows.
	For every $v\in V_1$ let $S_v\subset V_0$ be the neighbours of $v$. For every $x\in \Sigma_0^{\abs{V_0}}$ the encoding $E_G(x)$ is defined by
	\[ \forall v\in V_1, \quad E_G(x)_v = x_{|S_v}. \]
\end{definition}
In order to simplify the notations, when the alphabet is clear we omit $\Sigma_0$ and use $E_G$.
To clarify, $d$ left-regular means that each $v\in V_1$ has exactly $d$ neighbors.

The key property of this construction is distance amplification.
\begin{claim}\label{claim:sampler_dis_enc}
	For every $x,x'\in \Sigma^{\abs{V_0}}$ such that $\dist(x,x') > \alpha$, \[ \dist(E_G(x),E_G(x'))\geq 1-\beta.\]
\end{claim}

\begin{proof}
	Let $f: V_0 \rightarrow \{0,1\}$ equal $1$ for all $u\in V_1$ such that $x_{|u} \neq x'_{|u}$, and $0$ else. As $\dist(x,x') > \alpha$ we have $\E_{u\in V_0}[f(u)] > \alpha$.
	By the sampler property of $G$, all except a $\beta$ fraction of the vertices in $V_1$ satisfy
	\begin{eqnarray*}
		\E_{u\sim v}[f(u)] & \ge &  \E_{u\in V_0}[f(u)]- \alpha ~>~0.
	\end{eqnarray*}
	Thus, for all these vertices $v$, $E_G(x)_v \ne E_G(x')_v$ which implies $\dist(E_G(x),E_G(x')) \ge 1-\beta$.
\end{proof}

Using this, we show that $E_G$ is an approximate-list-decodable code.
\begin{lemma}\label{lem:ABNNR}
	Let $G=(V_1,V_0,E)$ be a $d$ left-regular $(\alpha,\beta)$ sampler. Let  $\gamma > \sqrt{\beta}$.
	Then $E_G$ is $(\alpha,1-\gamma,
	\ell=\frac{\gamma-\beta}{\gamma^2-\beta})$ approximate-list-decodable ECC.
\end{lemma}

\begin{proof}
	Fix an arbitrary $z \in \Sigma_1^{\abs{V_1}}$, and define
	
	\begin{eqnarray*}
		L_0 &=& \set{x \in \Sigma_0^{\abs{V_0}} ~|~ \dist(E_G(x),z) \leq 1-\gamma}.
	\end{eqnarray*}
	
	The list $L_0$ contains all $x$ such that $E_G(x)$ has $\gamma$ agreement with $z$, but it is possibly very large. We reduce the size of $L_0$ by removing elements which are too close to each other.
	Let $L_\alpha \subset L_0$ be a maximal subset of $L_0$, such that different $x,x'\in L_\alpha$ are at distance more than $\alpha$.
	By definition, for each $x \in L_0$ there exists $x' \in L_\alpha$ such that $\dist(x,x') \le \alpha$, or else it is possible to add $x$ into $L_\alpha$ which contradicts $L_\alpha$ being maximal. It remains to bound $\ell=\abs{L_\alpha}$.
	We bound the cardinality of $L_\alpha$ by noting that:
	
	\begin{itemize}
		\item
		For different $x,x'\in L_\alpha$, $\dist(E_G(x),E_G(x'))\geq 1-\beta$ (by \prettyref{claim:sampler_dis_enc}), and,
		
		\item 
		For every $x \in L_\alpha$, $\dist(E_G(x),z) \le 1-\gamma$ (by definition of $L_0$).
	\end{itemize}
	Hence we get $\ell$ vectors that are all close to one vector $z$, but are pairwise far apart. By a variant of the Johnson bound, \prettyref{thm:Johnson_large_a}, we have $\ell \le \frac{\gamma-\beta}{\gamma^2-\beta}$. We remark that for this proof it is enough to take a disperser instead of a sampler.
\end{proof}

\subsection{A direct sum construction over the binary alphabet}
In this section we take a small detour to analyse the direct sum encoding, which is very related to the direct product discussed in the rest of this paper. We show that it is list-decodable.

We focus on binary alphabet, i.e. $\sigma \in \set{0,1}^n$, we define $	\bias(\sigma)=\Abs{\sum_{i=1}^n (-1)^{\sigma_i}}$. In this case  $\dist(\sigma,\sigma')=\delta$ implies $\bias(\sigma+\sigma')=\abs{1-2\delta}$.

\begin{definition}(A direct sum encoding)
	Let $G=(V_1,V_0,E)$ be a $d$ left-regular bipartite graph. The direct sum encoding
	$$E^\oplus:\set{0,1}^{\abs{V_0}}\rightarrow\set{0,1}^{\abs{V_1}}$$
	is defined as follows:
	For every $v\in V_1$ let $S_v\subset V_0$ be the neighbours of $v$. For every $x \in \set{0,1}^{\abs{V_0}}$, the encoding $E^\oplus(x)$ is defined by
	$$\forall v\in V_1, \quad E^\oplus(x)_v = \sum_{u \in S_v} x_u \bmod 2.$$
\end{definition}

\begin{definition}(Parity sampler)
	Let $G=(V_1,V_0,E)$ be a bipartite graph. Given $f:V_0 \to \set{0,1}$ we define a function $G(f):V_1 \to \set{0,1}$ by
	\begin{eqnarray*}
		\forall v_1\in V_1,\qquad G(f) (v_1) & = & \sum_{v \in S_{v_1}} f(v) \bmod 2.
	\end{eqnarray*}
	We say $G$ is an $(\alpha,\beta)$ parity sampler if for every function $f:V_0 \to \set{0,1}$ with $\bias(f) \le \alpha$ , it holds that $\bias(G(f)) \le \beta$.
\end{definition}

\begin{lemma}\label{lem:ABNNR-Binary}
	Let $G=(V_1,V_0,E)$ be a $d$ left-regular $(\alpha,\beta)$ parity sampler. Fix
	$\gamma > \sqrt{\beta}$.
	Then the direct sum encoding on $G$ is an $(\alpha,\frac{1-\gamma}{2},
	\ell=1+\frac{4}{\gamma^2-\beta})$ approximate-list-decodable ECC.
\end{lemma}

\begin{proof}
	Let
	$E^\oplus$ denote the direct sum encoding on $G$. Fix any $z \in \set{0,1}^{\abs{V_1}}$, and define
	
	\begin{eqnarray*}
		L_0 &=& \set{x \in \set{0,1}^{\abs{V_0}} ~|~ \dist(E^\oplus(x),z) \leq \frac{1-\gamma}{2}}.
	\end{eqnarray*}
	
	The list $L_0$ contains all $x$ such that $E^\oplus(x)$ has $\frac{1+\gamma}{2}$ agreement with $z$, but it might be too large a set. We reduce the size of $L_0$ by removing elements which are too close to each other.
	Let $L_\alpha \subset L_0$ be a maximal subset of $L_0$, such that different $x,x'\in L_\alpha$ have $\bias(x +x') \le \alpha$.
	By definition, for each $x \in L_0$ there exists $x' \in L_\alpha$ such that $\bias(x,x') > \alpha$, or else it is possible to add $x$ into $L_\alpha$ which contradicts $L_\alpha$ being maximal. It remains to bound $\ell=\abs{L_\alpha}$.
	We bound the cardinality of $L_\alpha$ by noting that:
	
	\begin{itemize}
		\item 
		For different $x,x'\in L_\alpha$, we have $\bias(x+x') \le \alpha$, hence by the parity sampler property $\bias(E^\oplus(x+x'))=\bias(E^\oplus(x)+E^\oplus(x')) \le \beta$. In particular
		$\dist(E^\oplus(x),E^\oplus(x')) \ge \frac{1-\beta}{2}$. Also,
		
		\item 
		By definition of $L_0$, for every $x \in L_\alpha$, $\dist(E^\oplus(x),z) \le \frac{1-\gamma}{2}$.
	\end{itemize}
	Hence if $\ell=|L_\alpha|$, we get $\ell$ vectors that are all close to one vector $z$, but are all far apart from each other. By the Johnson bound, \prettyref{thm:Johnson_small_a}, we have $\ell \le 1+\frac{4}{\gamma^2\beta}$.
\end{proof}

\subsection{Well-separated approximate ECC}
\label{sec:well-sep}
In some cases when list-decoding, we also want the elements in the output list $L$ to have at least some distance between them, we define such a list as $r$-separated.
\begin{definition}
	Let $L \subseteq \Sigma^n$ and $r \in [0,1]$. We say $L$ is $r$-separated if for every $\sigma_1,\sigma_2 \in L$, $\dist(\sigma_1,\sigma_2) \ge r$.
\end{definition}
Let $G=(V_1,V_0,E)$ be a $d$ left-regular $(\alpha,\beta)$-sampler, and let $E_G:\Sigma_0^{\abs{V_0}}\rightarrow\Sigma_1^{\abs{V_1}}$ be as in \prettyref{def:ABNNR}. Fix $\gamma > \sqrt{\beta}$, we show an (inefficient) algorithm which $(r,1-\gamma,\ell)$ approximate-list-decodes $E_G$ and outputs a $5r$-separated list, by dynamically adjusting $r$ for every input.

\begin{algorithm}[Well-Separated List Decoding Algorithm]
	\label{alg:well-sep-alg}
	
	The algorithm has a decoding parameter $1-\gamma$.
	The input is a word $z\in\Sigma_1^{\abs{V_1}}$, the output is a list $L\subset\Sigma_0^{\abs{V_0}}$ and a radius $r$.
	\begin{itemize}
		\item Set $i=0, \tau_0=\alpha, L_0=\sett{x\in \Sigma_0^{\abs{V_0}}}{\dist(E_G(x),z)\leq 1-\gamma}$.
		\item At stage $i$, if $L_i$ is $10\tau_i$-separated, output $L_i,r = 2\tau_i$.
		
		Otherwise, say $v_1,v_2 \in L$ are independent if $\dist(v_1,v_2) \ge 10\tau_i$. Set $L_{i+1}$ to be a maximal independent set in $L_i$. Set $\tau_{i+1}=10\tau_i$, $i=i+1$ and repeat the loop.
	\end{itemize}
\end{algorithm}
\begin{lemma}\label{lem:well-sep}
	For every $z\in\Sigma_1^{\abs{V_1}}$ the algorithm above  outputs a list $L$ of size at most $\ell = \lfloor \frac{\gamma-\beta}{\gamma^2-\beta} \rfloor$  and a radius $r=2\alpha 10^i$ for some $i\in \{0,\dots\ell\}$  such that
	\begin{itemize}
		\item For every $x\in \Sigma_0^{\abs{V_0}}$ such that $\dist(E_G(x),z)\leq 1-\gamma$, $\dist(x,L)\leq r$.
		\item $L$ is $5r$-separated.
	\end{itemize}
\end{lemma}

\begin{proof}
	We start by bounding the size of the lists $L_i$ produced by the algorithm.
	\begin{claim}
		For every $\tau>\alpha$, a list $L_i$ which is $\tau$-separated satisfies $\abs{L_i}\leq\ell$.
	\end{claim}
	\begin{proof}
		Every $x,x'\in L_i$ satisfy $\dist(x,x')\geq \tau$. Using the sampler properties of $G$, \prettyref{claim:sampler_dis_enc}, $\dist(E_G(x),E_G(x'))\geq1-\beta$.
		From the definition of $L_0$, every $x\in L_i\subset L_0$ also satisfies $\dist(E_G(x),z)\leq 1-\gamma$. By the Johnson bound, \prettyref{thm:Johnson_large_a}, $\abs{L_i}\leq \ell$.
	\end{proof}
	The algorithm outputs a list $L$ which is at least $10\alpha$-separated, so the claim above proves $\abs{L}\leq\ell$.

	We now bound the number of steps the algorithm performs. Denote by $t$ the index $i$ with which we quit (if we quit), we prove that $t \leq \ell$.
	By the algorithm definition, for every $i$, $L_{i+1}\subsetneq L_i$, as in the case where $L_{i+1}=L_i$ the algorithm stops. The list $L_1$ is $10\alpha$-separated, so from the claim above $\abs{L_1}\leq\ell$, together we get that $t\leq \ell$.
	
	Next we prove the covering property, let $x$ be such that $\dist(E_G(x),z)\leq 1-\gamma$, we show that $\dist(x,L)\leq r$. By the definition of $L_0$, $x\in L_0$. For every $i < t$ and every $x' \in L_i$, $\dist(x',L_{i+1}) \le 10\tau_i=\tau_{i+1}$, for otherwise $x'$ can be added to the independent set $L_{i+1}$ contradicting its maximality. Hence, for $x \in L_0$:
	\begin{eqnarray*}
		\dist(x,L_t)  & \le & \sum_{i=0}^t \tau_{i}=\alpha \sum_{i=0}^t 10^i ~=~  \frac{10^{t+1}-1}{10-1}\alpha ~\le~ \frac{10}{9}\tau_t < r.
	\end{eqnarray*}
	
	Where the last inequality holds since $r=2\tau_t$. The algorithm always outputs a list that is $10\tau_t = 5r$-separated, which finishes the proof.
\end{proof}
The algorithm runs in time $\exp(\abs{V_0}\log \abs{\Sigma_0})\poly(\ell)$, because it goes over all $x\in \Sigma_0^{\abs{V_0}}$ to create the initial list $L_0$, and the loop runs at most $\ell$ times.

\subsection{Approximate-list-decoding implies distance amplification}\label{sec:dist_amp}
In this section we show that composing an approximate-list-decodable code with a an error correcting code results in a list-decodable error correcting code.

\begin{claim}\label{claim:comb-list-dec}
	Let $E: \Sigma_0^n \to \Sigma_1^{m}$ be an $(r,\eta,\ell)$ approximate-list-decodable code, and let $C\subset \Sigma_0^n$ be an error correcting code with distance $2r$. Then the code:
	\[ E(C) = \sett{E(x)}{x\in C}, \]
	is $(\eta,\ell)$ list-decodable.
\end{claim}

\begin{proof}
	Let $z\in \Sigma_1^{m}$ and let $L=List_E(z)$ be the cardinality $\ell$ list guaranteed by the approximate-list-decoding property of $E$.
	Let $L'$ be the list
	\[ L' = \sett{x\in C}{\exists y\in L, \dist(x,y)\leq r}. \]
	We claim that $L'$ is small and contains all codewords close to $z$.

	Suppose $x\in C$ is such that $\dist(E(x),z)\leq \eta$. Then by the definition of approximate-list-decoding, $\exists y\in L$ such that $\dist(x,y)\leq r$, so by definition $x\in L'$.
	
	For every $y\in L$, there exist at most a single $x\in C$ which is $r$-close to $y$, because the distance of $C$ is $2r$. Therefore, $\abs{L'}\leq\ell$.	
\end{proof}

It is easy to see that if $E$ has an approximate-list-decoding algorithm and $C$ has a unique-decoding algorithm, then $E(C)$ has a list-decoding algorithm. To decode $E(C)$ we simply run the decoding algorithm of $E$ getting $L$, then run the decoding algorithm of $C$ on all elements in $L$ and output the result.

\section{The Code and its Approximate List-Decoding Algorithm}
\label{sec:code}
In this section we describe our code and an approximate-list-decoding algorithm for it. The code is the ABNNR encoding over the first two layers of a double sampler, defined as follows.

Let $(X=(V_2,V_1,V_0=[n]),\Pi)$ be a double sampler.  It is recommended to first focus on the case where the $X$ is perfectly regular, namely every pair of layers give rise to a bi-regular graph. In the slightly more general case, $X$ has irregularity at most $D$, i.e. $\Pi_0,\Pi_2$ are uniform and $\Pi_1$ has irregularity at most $D$. Furthermore, we assume that for every $T\in V_2$, the bipartite subgraph $X_{|T}$ containing all subsets of $T$ in $X$ is a bi-regular unweighted graph. \\

We define the encoding $E_X$ as follows. Consider the graph obtained by restricting $X$ to layers $V_0,V_1$, and let $G$ be its flattening on vertex sets $V_0$ and $V'_1$, see \prettyref{def:unweighted-ver} (in the perfectly regular case, there's no need for flattening). Next, we define $E_X=E_G$ to be the ABNNR encoding as given in \prettyref{def:ABNNR}.
Namely, as $G$ is a bipartite graph on vertex sets $[n]$ and $V'_1$, we have
$E_X:\Sigma_0^n\rightarrow\Sigma_1^{|V'_1|}$. The encoding of a string $z\in \Sigma_0^n$ is given by
\[ \forall S\in V_1' \quad (E_X(z))_S := z_{|S}. \]
The code is defined over alphabet $\Sigma_1 = \Sigma_0^{m_1}$, where $m_1$ is the size of subsets in $V'_1$. The blocklength of the code is $|V'_1|$, which is bounded by $D\abs{V_1}$, where $D$ is the irregularity of $X$.

Our main theorem below is an approximate-list-decoding algorithm for $E$. Recall that an algorithm $(\epsilon,1-\gamma,\frac{70}{\gamma^2})$ approximates-list-decodes $E$ if for any input $z\in\Sigma_1^{\abs{V_1'}}$ it outputs a list $L_{out},\abs{L_{out}}\leq\frac{70}{\gamma^2}$ such that  for every $x$ that satisfies $\dist(E(x),z)\leq 1-\gamma$, it holds that $\dist(x,L_{out})\leq\epsilon$.

\begin{theorem}[Main Theorem (formal version of \prettyref{thm:main-informal})]
	\label{thm:main}
	There exists $c>0$ such that the following holds. For every $\gamma,\eps>0$, let $D$ be some constant and let $\alpha,\beta,\alpha_0,\beta_0>0$ satisfy
	\begin{align*}
	\alpha_0 &\leq \frac{\epsilon}{4}10^{-\frac{\gamma}{8}}, &
	\beta_0 & \leq  \frac{\epsilon \gamma}{1000(c^{\frac{8}{\gamma}}+1)},\\
	\alpha &\leq\frac{\epsilon\sqrt{\gamma}}{1000}, & \beta& \leq\frac{\epsilon}{1000(c^{\frac{8}{\gamma}}+1)}.
	\end{align*}
	There exists $c'>0$ such that if $(X=(V_2,V_1,V_0),W)$ is an $((\alpha,\beta), (\alpha_0,\beta_0))$ double sampler, with irregularity at most $D$, the encoding $E=E_X$ has a randomized approximate-list-decoding algorithm that runs in time polynomial in $n$ with success probability $1-e^{-c'n}$ and parameters $(\epsilon,1-\gamma,\frac{70}{\gamma^2})$.
	
	Here $n=|V_0|$ and we assume that $\abs{V_2},\abs{V_1}= \Theta(n)$ and the subsets in $V_1,V_2$ have bounded size.
\end{theorem}

\prettyref{thm:doublesampler} proves the existence of double samplers with the required parameters. Combining it with the above theorem results in the following corollary.
\begin{corollary}\label{cor:main-ds-exists}
	For every $\gamma,\eps>0$ there exist an integer $D$, a constant $c'>0$ and an infinite family of bounded-degree bipartite graphs ${G_n=(A_n,B_n,E_n)}$ such that $E_{G_n}:\Sigma_0^{A_n}\to \Sigma_1^{B_n}$ has rate $ \exp(-\poly(\frac{1}{\epsilon}\exp(-\frac{1}{\gamma})))$ and a randomized polynomial time approximate-list-decoding algorithm with parameters $(\epsilon,1-\gamma,\frac{70}{\gamma^2})$ and success probability $1-e^{-c'|G_n|}$. Moreover, for every large enough $m$, there is some $G_n$ with $m\le |A_n|\le Dm$.
\end{corollary}
\begin{proof}
	Fix $\gamma,\epsilon >0$, and let $\alpha_0 = \frac{\epsilon}{4}10^{-\frac{\gamma}{8}},
	\beta_0  =  \frac{\epsilon \gamma}{1000(c^{\frac{8}{\gamma}}+1)},
	\alpha =\frac{\epsilon\sqrt{\gamma}}{1000},  \beta =\frac{\epsilon}{1000(c^{\frac{8}{\gamma}}+1)}$.
	
	Let ${X_n}$ be an infinite family of $((\alpha,\beta),(\alpha_0,\beta_0))$ double samplers, promised from \prettyref{thm:doublesampler}.  The family is dense, such that for every integer $m\in\mathbb{N}$ there exists $X_n=(V_1,V_1,V_0)$ such that $m\leq\abs{V_0}\leq Dm$. Furthermore, for every $n$ the double sampler $X_n=(V_2,V_1,V_0)$ has irregularity at most $D= \exp(\poly(\frac{1}{\alpha\beta\alpha_0\beta_0}))$ and $\abs{V_2},\abs{V_1}\leq D\abs{V_0}$. The vertices in $V_1$ are $m_1$-sets, for $m_1 = \poly(\frac{1}{\alpha\beta\alpha_0\beta_0})$ and the vertices in $V_2$ are $m_2$-sets for $m_2 = \poly(\frac{1}{\alpha\beta\alpha_0\beta_0})$.
	
	For every ${X_n}=(V_2,V_1,V_0)$, let $G_n=(A_n,B_n,E_n)$ be the flattening of the bipartite graph $X_n(V_1,V_0)$. From \prettyref{claim:weighted_to_un}, $G_n$ is an unweighted bipartite graph with $\abs{A_n} = \abs{V_0}$ and $\abs{B_n}\leq D\abs{V_1}\leq D^2\abs{V_0}$. In addition, the degree of each vertex $b\in B_n$ is $m_1$.
	
	Let $c'$ be the constant from \prettyref{thm:main}. From the theorem, the ABNNR encoding $E_{G_n}$ has a polynomial time approximate-list-decoding algorithm with parameters $(\epsilon,1-\gamma,\frac{70}{\gamma^2})$ and success probability $1-e^{-c'|A_n|}$. The rate of the code is $\frac{\abs{A_n}}{\abs{B_n}}\frac{1}{m_1}\geq \frac{1}{D^2 m_1} = \exp(-\poly(\frac{1}{\alpha\beta\alpha_0\beta_0}))=\exp(-\poly(\frac{1}{\epsilon}\exp(-\frac{1}{\gamma})))$.
\end{proof}

In this section $X$ is always an $((\alpha,\beta), (\alpha_0,\beta_0))$ double sampler and $E$ is the encoding defined above. We show that $E$ is combinatorially approximate-list-decodable in \prettyref{sec:comb_list_dec}, in \prettyref{sec:dec} we present the list-decoding algorithm and in \prettyref{sec:alg-param} we discuss its parameters . We prove \prettyref{cor:list-dec} in \prettyref{sec:list-dec-c}.

\subsection{Combinatorial approximate-list-decoding}\label{sec:comb_list_dec}
We briefly show that the encoding $E$ is combinatorially approximate-list-decodable. Let $G$ be the flattening of $X(V_1,V_0)$, as in the previous section. From \prettyref{claim:sampler}, \prettyref{item:10sampler}, $X(V_0,V_1)$ is an $(\alpha+\alpha_0,\beta+\beta_0)$ sampler. It follows from \prettyref{claim:weighted_to_un} that $G$ is also an $(\alpha+\alpha_0,\beta+\beta_0)$ sampler.

By \prettyref{lem:ABNNR} our encoding $E$ is
$(\alpha+\alpha_0,1-\gamma,l=\frac{\gamma-\beta-\beta_0}{\gamma^2-\beta-\beta_0})$ approximate-list-decodable code, for every $\gamma>\sqrt{\beta+\beta_0}$.
Choosing $\gamma=\sqrt{2\beta+\beta_0}$, we get,
\begin{corollary}\label{cor:list_dec}
	$E$ is an $(\alpha+\alpha_0,1-\sqrt{2(\beta+\beta_0)},l=\frac{2}{\beta+\beta_0})$ approximate-list-decodable code.
\end{corollary}
The above corollary is about combinatorial list-decoding, and does not imply that there is an efficient algorithm that approximate-list-decodes $E$ with these parameters.
\subsection{The list-decoding algorithm}\label{sec:dec}
In this section we describe a polynomial time list-decoding algorithm for $E$.
Denote the input $z'\in\Sigma_1^\abs{V_1'}$. We interpret it as $\{z'_S\}_{S\in V_1'}$.
As a preprocessing step, we create $z\in\Sigma_1^{\abs{V_1}}$ as follows: for every $S\in V_1$, the multi-set $V_1'$ has at most $D$ copies of $S$. The algorithm picks at random one such copy, $S'\in V_1'$, and sets $z_S = z'_{S'}$ (this preprocessing step is not needed in the perfectly regular setup).
\begin{enumerate}
	\item \textbf{Approximate-list-decoding of Local Views}\label{item:app_loc}
	
	For every $T \in V_2$, the graph $X_{|T}$ is an unweighted $(\alpha_0,\beta_0)$ sampler. The restriction of $E$ to $X_{|T}$ is the ABNNR encoding over the sampler $X_{|T}$.
	We apply the well-separated list-decoding algorithm, \prettyref{alg:well-sep-alg}, on $X_{|T}$ with input $\set{z_S}_{S\in V_1,S\subset T} $, and decoding parameter $(1-\frac{\gamma}{2})$. The algorithm outputs a list $L_T\subset \Sigma_0^{T}$ and a radius $r_T$ such that $\abs{L_T}\leq\ell$ for $\ell\leq\frac{8}{\gamma}$
	and $r_T=2\cdot 10^i\alpha_0$ for some $i\in\{0,\dots\ell\}$.
	The list $L_T$ satisfies
	\begin{itemize}
		\item
		Every $\sigma\in\Sigma_0^T$ such that $\Pr_{S\sim(\Pi_1 |\Pi_2=T)}[\sigma_{|S} = z_S]\geq\frac{\gamma}{2}$ is $r_T$-close to one of the elements in $L_T$, and,
		\item
		$L_T$ itself is $R_T=5r_T$-separated, i.e. $\forall\sigma\neq\sigma'\in L_T, \dist(\sigma,\sigma')>R_T$.
	\end{itemize}
	
	W.l.o.g. the list $L_T$ has size exactly $\ell$, otherwise we add dummy strings that obey the distance requirements.
	
	\item \textbf{Creating a UG constraint graph}\label{item:constraint}

	\begin{itemize}
		\item
		We define the constraint graph
		$(G_C = (V_2,E_C),W=\{w_e\}_{e\in E_C})$
		as follows: The vertices are $V_2$ and for every triple $T_1,S,T_2$ such that $S\subset T_1\cap T_2,S\in V_1$, we have an edge $(T_1,T_2)$ labeled by $S$, denoted by $(T_1,T_2)_S$. The weight of $(T_1,T_2)_S$ corresponds to choosing a random $S \sim \Pi_1$ and then $T_1,T_2$ independently from the distribution $ (\Pi_2|\Pi_1=S)$. Thus, the graph contains parallel edges and self loops. This is the two-step walk graph obtained from $X(V_2,V_1)$, see \prettyref{sec:G2}.
		
		\item For every $T$, the label set is $L_T$ (note that $|L_T|=\ell$).
		
		\item
		Given an edge $(T_1,T_2) \in E_C$ with label $S\subset T_1\cap T_2$,
		set the constraint $\pi$ of $(T_1,T_2)$ with label $S$ as follows:
		\begin{enumerate}
			\item For every $\sigma \in L_{T_1}$ if there is an unmatched $\sigma'\in L_{T_2}$ such that
			\[ \dist_S(\sigma,\sigma') \leq 2(r_{T_1}+\alpha_0), \]
			then set $\pi(\sigma)=\sigma'$.
			\item For every unmatched $\sigma\in L_T$, set $\pi(\sigma)$ to an arbitrary unmatched label.
		\end{enumerate}
		Observe that we always output unique constraints, because we only ever set $\pi(\sigma)$ to an unmatched label.
	\end{itemize}

	\item \textbf{Finding a large expanding UG constraint subgraph}
	\label{item:find-exp}
	
	For every $i\in\{0,\dots\ell\}$, let $V^{(i)}\subset V_2$ be all $T$ such that $r_T = 2\alpha_010^i$. For every $i$ such that $\mu_{G_C}(V^{(i)}) \ge \frac{1}{2(\ell+1)}$, we run \prettyref{alg:expanding-subgraph} on the graph $G_C$ with the subset $V^{(i)}$.
	The algorithm finds a set $U^{(i)}\subset V^{(i)}$ such that,
	\begin{itemize}
		\item
		$\Pr_{T \sim \Pi_2} [T \in U^{(i)}] \ge \frac{1}{4}\Pr_{T \sim \Pi_2} [T \in V^{(i)}]$.
		\item
		Denote by $G^{(i)}$ the induced subgraph of $G_C$ on $U^{(i)}$. Then, $\lambda_2(G^{(i)})\leq\frac{99}{100}$.
	\end{itemize}

	\item\label{item:solving-ug} \textbf{Solving the Unique Constraints}
	
	For every $i$ as above, we run the unique games algorithm,  \prettyref{alg:list-ug}, on $G^{(i)}$ and get a list $\mathcal{L}^{(i)}$ of  assignments. For each assignment $b:U^{(i)} \to [\ell]$ in $\mathcal{L}^{(i)}$ we define $x\in \Sigma_0^{V_0}$ as follows. For every $j \in V_0$ we pick a random $T\in U^{(i)}$ such that $j\in T$, according to the vertex weights of $G^{(i)}$. Let $\sigma\in\Sigma_0^T$ be the $b(T)$'th element in $L_T$, then we set $x_j=\sigma_j$.
	We add $x$ to $L_{out}$. The output is a list $L_{out} \subset \Sigma_0^n$.
\end{enumerate}

\subsection{Algorithm Parameters}\label{sec:alg-param}
In this section we analyze the parameters of the decoding algorithm, the size of the output list, the runtime and the randomness.

\paragraph{Output List Size} In the first step of the algorithm (see \prettyref{item:app_loc}) the algorithm creates a list $L_T$ and a radius $r_T=2\alpha_0 10^i$ for every $T\in V_2$. The size of $L_T$ is at most $\ell$ and the radius has one of $\ell+1$ possible values. In the second step, \prettyref{item:constraint}, the algorithm creates a constraint graph $G_C$ with constraints $\pi_e:[\ell]\rightarrow[\ell]$.

In the third step, \prettyref{item:find-exp}, the algorithm creates a subgraph $G^{(i)}$ for every $i\in\{0,\dots,\ell\}$. There are at most $(\ell+1)$ such graphs $G^{(i)}$, one for every possible value of $r_T$. For every graph $G^{(i)}$, the algorithm solves a unique games instance and outputs a list containing at most $\ell$ assignments (see \prettyref{item:solving-ug}). For each such assignment the algorithm adds a string to $L_{out}$, therefore $\abs{L_{out}}\leq (\ell+1)\ell\leq \frac{70}{\gamma^2}$.

The same encoding $E$ is combinatorially approximate-list-decodable with list size $O(\frac{1}{\gamma})$, see \prettyref{sec:comb_list_dec}. Our algorithm outputs a list of size $O(\frac{1}{\gamma^2})$, and we don't know how to shorten the list. If we had a way to check if $x$ should be in $L_{out}$, i.e. if it is an $\epsilon$-approximation to $x'\st \dist(E(x'),z')\leq 1-\gamma$,  we could have reduced the output list size to $O(\frac{1}{\gamma})$. If the encoding $E$ is used to amplify the distance of a uniquely-decodable error correcting code $C$, then $E(C)$ has a decoding algorithm with output list of $O(\frac{1}{\gamma})$, see \prettyref{sec:list-dec-c}.

\paragraph{Runtime}
We bound the runtime of the decoding algorithm by going over each step it performs and calculating its runtime.
The preprocessing step takes linear time in $\abs{V_0}$. In the first step the algorithm performs \prettyref{alg:well-sep-alg} on $X_{|T}$ for every $T\in V_2$. Preforming \prettyref{alg:well-sep-alg} on $X_{|T}$ takes $\ell2^{\abs{T}}$. Since $\abs{T}$ and $\ell$ are constants, the total runtime of this step is linear in $\abs{V_2}$.

In the second step the algorithm creates the constraint graph $G_C$ by performing a two-step walk over the bipartite graph $X(V_2,V_1)$. The size of $G_C$ can be seen to be linear in $|V_2|$. It takes polynomial time in $\abs{X}$ to construct it. Creating the constraints for each edge is a local operation which takes constant time for each edge.

In the third step the algorithm goes over all $i\in\{0,\dots,\ell\}$ and finds an expander subgraph $G^{(i)}$ of $G_C$ by running \prettyref{alg:expanding-subgraph}. \prettyref{alg:expanding-subgraph} runs in time polynomial in the original graph size, which means that the entire step takes polynomial time in $\abs{X}$.

In the last step, the algorithm performs \prettyref{alg:list-ug} on all of the graphs $G^{(i)}$ (there are at most $\ell+1$ such graphs). By \prettyref{thm:pre UG}, \prettyref{alg:list-ug} runs in time $\poly(\abs{G^{(i)}})$, so the entire step also takes polynomial time in $\abs{X}$.

\paragraph{Randomness}
The algorithm uses randomness twice directly, and one more time when running the unique games algorithm of Makarychev and Makarychev \cite{MakarychevM2010}. The unique games algorithm of \cite{MakarychevM2010} can be derandomized, as they explain in their paper. The randomness directly inside our algorithm occurs in the following places.
\begin{itemize}
	\item Preprocessing step. This step is avoided when $X$ is a perfectly regular double sampler. In the more general irregular case, the preprocessing step chooses for each $S\in V_1$ a copy of $S$ in $V_1'$. To derandomize, one can reuse the same randomness for every $S\in V_1$. This way, the number of random bits needed is $\log D$, and we can cycle through all of these easily. We can enumerate over all possible random strings in $\{0,1\}^{\log D} $ and generate an output list for each random string. The output of the derandomized algorithm is the union of the output lists of the random strings. Our randomized algorithm succeeds with high probability, so there must be a random string which succeeds.
	\item Final step (step 4) of the algorithm. In this step for each assignment $b:U^{(i)}\rightarrow [\ell]$, and each $j\in[n]$ the algorithm picks a random $T\in U^{(i)}$ which contains $j$ and uses it to define $x(j)$. There are $k=O(1)$ possible choices \footnote{By our assumption, the distribution $\Pi_0$ is uniform, meaning that each $j$ participates in the same number of sets $T\in V_2$, which must be a constant number.} for a set $T$ such that $j\in T$
	and we can instead proceed as follows. First define $x(j)$ according to the first choice for all $j$. This leads to a list of possible codewords. Next, define $x(j)$ according to the second choice for all $j$, and so forth. Finally combine the lists from all possible choices.
	
\end{itemize}
The derandomized algorithm may output a somewhat larger list than the randomized one, as it outputs a list for every choice used. In approximate-list-decoding, we are not able to prune the output list and shorten it (see discussion in the beginning of \prettyref{sec:approximate_ecc}). Luckily, if the encoding $E$ is used to amplify the distance of an ECC $C$, the list size does not increase, see the next section for more details.

\subsection{Proof of \prettyref{cor:list-dec}} \label{sec:list-dec-c}
In this section we show how our main theorem implies a list-decoding algorithm for the code $E(C) = \sett{E(x)}{x\in C}$.
\begin{corollary}\label{cor:list-dec-full}
	For all $\gamma,\epsilon>0$, let $(X=(V_2,V_1,V_0=[n]),W)$ be a double sampler with parameters $\alpha,\beta,\alpha_0,\beta_0$ as in \prettyref{thm:main}, and irregularity at most $D$. Let $C\subset\Sigma_0^n$ be an error correcting code with a polynomial time unique-decoding algorithm from an $\epsilon$-fraction of errors. Then the error correcting code $E(C)$ has a randomized polynomial time $(1-\gamma,l=\frac{\gamma -\beta-\beta_0}{\gamma^2-\beta-\beta_0})$ list-decoding algorithm.
\end{corollary}
Plugging in the double samplers from \prettyref{thm:doublesampler}, as done in \prettyref{cor:main-ds-exists}, results in an error correcting code with rate $\exp(-\poly(\frac{1}{\epsilon}\exp(-\frac{1}{\gamma}))) \cdot\text{Rate}(C)$.

\begin{proof}
	Fix an input $z\in\Sigma_1^{\abs{V_1'}}$, the list-decoding algorithm of $E(C)$ on $z$ proceeds as follows.
	\begin{itemize}
		\item Run the approximate-list-decoding algorithm for $E$ on input $z$, receive a list $L_{out}$ of size at most $\frac{70}{\gamma^2}$.
		\item For each $x\in L_{out}$, run the decoding algorithm of $C$ on $x$.
		\begin{itemize}
			\item If failed, do nothing.
			\item If it outputs $x'\in C$, insert $x'$ into $L'$ only if $\dist(E(x'),z)\leq 1-\gamma$.
		\end{itemize}
	\end{itemize}
	
	The above algorithm is polynomial time since the decoding algorithms of $E,C$ are polynomial, and $\abs{L_{out}}$ is constant.
	
	We prove the correctness of the algorithm. Fix $z'\in\Sigma_1^{\abs{V_1'}}$, and let $y\in C$ be a string satisfying $\dist(z,E(y))\leq 1-\gamma$. From \prettyref{thm:main}, there is $x\in L_{out}$ such that $\dist(x,y)\leq\epsilon$. The unique-decoding algorithm of $C$ on input $x$ should return $y$, and the algorithm inserts $y$ into $L'$ as $\dist(E(y),z)\leq 1-\gamma$.
	
	We are left with bounding the list size. The encoding $E$ is $(1-\gamma,\alpha+\alpha_0,l=\frac{\gamma-\beta-\beta_0}{\gamma^2-\beta-\beta_0})$ combinatorially approximate-list-decodable (see \prettyref{sec:comb_list_dec}). The error correcting code $C$ has distance at least $2\epsilon$, and $\epsilon>\alpha+\alpha_0$ (by the conditions of \prettyref{thm:main}). \prettyref{claim:comb-list-dec} proves that $E(C)$ is $(\alpha+\alpha_0,l=\frac{\gamma-\beta-\beta_0}{\gamma^2-\beta-\beta_0})$ combinatorially list-decodable, that is, that $L = \sett{x\in C}{\dist(E(x),z)}$ has size at most $l$. The output list of the algorithm $L'$ satisfies $L'\subseteq L$, because the algorithm inserts $x'\in C$ into $L'$ only if $\dist(E(x'),z)\leq 1-\gamma$, so $\abs{L'}\leq\abs{L}\leq l$.
\end{proof}

\section{Proof of Correctness}\label{sec:correctness}

Fix $g:V_0 \rightarrow \Sigma_1$ such that $\Pr_{S \in V_1'}[ g_{|S}=z'_S] \geq \gamma$. It suffices to show that with high probability, $\dist(g,L_{out})\leq\epsilon$.  Whenever we say ``correct'' in this section we always mean correct with respect to the fixed function $g$.

Let $a:V_2\rightarrow[\ell]$ be an assignment. For $T\in V_2$ we overload notation and denote by $a(T)$ the $a(T)$'th element in the list $L_T$ (formally this is $L_T(a(T))$). Similarly we treat a constraint $\pi_{(T_1,T_2)_S}:[\ell]\rightarrow[\ell]$ as $\pi_{(T_1,T_2)_S}:L_{T_1}\rightarrow L_{T_2}$.

\paragraph{Preprocessing Step}

We first prove that $z\in \Sigma_1^{V_1}$ constructed from the input $z'\in\Sigma_1^{V_1'}$ is noticeably correlated with $g$.
\begin{claim}
	With probability at least $1- e^{-c'n}$, for some constant $c'$,
	\[\Pr_{S\sim \Pi_1}[z_S = g_{|S}]\geq \frac{3}{4}\gamma. \]
\end{claim}
\begin{proof}
	Each $S\in V_1$ has possibly several copies in the multiset $V_1'$. The algorithm chooses $S'$ to be a random copy, and sets $z_S = z'_{S'}$.
	It remains to use a tail bound to show that it is highly unlikely that the correlation drops below $3\gamma/4$ after moving to $\set {z_S}$.
	
	For every $S\in V_1$, let $I_S$ be the random variable that indicates the event $z_S = g_{|S}$. 	
	The input $z'$ is $\gamma$-close to $g$:
	\[ \E_{S\sim \Pi_1}[I_S] = \Pr_{S'\in V_1'}[z'_{S'} = g_{|S'}] \geq \gamma. \]
	We use a Chernoff tail bound (see \prettyref{sec:prelim}) on the independent random variables $\{I_S\}_{S\in V_1}$. We define the random variable $I=\sum_{S\in V_1}\Pi_1(S)I_S$, then $\E[I]=\E_{S\sim\Pi_1}[I_S]$ and define $\nu = \sum_{S\in V_1}\Pi_1^2(S)\E[I_S]$. The distribution $\Pi_1$ has irregularity at most $D$, which lets us bound $\nu$: $\nu\leq \frac{D}{\abs{V_1}}\sum_{S\in V_1}\Pi_1(S)\E[I_S]\leq \frac{D}{\abs{V_1}}\E[I]$.
	\[ \Pr\Brac{\E[I] < \frac{3}{4}\gamma}\leq e^{-\frac{1}{32D}\abs{V_1}\gamma}. \]
	Since $D,\gamma$ are constants and $\abs{V_1}=\Theta(n)$,we pick $c'= \frac{1}{32D}\gamma\frac{n}{\abs{V_1}}$ and finish the proof.
\end{proof}
In the rest of the proof, we assume that $z$ is such that $\Pr_{S\sim \Pi_1}[z_S = g_{|S}]\geq \frac{3}{4}\gamma$.

\subsection{Approximate-list-decoding of local views}\label{sec:local-views}
The decoding algorithm takes each $T\in V_2$, and applies the well-separated list-decoding algorithm, \prettyref{alg:well-sep-alg}, to the graph $X_{|T}$ on input $\set{z_S}_{S\in V_1,S\subset T}$ and closeness parameter $(1-\frac{\gamma}{2})$. The graph $X_{|T}$ is an $(\alpha_0,\beta_0)$ unweighted sampler, and $\gamma > 10\sqrt{\beta_0}$.
\prettyref{alg:well-sep-alg} outputs a list $L_T$ and radius a $r_T$. From \prettyref{lem:well-sep}, the list and radius satisfy
\begin{eqnarray*}
	\abs{L_T}  = \ell  \le  \frac{\frac{\gamma}{2}-\beta_0}{\frac{\gamma^2}{4}-\beta_0} \le\frac{8}{\gamma},
\end{eqnarray*}
and $r_T = 2\alpha_0 10^i$ for some $i\in \{0,\dots \ell\}$. The list $L_T$ is $R_T = 5r_T$ separated.

\begin{definition}(Correct vertex)
	A vertex $T \in V_2$ is \emph{correct} if there exists $\sigma\in L_T$ such that $\dist_T(g,\sigma) \leq r_T$.
\end{definition}
We are left with showing that almost every $T\in V_2$ is correct.

\begin{claim}
	\label{claim:correct vert}
	$\Pr_{T \sim \Pi_2}[T \text{ is correct}]\geq 1-\beta$.
\end{claim}

\begin{proof}
	We look at the bipartite sampler $(X(V_2,V_1),\Pi_{2,1})$. Let
	$f:V_1\rightarrow [0,1]$ equal $1$ if $z_S = g_{|S}$ and $0$ otherwise.
	We know that
	$\E_{S \sim \Pi_1}[f(S)] \ge \frac{3}{4}\gamma > 3\alpha$. Let
	\[Bad = \sett{ T \in V_2 }{ \E_{S \sim T}[f(S)]< \frac{3}{4}\gamma - \alpha}.\]
	As $(X(V_2,V_1),\Pi_{2,1})$ is an $(\alpha,\beta)$ sampler, $\Pr_{T \sim \Pi_2}[T \in Bad] \le
	\beta$.
	
	For every $T \not\in Bad$,
	\[ \E_{S \sim T}[z_S = g_{|S}] \ge  \frac{3}{4}\gamma -\alpha \ge \frac{\gamma}{2}. \]
	From \prettyref{lem:well-sep}, for every $T$ such that the above equation holds, the list $L_T$ contains $\sigma$ such that $\dist_T(\sigma,g)\leq r_T$.
\end{proof}

\subsection{Creating a UG constraint graph}
The algorithm creates a constraint graph $(G_C=(V_2,E_C),W)$ which is a two-step random walk on the weighted bipartite sampler graph $(X(V_2,V_1),\Pi_{2,1})$. For each edge $(T_1,T_2)_S$ in $G_C$, the algorithm creates a matching between the lists $L_{T_1},L_{T_2}$.
In this section we prove that for every radius $r=2\alpha_0 10^i$, and for most edges $(T_1,T_2)_S\textsl{}$ for which $r_{T_1}=r_{T_2}$, the algorithm matches the list element closest to $g$ in $L_{T_1}$ to the list element closest to $g$ in $L_{T_2}$.

\begin{definition}[Correct assignment and constraint]
	\label{def:correct_list_i} Let $a:V_2\rightarrow [\ell]$ be the assignment that for every $T\in V_2$ assigns the element in $L_T$ closest to $g|_T$. Ties are broken arbitrarily. We say $\pi$ is \emph{correct} on $e=(T_1,T_2)_S \in E$ if $\dist_{T_1}({\cind{T_1}},g)\leq  r_{T_1}$, $\dist_{T_2}({\cind{T_2}},g)\leq r_{T_2}$ and $\pi_e(\cind{T_1})=\cind{T_2}$.
\end{definition}

\begin{definition}
	\label{def:correct_subset}
	Let $T$ be a correct vertex. For $S \in V_1$, $S \subseteq T$, we say that the pair $(S,T)$ is \emph{correct} if
	\begin{itemize}
		\item
		$\dist_S(\cind{T},g)  \leq  r_T+\alpha_0$, and,
		\item
		For every $\sigma \in L_T,\sigma \neq  \cind{T}$ we have  $\dist_S(\sigma,g) > R_T-r_T-\alpha_0$.
	\end{itemize}
\end{definition}

We prove that if $T$ is a correct vertex, then for almost all of $S\subset T, S\in V_1$, the pair $(S,T)$ is correct.
\begin{claim}
	\label{claim:correct half edge}
	Suppose $T \in V_2$ is correct. Then
	$\Pr_{S\sim(\Pi_1|\Pi_2=T)}[(S,T)\text{ is correct}]\geq 1- \ell\beta_0$.
\end{claim}

\begin{proof}
	Fix a correct $T\in V_2$, and denote $X_{|T} = (U_T,T,E)$ (i.e. $U_T$ is all $S\in V_1$ such that $S\subset T$).
	
	For every $\sigma \in L_T$, let
	\[ Bad_\sigma = \sett{S\in U_T}{\abs{\dist_S(\sigma,g)-\dist_T(\sigma,g)}>\alpha_0}. \]
	
	As $X$ is double sampler, $X_{|T}$ is an $(\alpha_0,\beta_0)$ sampler and for every $\sigma\in \Sigma_0^T$, $\Pr_{S \sim (\Pi_1|\Pi_2=T)} [S \in Bad_\sigma] \leq \beta_0$.
	
	For a correct $T$, $\dist_T(a(T),g)\leq r_T$ and for every other $\sigma'\in L_T,\sigma'\neq \cind{T}$, we have
	\[ \dist_T(\sigma',g)\geq \dist_T(\sigma',\cind{T}) - \dist_T(\cind{T},g) \geq R_T-r_T.\]
	This implies that for $S\notin \cup_{\sigma\in L_T}Bad_\sigma$,
	\begin{itemize}
		\item $\dist_S(a(T),g)\leq \dist_T(a(T),g) + \alpha_0 \leq r_T+\alpha_0$.
		\item For every $\sigma'\in L_T, \sigma'\neq a(T)$, we have $\dist_S(\sigma',g)\geq \dist_T(\sigma',g)- \alpha_0 \geq R_T-r_T-\alpha_0$.
	\end{itemize}
	So the pair $(S,T)$ is correct for every $S\notin \cup_{\sigma\in L_T}Bad_\sigma$, taking a union bound over the probability of $S\in Bad_\sigma$ for every $\sigma\in L_T$ finishes the proof.
\end{proof}

\begin{lemma}
	\label{lem:correct const}
	For $T_1,T_2 \in V_2$ and $S\in V_1, S\subset T_1\cap T_2$ let $\pi_{(T_1,T_2)_S}$ denote
	the constraint for the edge $(T_1,T_2)_S$. If $T_1,T_2,(S,T_1),(S,T_2)$ are correct and $r_{T_1} = r_{T_2}=r$, then $\pi_{(T_1,T_2)_S}$ is correct.
\end{lemma}

\begin{proof}
	First notice that $\pi_{(T_1,T_2)_S}$ can match $\cind{T_1}$  to $\cind{T_2}$, because \[\dist_S(\cind{T_1},\cind{T_2}) \leq \dist_S(\cind{T_1},g) + \dist_S(\cind{T_2},g) \leq r_{T_1}+r_{T_2}+2\alpha_0=2(r+\alpha_0).\]
	Next, observe that no other list element can be matched to either $\cind{T_1}$ or $\cind{T_2}$. To see that consider a pair $(\sigma,\cind{T_2})$ for some $\cind{T_1} \neq \sigma \in L_{T_1}$.  Then,
	\begin{eqnarray*}
		\dist_S(\sigma,\cind{T_2}) & \ge &
		\dist_S(\sigma,g)-\dist_S(\cind{T_2},g) \\
		& \ge & R_{T_1}-r_{T_1}-\alpha_0- (r_{T_2}+\alpha_0)\\
		& \ge & R_{T_1}-2(r+\alpha_0) > 2(r+\alpha_0),
	\end{eqnarray*}
	which holds because $R_T =5r, R_T \geq 10\alpha_0$, then $R_T> 4(r+\alpha_0)$.
	Hence $\pi_{(T_1,T_2)_S}$ matches $\cind{T_1}$ to $\cind{T_2}$ and only to $\cind{T_2}$ and vice versa.
\end{proof}

\subsection{Finding a large expanding UG constraint subgraph}
The starting point of this section is the constraint graph $G_C$ which has unique constraints. In this section we show that the algorithm finds an induced subgraph of $G_C$ which is an expander, and that almost all of its constraints are correct.

Recall that $V^{(i)}$ is the set of all $T \in V_2$ with $r_T=2\alpha_0 10^i$. The decoding algorithm goes over all $i$ such that $V^{(i)}$ is not too small, and finds an expander subgraph $G^{(i)}$ of $G_C$ in which all vertices are in $V^{(i)}$. In this section we show that there is at least one $i$ in which the expander graph $G^{(i)}$ is correct.

Denote $\mu(V^{(i)}) = \Pr_{T \sim \Pi_2}[T\in V^{(i)}]$. Let $E_C^{(i)}\subset E_C$ be all edges in $G_C$ such that both endpoints have radius $r=2\alpha_0 10^i$.

\begin{lemma}
	\label{lem:constraint prop}
	Denote $\eta = \beta+ \ell\beta_0$.
	There exists $i\in  \{0,\dots \ell\}$, such that:
	
	\begin{itemize}
		\item $\mu(V^{(i)})  \geq \frac{1}{2(\ell+1)}$, and,
		
		\item
		$\Pr_{(T_1,T_2)_S \in E_C^{(i)}} [\pi_{(T_1,T_2)_S} \text{ is correct}] \geq 1-4\eta.$
	\end{itemize}
\end{lemma}

\begin{proof}
	For every $i$ let
	\begin{eqnarray*}
		p & = &  \Pr_{(S,T) \sim (\Pi_1,\Pi_2)}[T,(S,T)\text{ are correct} ]\\
		p^{(i)} & = & \Pr_{(S,T) \sim (\Pi_1,\Pi_2)}[T,(S,T)\text{ are correct} | T\in V_2^{(i)}].
	\end{eqnarray*}
	Then by  \prettyref{claim:correct vert} and \prettyref{claim:correct half edge}
	\begin{eqnarray*}
		p & = & \Pr[T,(S,T)\text{ are correct} ]
		~ = ~  \Pr[T \mbox{ is correct}] \cdot  \Pr[(S,T)\text{ is correct} ~|~T  \mbox{ is correct}]\\
		&  \ge & (1-\beta) \cdot (1-\ell \beta_0) \ge 1-\beta-\ell \beta_0=1-\eta.
	\end{eqnarray*}
	
	We claim:

	\begin{claim}
		There exists $i\in\{0,\dots\ell\}$ with $\mu(V^{(i)})  \geq \frac{1}{2(\ell+1)}$  and $p^{(i)}\geq 1-2\eta$.
	\end{claim}
	
	\begin{proof}
		Assume towards contradiction that no such $i$ exists.
		Let  \[ q = \sum_{i:\mu(V^{(i)})< \frac{1}{2(\ell+1)}} \mu(V^{(i)}). \]
		Therefore $q < \frac{1}{2}$ and
		
		\begin{eqnarray*}
			p & = & \sum_{i\in[\ell]}\mu(V^{(i)})p^{(i)} =
			\sum_{i:\mu(V^{(i)})\geq \frac{1}{2(\ell+1)}}\mu(V^{(i)})p^{(i)} + \sum_{i:\mu(V^{(i)})< \frac{1}{2(\ell+1)}}\mu(V^{(i)})p^{(i)}   \\
			& < & (1-2\eta)(1-q) + q \\
			&=& 1 -2\eta(1-q) < 1-\eta,
		\end{eqnarray*}
		which contradicts $p \ge 1-\eta$.
	\end{proof}
	
	Now, fix $i$ such that $\mu(V^{(i)})\geq \frac{1}{2(\ell+1)}$ and $p^{(i)}\geq 1-2\eta$. Let $(V^{(i)}E_C^{(i)}),W^{(i)}$ be the induced graph on $V^{(i)}$. We prove the second item,
	\begin{align*}
	\Pr_{(T_1,T_2)_S \sim E_C^{(i)}}& [\text{ one of } T_1,T_2,(T_1,S)(T_2,S) \text{ is incorrect }]\\ \le&
	\Pr_{(S,T_1) \sim (\Pi_1,\Pi_2)} [T_1 \mbox{ or } (T_1,S) \mbox { is incorrect}~|~T_1 \in V_2^{(i)}] \\
	&+ \Pr_{(S,T_2) \sim (\Pi_1,\Pi_2)}[T_2 \mbox{ or } (T_2,S) \mbox { is incorrect}~|~T_2 \in V_2^{(i)}] \\
	\le& 2(1-p^{(i)}) \le 4\eta.
	\end{align*}
	The inequality holds because the weight of an edge $(T_1,T_2)_S \sim E_C^{(i)}$ is the probability of picking $S\sim\Pi_1$, and  $T_1,T_2\sim (\Pi_2|\Pi_1\S)$ independently, given that $r_{T_1}=r_{T_2}=2\alpha_0 10^i$.

	The second item follows by \prettyref{lem:correct const}, because whenever all of $T_1,T_2$, $(T_1,S)$ and $(T_2,S) $ are correct, we have that $\pi_{(T_1,T_2)_S}$ is correct, i.e. 
	\[ \Pr_{(S,T_1,T_2) \sim E_C^{(i)}} [\pi_{(T_1,T_2)} \text{ is incorrect }] \leq 4\eta. \]
\end{proof}

We have $\mu(V^{(i)})\geq \frac{1}{2(\ell+1)} > 10\sqrt{\max\set{\alpha,\beta}}$.
By \prettyref{thm:induced-expander}, \prettyref{alg:expanding-subgraph} returns a subset $U^{(i)}\subset V^{(i)}$ such that
\begin{itemize}
	\item $\mu_{G_C}(U^{(i)})\geq \frac{\mu(V^{(i)})}{4} \ge \frac{1}{8(\ell+1)}$, and,
	\item $\lambda_2(G^{(i)})\leq\frac{99}{100}$.
\end{itemize}
Furthermore, since $\mu(U^{(i)})\geq \frac{1}{4}\mu(V^{(i)})$, from \prettyref{claim:sampler mixing lemma},
\[ \Pr_{(T_1,T_2)\sim E_C} [T_1,T_2 \in U^{(i)} ~|~T_1,T_2\in V^{(i)}]\geq \frac{1}{20}. \]
We bound the probability of an incorrect edge:
\begin{eqnarray}
\label{eq:Vi}
\nonumber
\Pr  [\pi_{(T_1,T_2)} \text{ is incorrect} | T_1,T_2\in U^{(i)}]
& \le &
\frac{\Pr [\pi_{(T_1,T_2)} \text{ is incorrect} | T_1,T_2\in V^{(i)}]}{\Pr [T_1,T_2 \in U^{(i)} ~|~T_1,T_2\in V^{(i)}]}\\
&\leq & 20\cdot 4\eta,
\end{eqnarray}
where correct means correct with respect to $g$, as in the entire proof.

At this point we have found an expanding subgraph $G^{(i)}$ of $G_C$, such that almost all of its edges are correct with respect to $g$. 
\subsection{Solving the unique constraints}
For every graph $G^{(i)}$, the decoding algorithm runs  \prettyref{alg:list-ug} on $G^{(i)}$ and outputs a list of assignments $\mathcal{L}^{(i)}$. Then, the decoding algorithm takes every assignment in $b\in \mathcal{L}^{(i)}$ and transforms it into a string $x\in\Sigma_0^n$. In this section we prove that there exists one such $x$ which approximates $g$.

Let $i\in \{0,\dots\ell\}$ be the index promised from \prettyref{lem:constraint prop}, and let $L_{out}^{(i)}\subset L_{out}$ be the subset of the output list created by the algorithm when running on $G^{(i)}$.
\begin{claim} \label{claim:lout-correct}
	With high probability, there exist $x\in L_{out}^{(i)}$ such that \[ \dist(g,x)\leq r_T+80\eta(c^{\ell}+1)+\alpha+8\beta(\ell+2).\]
\end{claim}
\begin{proof}
	Recall $a:V_2 \to [\ell]$ is the assignment which assigns each vertex $T\in V_2$ the list element closest to $g$. We prove the claim by showing there exists $b\in\mathcal{L}^{(i)}$ which is close to $a$ on $U^{(i)}$. Then we prove that with high probability, the algorithm generates from $b$ a string $x\in\Sigma_0^n$ which approximates $g$, details follows.
	
	By (\ref{eq:Vi}), with probability $1-80\eta$ a random edge $(T_1,T_2)_S\in G^{(i)}$ is correct, that is $\pi_{(T_1,T_2)_S}(a(T_1)) = a(T_2)$. This means that the assignment $a$ has value at least $1-80\eta$ on the unique games instance of $G^{(i)}$.
	The parameters of the double sampler promise that $\eta = \beta+\ell\beta_0$ is small enough to satisfy $80\eta c^{\ell+1}<1$, so
	\prettyref{thm:pre UG} guarantees that \prettyref{alg:list-ug} outputs an assignment $b:U^{(i)} \to [\ell]$ in $\mathcal{L}^{(i)}$ such that
	\begin{eqnarray}\label{eq:dist_from_a}
	\Pr_{T \sim U^{(i)}}[a(T) \ne b(T)] & \le & c^{\ell}80\eta,
	\end{eqnarray}
	where $T \sim U^{(i)}$ is the probability to pick a vertex $T$ according to the weights of $G^{(i)}$.
	
	We show that with high probability, the decoding algorithm on $b$ outputs $x$ which approximates $g$.
	For every $j\in [n]$, the decoding algorithm picks a random $T\sim U^{(i)}$ which contains $j$, and sets $x_j = b(T)_j$. For each $j$, let $err(j)$ be the probability that $x_j$ is decoded incorrectly,
	\begin{eqnarray*}
		err(j) & =&  \Pr_{T\sim U^{(i)}|j\in T} [b(T)_j \neq g_j].
		\label{eq:err_Pi}
	\end{eqnarray*}
	If there is some $j\in[n]$ such that no $T\in U^{(i)}$ contains it, we define $err(j)=1$. $\E_{j\sim\Pi_0}[err(j)]=\E_{j\in[n]}[err(j)]$ is the probability of a random coordinate to be decoded incorrectly ($\Pi_0$ is uniform over $[n]$) and our goal is to bound it.
	
	Choosing a random $T\sim U^{(i)}$ then a random $j\in T$ results in a weighted distribution over $V_0$, denote this distribution by $\Pi_0^{(i)}$.
	We show two things:
	\begin{itemize}
		\item $\E_{j\sim \Pi_0^{(i)}}[err(j)]   \le  r_T+80\eta(c^{\ell}+1)$ on \prettyref{claim:assignment-dist}, and,
		\item $\E_{j\in[n] }[err(j)] \leq \E_{j\sim \Pi_0^{(i)}}[err(j)] + \alpha+8\beta(\ell+1)$ on \prettyref{claim:distribution-dist},
	\end{itemize}
	together this implies $\E_{j\in[n]}[err(j)]   \le  r_T+80\eta(c^{\ell}+1)+\alpha+8\beta(\ell+1)$.
	
	To finish the proof, we need to show that with high probability $\dist(x,g)$ is small. The algorithm chooses for each $j\in [n]$ a random set $T\ni j$ independently, so we can apply a Chernoff tail bound,
	\[ \Pr\Brac{\Pr_{j\in[n]}[w_j\neq g_j] > (r_T+80\eta(c^{\ell}+1)+\alpha+8\beta(\ell+1)) + \beta} \leq e^{-\beta^2 n},\]
	and get that with probability $\exp(-n)$ the string $x$ satisfies $\dist(x,g)\leq r_T+80\eta(c^{\ell}+1)+\alpha+8\beta(\ell+2)$.
	
\end{proof}
The parameters satisfy $r_T\leq\alpha_0 10^{\frac{8}{\gamma}}$, $\ell = \frac{8}{\gamma}$ and $\eta=\beta+\ell\beta_0$, which results in $\epsilon =10^{\frac{8}{\gamma}} \cdot \alpha_0+80(\beta+ {\frac{8}{\gamma}}\beta_0) \cdot (c^{\frac{8}{\gamma}}+1)+\alpha+8\beta(\frac{8}{\gamma}+2)$. In the proof of \prettyref{lem:constraint prop} we use the fact that  $\mu(V^{(i)})\geq \frac{1}{2(\ell+1)} > 10\sqrt{\max\set{\alpha,\beta}}$, which holds for $\gamma > \max \{200\sqrt{\alpha},200\sqrt{\beta}\}$. In order for the well-separated list-decoding algorithm to succeed, we require that that $\gamma > 10\sqrt{\beta_0}$. The unique games algorithm from \prettyref{claim:lout-correct} requires $80(\beta+ {\frac{8}{\gamma}}\beta_0) \cdot c^{\frac{8}{\gamma}+1} < 1$. The $\alpha,\beta,\alpha_0,\beta_0$ which satisfy the conditions of \prettyref{thm:main} satisfy all these requirements.

The algorithm uses randomness in the preprocessing step and in the final step above. In each time the success probability is $1-\exp(-n)$, so the total success probability is also $1-\exp(-n)$.

We remark that our algorithm uses as black box the unique games algorithm of \cite{MakarychevM2010}. Their algorithm is randomized, and can be derandomized without changing the algorithm parameters, as they explain in their paper.

To finish the proof, we are left with proving the two claims.
\begin{claim} \label{claim:assignment-dist}
	$\E_{j\sim \Pi_0^{(i)}}[err(j)]   \le  r_T+80\eta(c^{\ell}+1)$.
\end{claim}

\begin{proof}
	$\E_{j\sim \Pi_0^{(i)}}[err(j)]$ is the probability the following experiment fails: pick $j \sim \Pi_0^{(i)}$ and $T\sim U^{(i)}$ which contains $j$,  and check whether $b(T)_j =g_j$. This is the same distribution as picking $T\sim U^{(i)}$ and then a uniform $j\in T$. Hence,
	\begin{eqnarray*}
		\E_{j\sim \Pi_0^{(i)}}[err(j)] & = & \E_{T\sim U^{(i)}, j\in T}[b(T)_j\neq g_j] \\
		& \le & \Pr_{T \sim U^{(i)}}[\dist_T(b(T),g)\le r_T]  \cdot r_T+\Pr_{T \sim U^{(i)}}[\dist_T(b(T),g) > r_T]  \\
		& \le &
		r_T +(c^{\ell}+1)80\eta.
	\end{eqnarray*}
	The last inequality is true because
	\begin{align*}
	\Pr_{T \sim U^{(i)}}[\dist_T(b(T),g)> r_T]  &\le
	\Pr_{T  \sim U^{(i)}}[\dist_T(a(T),g)> r_T] + \Pr_{T \sim U^{(i)}}[a(T) \ne b(T)]  \\
	&\le  80\eta + c^{\ell}80\eta. \tag{by \prettyref{eq:Vi} and \prettyref{eq:dist_from_a}}
	\end{align*}
	
\end{proof}

Next we prove:
\begin{claim} \label{claim:distribution-dist}
	$ \E_{j\in[n] }[err(j)] \leq \E_{j\sim \Pi_0^{(i)}}[err(j)] + \alpha+8\beta(\ell+1)$.
\end{claim}

\begin{proof}
	Let $Bad \subseteq V_2$ be the set of $T$ for which
	$$\Abs{\E_{j \sim (\Pi_0|\Pi_2=T)}[ err(j)]-\E_{j \in[n] }[err(j)]} \ge \alpha.$$
	By \prettyref{claim:sampler}, the graph $X(V_2,V_0)$ is an $(\alpha,\beta)$ sampler and $\Pr_{T \sim \Pi_2}[T \in BAD] \le \beta$ (recall that $\Pi_0$ is uniform, so $\E_{j \in[n] }[err(j)] = \E_{j \sim\Pi_0}[err(j)]$). Since $\Pr_{T\sim\Pi_2}[T\in U^{(i)}]\geq \frac{1}{8(\ell+1)}$, $\Pr_{T \sim U^{(i)}} [T \in BAD]\leq 8(\ell+1)\beta$.
	
	We have
	\begin{eqnarray*}
		\E_{j\sim \Pi_0^{(i)}}[err(j)]  & = &
		\E_{T\sim U^{(i)}}\Brac{\E_{j \in T} [err(j)]} \\
		& \ge &
		\Pr_{T \sim U^{(i)}} [T \notin BAD]\E_{T\sim U^{(i)}\setminus BAD}\Brac{\E_{j \in T}[err(j)]} \\
		& \ge &
		(1-8\beta(\ell+1))(\E_{j\in [n]} [err(j)]-\alpha),
	\end{eqnarray*}
	where the last inequality is because the term is a convex combination of elements that are within $\alpha$ of the common number $\E_{j\in[n]}[err(j)]$.
\end{proof}

\section{High Dimensional Expanders yield Double Samplers}\label{sec:doublesampler}
\def\D{{\cal D}}
In this section we describe how to construct double samplers from high dimensional expanders, proving \prettyref{thm:doublesampler}. In a nutshell, we take the high dimensional expanders constructed by Lubotzky, Samuels and Vishne \cite{LubotzkySV2005-exphdx}, and let the layers of the double sampler be $V_0=X(0),V_1=X(a),V_2=X(b)$ for appropriately chosen $0<a<b$, and put edges for inclusion of subsets. Below we give some minimal background on high dimensional expanders and prove that this construction is indeed a double sampler.

A $d$-dimensional complex $X$ is given by a collection of $(d+1)$-subsets of a ground set $[n]$, called $d$-faces. For each $i < d$ we define a distribution $\D_{i}$ over $(i+1)$-subsets as follows: choose a $d$-dimensional face uniformly and then remove $d-i$ elements from this set at random. A set of size $i+1$ that has positive probability is called an $i$-face of the complex, and we denote the collection of $i$-faces by $X(i)$. We let $\D_d$ denote the uniform distribution on the top faces and remark that even though $\D_d$ is uniform, $\D_i$ need not be uniform, since some $i$-faces can be contained in more $d$-faces than others\footnote{A very recent work \cite{FriedgutI2020} shows how to construct regular high dimensional expanders, giving rise to uniform $\D_i$ for all $i$, and with parameters very similar to those of \cite{LubotzkySV2005-exphdx}. This was not available when this manuscript was completed yet it gives rise to perfectly regular double samplers which simplify some of the work done here.}. We also set $X(0)=[n]$ to be the ground set.

For every $s \in X(i)$ for $0\leq i \leq d-2$, we define the graph $X_s$:
\begin{itemize}
	\item The vertices are $x \in X(0)$ such that $s \cup \{x\} \in X(i+1)$ is a face (clearly $x\notin s$).
	\item Two vertices $x,y$ are connected by an edge if $s\cup\{x,y\}\in X(i+2)$.
	\item The weight of the edge $\{x,y\}$ is
	\[
	w_s(x,y) := \Pr_{t \sim \D_{i+2}}[t = (s \cup \{x,y\}) \,\mid t\, \supset s].
	\]
	
\end{itemize}

The edge weights $w_s$ define a marginal weight distribution on the vertices in $X_s$. Explicitly, for $x\in X_s$,
\[
w_s(x) = \sum_{y \sim x} w_s(x,y) = \Pr_{t \sim \D_{i+1}}[t \supset s \cup \{x\} \mid t \supset s] = \Pr_{(u,v) \sim (\D_{i+1},\D_{i})}[u = s \cup \{x\} \mid v = s].
\]
The graph $X_s$ is called the $1$-skeleton of the link of $s$ in literature on high dimensional expanders.

There are several different definitions of high dimensional expansion. For our purposes, the most relevant is the one-sided spectral expansion,

\begin{definition}[Spectral high dimensional expander (HDX)]
	A $d$-dimensional complex $X$ is said to be a {\em $\gamma$-spectral high dimensional expander}  if for every $i\leq d-2$, and every face $s\in X(i)$ the graph $X_s$ is an expander with $\lambda(X_s) \leq \gamma$.
\end{definition}
Where $\lambda(G)$ is the second largest eigenvalue in $G$, \emph{not} in absolute value.
In a previous version of this paper we relied on a slightly stronger definition of {\em two-sided} high dimensional expansion, but it turns out that the above (one-sided) definition suffices, and in fact gives us a slightly cleaner result.

Lubotzky, Samuels and Vishne~\cite{LubotzkySV2005-exphdx} constructed an explicit family of Ramanujan complexes.
\begin{theorem}[LSV]\label{thm:LSV}
	For every prime $q\in\mathbb{N}$ and dimension $d\in\mathbb{N}$ there is a sequence of $d$-dimensional complexes $\set{X_n}_n$ which are $\frac{1}{\sqrt{q}}$-spectral high dimensional expanders. The vertex set of $X_n$ has size $q^{cn}$ for all large enough $n\in\mathbb{N}$, for some constant $1<c \leq d^2$. $X_n$ is constructible in time $\poly(n)$ and satisfies the following,
	\begin{itemize}
		\item {\bf Bounded degree:} For each $i<d$ each face $s\in X_n(i)$ is contained in at most $D=q^{d^2}$ $d$-faces.
		\item {\bf Uniform top and bottom:} The distribution $\D_d$ on the top $d$-faces is uniform, and the distribution $\D_0$ on the vertices is uniform as well (but $\D_i$ isn't uniform for $0<i<d$).
	\end{itemize}
\end{theorem}

Let $X$ be a $d$-dimensional simplicial complex. Fix $d= m_2 -1> m_1-1 \ge 0$ and  define a graph as follows. Let $V_2 = X(m_2-1)$, $V_1 = X(m_1-1)$, and $V_0=X(0)$ and look at the inclusion graph $(V_2,V_1,V_0)$ together with the distribution $\Pi_1=\D_{m_1-1}$ on $V_1$ and uniform distributions on $V_2,V_0$.

We first prove that the above construction is a spectral version of double sampler, and then use this lemma to prove \prettyref{thm:doublesampler}.
\begin{lemma}[Spectral version of double sampler]\label{lem:spectral-ds}
	Let $X$ be a $d$-dimensional complex which is a $\gamma$-spectral high dimensional expander, and let $d = m_2 -1> m_1-1 \ge 0$.
	Let  $V_2 = X(m_2-1)$, $V_1 = X(m_1-1)$ and $V_0=X(0)$. Let $G_{2,1}= X(V_2,V_1)$ and $G_{1,0}= X(V_1,V_0)$ be the weighted bipartite inclusion graphs between $(V_2,V_1)$ and $(V_1,V_0)$ respectively, where the distributions over $V_2,V_1,V_0$ are the distributions $\D_2,\D_1,\D_0$ of $X$.
	The following spectral bounds hold,
	\begin{itemize}
		\item  $\lambda_2^\bip(G_{1,0})^2 \le {1/m_1} + O(m_1\gamma)$, and
		\item  $\lambda^\bip_2(G_{2,1})^2 \le  {{m_1}/{m_2}} + O(m_1m_2\gamma).$
	\end{itemize}
\end{lemma}

\begin{proof}
	The lemma follow essentially by combining the analysis of Dinur and Kaufman \cite{DinurK2017} of random walks on high dimensional expanders together with \cite{KaufmanO2020} that shows that the bounds in \cite{DinurK2017} hold also for the more general one-sided expansion case (which is what we defined above), we elaborate next.
	
	For every $i\in [d]$, let $A_i$ be the normalized adjacency matrix of the weighted bipartite graph between $X(i)$ and $X(i+1)$ (see \prettyref{sec:expanders}). The matrix $A_i$ can also be viewed as the random-walk operator, moving from an $i$-face to a random $(i+1)$-face containing it. In \cite[Theorem 5.4]{KaufmanO2020} the authors defined the operator $M^{+}_i\in\mathbb{R}^{{X(i)}\times {X(i)}}$, corresponding to a two-step random walk called the ``non-lazy upper walk''. That is, starting at an $i$-face, go to a random $(i+1)$-face $r$ containing it, and then to an $i$-face contained in $r$ that {\em isn't} the face you started with.
	
	The operator $A_i$  satisfies the following identity with $M^+_i$,
	\[ A_i^tA_i = \frac 1 {i+2} Id + \frac {i+1}{i+2} M^+_i\,.
	\]
	One can take this to be an explicit definition of $M_i^+ := \frac {i+2}{i+1}A_i^tA_i - \frac 1 {i+1}Id$. The bound proven in \cite[Theorem 5.4]{KaufmanO2020}  is that the second eigenvalue of $M_i^+$ is at most $(\frac{i}{i+1}+O(i\cdot \gamma))^{1/2}$, which means that $\lambda_2(A_iA_i^t) = \frac {i+1}{i+2} +O(i\cdot \gamma)$.
	We refer the reader to \cite{DinurK2017} to see more details on $M^+_i$ and how the relation between operators is derived.
	
	The normalized adjacency matrix of the graph $G_{2,1}$ can also be described by the operator $A_{m_1}A_{m_1+1}\cdots A_{m_2}$, i.e. by moving in a random-walk fashion from $m_1$-face to $(m_1+1)$-face to $(m_1+2)$-face and so on all the way to $m_2$-face. Composing the corresponding linear operators, we get that the second eigenvalue is
	\begin{align*}
	\lambda_2^\bip(G_{2,1})\leq& \Paren{\frac{m_1+1}{m_1+2}\cdot \frac{m_1+2}{m_1+3}\cdots \frac{m_2-1}{m_2} }^{1/2} + O(\gamma m_1 m_2) \\=& \Paren{\frac{m_1+1}{m_2}}^\frac{1}{2}O(\gamma m_1 m_2), 
	\end{align*}
	where the error term comes from taking all error terms for every $m_1\leq i\leq m_2$.
	
	The calculation in the case of $G_{1,0}$ is identical, only we go over all $0\leq i\leq m_1$.
	
\end{proof}
We are now ready to prove  \prettyref{thm:doublesampler}, which we restate for convenience.

\begin{reptheorem}{thm:doublesampler}{(restated)}
	For every $\alpha,\beta,\alpha_0,\beta_0>0$ there exist constants  $m_1,m_2,D\in \mathbb{N}$ such that  $m_1,m_2 = \poly(\frac{1}{\alpha\beta\alpha_0\beta_0}), D = \exp(\poly(\frac{1}{\alpha\beta\alpha_0\beta_0}))$, such that there is a family of explicitly constructible double samplers $(X_n,W_n)$ for infinitely many $n\in \mathbb{N}$ satisfying
	\begin{itemize}
		\item  $X_n=(V_2,V_1,V_0)$ is an inclusion graph, where $|V_0|=n$, $V_i \subseteq \binom{V_0}{m_i}$  for $i=1,2$.
		\item $X_n$ is an $((\alpha,\beta),(\alpha_0,\beta_0))$  double sampler.
		\item $|V_1|,|V_2| \le D \cdot n$.
		\item The distributions $\Pi_0,\Pi_2$ are uniform and the distribution $\Pi_1$ has irregularity at most $D$.
		\item For each $m\in \mathbb{N}$ there is some $n\in [m,Dm]$ such that the complex $X_n$ on $n$ vertices is constructible in time $poly(n)$.
	\end{itemize}
\end{reptheorem}

\begin{proof}
	We construct the double sampler from the LSV high dimensional expander promised by \prettyref{thm:LSV}.

	We start by choosing the parameters of the high dimensional expander, we choose $\gamma< 1/(m_1m_2)^2$ small enough so that the term $O(m_1m_2\gamma)$ is negligible with respect to $m_1/m_2$ and $1/m_1$.
	We choose $m_1,m_2$ so that $2/m_1 < \min(\alpha^2{\beta},\alpha_0^2{\beta_0}, ) $ and $2m_1/m_2 < \alpha^2{\beta}$. We set the dimension $d=m_2-1$ and define $D=\gamma^{-2 d^2}$. Summarizing:
	\begin{itemize}
		\item $m_1 = \max(1/2\alpha^2{\beta}, 1/2\alpha_0^2{\beta_0})$
		\item $m_2 = \frac{1}{2}m_1/\alpha^2{\beta}$
		\item $D \le \exp(\poly(m_2)) \le \exp(\poly(1/\alpha,1/\beta, 1/\alpha_0,1/\beta_0))$.
	\end{itemize}
	
	Let $X'$ be the $d$-dimensional $\gamma$-spectral high dimensional expander promised by \prettyref{thm:LSV}, with $\abs{X'(0)}=n\in [n',Dn']$. The theorem states it can be constructed in $\poly(n)$ time. Let $X=(V_2,V_1,V_0)$ be the double sampler defined by $V_2 = X'(d),V_2 = X'(m_1)$ and $V_0=X'(0)$, with the distribution $\Pi_1 = \mathcal{D}_{m-1}$ and uniform distributions on $V_2,V_0$.
	
	From \prettyref{thm:LSV}, each $s\in X'(0)$ is in at most $D$ $d$-faces, which means that the size of $V_2,V_1$ is bounded by $D n$. Moreover, the distributions $\mathcal{D}_d,\mathcal{D}_0$ are uniform, and $\mathcal{D}_{m_1}=\Pi_1$ has irregularity at most $D$ (as each $S\in X'(V_1)$ is contained in at most $D$ $d$-faces).
	
	\prettyref{lem:spectral-ds} proves that the bipartite graph $G_{2,1}$ between $V_2$ and $V_1$ is a spectral expander $\lambda_2^\bip(G_{2,1})^2 \le  {{m_1}/{m_2}} + O(m_1m_2\gamma)$. Our choice of $m_1,m_2,\gamma$  promises that $\lambda^\bip_2(G_{2,1})^2\leq \frac{1}{2}\alpha\beta^2$. According to \prettyref{claim:sampling}, $G_{2,1}$ is an $(\alpha,\beta)$-sampler.
	
	We are left with proving the local sampling property, i.e. showing that for each $T\in V_2$, the graph $X_{|T}$ is an $(\alpha_0,\beta_0)$ sampler.
	$X_{|T}$ is the bipartite graph whose one side is all subsets of $T$ of size $m_1$ and whose other side is all of the elements of $T$. Each subset is connected to its $m_1$ elements, so it is bi-regular and has uniform distribution. The claim on the eigenvalue follows either by invoking again the HDX machinery, or by a more direct argument, from considering the two-step walk and noticing it is a convex combination of the  identity matrix with probability $1/m_1\cdot (m_2-m_1)/(m_2-1)$ and the all-ones matrix (normalized) with remaining probability. Our choice of $m_1,m_2$ promises that $\lambda(X_{|T})\leq \frac{1}{2}\alpha_0^2\beta_0$, so by  \prettyref{claim:sampling} is it an $(\alpha_0,\beta_0)$ sampler.
\end{proof}

\section{Relation Between Sampler and Expansion}\label{sec:expanding subset}
In this section we look at sampler and expander graphs, show when a bipartite expander is also a sampler, and how can sampler derive an expander graph. The main difference between a sampler graph and an expander is that expanders are a \emph{worst case} definition, whereas samplers allow exceptions from the expansion requirements.

\subsection{From spectral gap to a sampler}
We show that every bipartite expander graph is also a sampler graph, and calculate the relation between the expansion and the sampler parameters.

We use a variant of the expander mixing lemma from \cite{DinurK2017}, \prettyref{claim:EML} below, to deduce the sampler property from the spectral gaps. The proof of this claim is very similar to the proof of the expander mixing lemma.
\begin{claim}{\cite[Proposition 2.8]{DinurK2017}}\label{claim:EML}
	Let $(G=(U,V,E),W)$ be a weighted bipartite graph with edge weights $W=\{w_e\}_{e\in E}$,let $f:V\to [0,1]$ and $g:U\to[0,1]$. Then
	$$ \left|\E_{(u,v) \sim E} [ f(v)g(u) ] - \E[f]\E[g] \right| \le \lambda_2^{\bip}(G) \sqrt{\E [f] \E [g]}.
	$$
	Where the expectations are over the weights of the edges and vertices.
\end{claim}

\begin{claim}\label{claim:sampling}
	A weighted bipartite graph $(G=(U,V,E),W)$ with $\lambda = \lambda_2^{\bip}(G)$ is an $(\alpha, \frac {2\lambda^2}{\alpha^2})$ sampler.
	In other words, to get an $(\alpha,\beta)$ sampler, it suffices to take a graph with $\lambda < \frac{1}{2}\alpha\sqrt \beta $.\end{claim}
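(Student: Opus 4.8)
The plan is to read the sampling property straight off the expander-mixing-type inequality of \prettyref{claim:EML}, using a first-moment (Markov) argument applied separately to the vertices that over- and under-estimate the global average. Fix $f:V\to[0,1]$ and write $\mu=\E_v[f(v)]$ for its average under the $V$-marginal of the edge distribution, and $\mu_u=\E_{v|u}[f(v)]$ for the local average at a vertex $u\in U$. Since the $U$-marginal of the edge distribution is exactly the distribution with which the graph is viewed as a sampler, it suffices to bound the mass of $\set{u\in U:\abs{\mu_u-\mu}\ge\alpha}$ by $\lambda^2/\alpha^2$.

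First I would bound $A^+=\set{u\in U:\mu_u-\mu\ge\alpha}$. Write $\beta$ for its mass under the $U$-marginal and apply \prettyref{claim:EML} with $g=\Ind_{A^+}$. Using $\E_{uv\in E}[f(v)g(u)]=\E_u[g(u)\mu_u]\ge(\mu+\alpha)\beta$ together with $\E[g]=\beta$, the claim gives
\[ \alpha\beta\;\le\;\E_{uv\in E}[f(v)g(u)]-\mu\beta\;\le\;\lambda\sqrt{\mu\beta}, \]
whence $\beta\le\lambda^2\mu/\alpha^2$. Then I would apply the identical estimate to the function $1-f$, whose global average is $1-\mu$ and whose local averages are $1-\mu_u$; this bounds $A^-=\set{u\in U:\mu_u-\mu\le-\alpha}$ by $\lambda^2(1-\mu)/\alpha^2$. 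Adding the two bounds, the total mass of the bad set is at most $\lambda^2\bigl(\mu+(1-\mu)\bigr)/\alpha^2=\lambda^2/\alpha^2$, which is the asserted $\delta$. The reformulation is then immediate: $\lambda<\alpha\sqrt\delta$ forces $\lambda^2/\alpha^2<\delta$.

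I do not expect a genuine obstacle here — this is the standard ``spectral gap implies sampler'' argument, and the stated constant is in fact not tight (a second-moment/Chebyshev variant would yield $\lambda^2/(4\alpha^2)$), so there is slack to spare. The only points deserving a line of care are: bookkeeping the orientation of the bipartite graph, which is harmless since $\lambda$ is unchanged under swapping $U$ and $V$, so the distribution may be taken to live on either side; and checking that $\E_u[g(u)\mu_u]$ really equals $\E_{uv\in E}[f(v)g(u)]$, which is immediate from the way \prettyref{claim:EML} normalizes the edge and vertex marginals (so that conditioning the edge distribution on its $U$-endpoint picks a random neighbor exactly as in the sampler definition).
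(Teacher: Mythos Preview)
Your argument is correct and follows essentially the same route as the paper: split the bad set into over- and under-estimators and apply the expander-mixing inequality of \prettyref{claim:EML} with the indicator of each part. The one minor difference is that the paper handles the under-estimating set $A$ by using the other direction of the absolute value in \prettyref{claim:EML} (still with $f$), obtaining $\Pr[A]\le\lambda^2\mu/\alpha^2$ just as for $B$, whereas you instead apply the upper-tail bound to $1-f$; your variant has the pleasant side effect that the two bounds $\lambda^2\mu/\alpha^2$ and $\lambda^2(1-\mu)/\alpha^2$ sum exactly to $\lambda^2/\alpha^2$, matching the stated constant on the nose.
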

\begin{proof}
	Let $f:V\to[0,1]$ have $\E[f]= \eta$. Let $A$ be the set of vertices that see too little of $f$
	\[ A = \sett{ u\in U}{ \E_{v\sim u}[f(v)] < \eta-\alpha}, \]
	recall that $v\sim u$ is a random neighbor of $u$.
	Similarly, let $B$ be the set of vertices that see too much of $f$,
	\[ B = \sett{ u\in U}{ \E_{v\sim u}[f(v)] >  \eta+\alpha }. \]
	We will show $\Pr[A]+\Pr[B] \le 2\lambda^2 \eta/\alpha^2$. Write
	\[ (\eta + \alpha)\Pr[B] \le \E_{(u,v)\sim W}[ f(v) 1_B(u)] \le \E[f]\Pr[B] + \lambda \sqrt{\E[f]\Pr[B]}
	\] where the first inequality is by definition of $B$ and the second inequality is relying on \prettyref{claim:EML}. Dividing both sides by $\sqrt{\E[f]\Pr[B]}$ and rearranging, we get
	$\frac{\Pr[B]}{\E[f]} \le \lambda^2/\alpha^2$ so $\Pr[B] \le \lambda^2 \eta/\alpha^2 $.
	
	Similarly for $A$, by \prettyref{claim:EML}
	\[|\E[f]\Pr[A] - \E_{(u,v)\sim W}[ f(v) 1_A(u)] |\le \lambda \sqrt{\E[f]\Pr[A]}
	\] so
	\[\E[f]\Pr[A] - \lambda \sqrt{\E[f]\Pr[A]}
	\le \E_{(u,v)\sim W}[ f(v) 1_A(u)] \le  (\eta - \alpha)\Pr[A]\]
	and again we get $\Pr[A] \le \lambda^2\eta/\alpha^2$.
	
	The function $f:V\rightarrow[0,1]$ has a maximum value $1$, so $\E[f]=\eta\leq 1$ and $\Pr[A],\Pr[B] \le \frac{\alpha^2\eta}{\lambda^2}\leq \frac{\alpha^2}{\lambda^2}$.
	If we want an $(\alpha,\beta)$ sampler, we choose $\lambda < \frac{1}{2}\alpha\sqrt \beta $ to get $2\frac{\alpha^2}{\lambda^2} < \beta$.
\end{proof}

\subsection{From sampler to spectral gap}\label{sec:induced-expander}
A sampler graph is not necessarily an expander, an expander graph doesn't have even a single disconnected vertex, whereas a sampler graph can tolerate a small number of less connected vertices. Nevertheless, we prove in \prettyref{thm:induced-expander} that the two-step random walk over a sampler graph contains a large expander. In fact, we prove that if $G$ is the two-step random walk over a sampler graph (see \prettyref{sec:G2} for the definition), then every large set of vertices in $G$ contains an expanding subgraph.

We restate the theorem for convenience.

 \begin{theorem}[\prettyref{thm:induced-expander} restated]
 	Let $\alpha, \eta, \beta \in (0,1)$ be constants such that $\alpha,\beta < \frac{\eta^2}{100}$.
 	Let $(G_{samp} = (V_2,V_1,E_{s}), W_S)$ be an $(\alpha,\beta)$ sampler.
 	Let $(G=(V_2,E),W)$ be the two-step walk of $G_{samp}$.
 	Then for every set $A \subseteq V_2$ with $\mu_{G}(A) =\eta$, there exists a set $B \subseteq A$ such that:
 	\begin{itemize}
 		\item $\mu_G(B) \ge \frac{\eta}{4}$.
		
 		\item Let $G_B$ be the induced graph of $G$ on $B$ with the same edge weights, then $\lambda_2(G_B) \le \frac{99}{100}$.
 	\end{itemize}
 	Furthermore, given $A$ \prettyref{alg:expanding-subgraph} finds such set $B$ in polynomial time in $\abs{V}$.
 \end{theorem}

We first present the algorithm, then prove its correctness. The idea of \prettyref{alg:expanding-subgraph} is based on \cite{DinurG2018}. In the algorithm, we gradually remove sparse cuts form $G_B$ until reaching an expanding subgraph. We find the sparse cuts using the proof of Cheeger inequality, which is constructive. The proof of the theorem uses the fact the large sets in $G$ expand, see \prettyref{claim:sampler mixing lemma}.

\begin{algorithm}[Finding an expanding subgraph]\label{alg:expanding-subgraph}
	
	The algorithm receives a graph $(G=(V,E),W)$, and a  subset $A\subset V$. The output is a subset $B\subset A$.
	
	\begin{description}
		\item[Initialization]: Set $i=0$, $A_0=A$ and let $(G_0=(A_0,V_0),W_0)$ be the subgraph induced by $A_0$ with the same edge weights $W$.
		
		\item[Graph Improvement]: While $\lambda_2(G_i)\geq \frac{99}{100}$:
		\begin{enumerate}
			\item Find a cut $(U_i,A_i\setminus U_i)$ in $G_i$ such that $\mu_{G_i}(E(U_i,A_i\setminus U_i))\leq\sqrt{2(1-\lambda(G_i))}\mu_{G_{i}}(U_i)$. Let $U_i$ be the smaller part of the cut, i.e. $\mu_{G_i}(U_i)\leq\frac{1}{2}$. See \cite{Chung2005} for the algorithm.
			\item Set $A_{i+1} = A_i\setminus U_i$ and let $(G_{i+1}=(A_{i+1},E_{i+1}),W_{i+1})$ be the subgraph induced by $A_{i+1}$ with the same edge weights $W$.
			\item Increase $i$ to $i+1$.
		\end{enumerate}
		\item[Output:] $B=A_i$.
	\end{description}
\end{algorithm}

\begin{proof}
	We start with the runtime of the algorithm. Each iteration in the loop runs an algorithm for finding a sparse cut, which takes polynomial time. The number of iterations in the loop is bounded by $\abs{A}$, because for each iteration $i$, $A_{i+1}\subsetneq A_i$. Therefore, the total runtime of the algorithm is polynomial in $\abs{G}$.
	
	Let $l$ be the number of steps performed by the algorithm. The output of the algorithm is always expanding, i.e. $G_l$ satisfies $\lambda_2(G_l)\leq \frac{99}{100}$. It remains to show that $\mu_G(A_l)\geq\frac{\eta}{4}$.
	
	Assume towards contradiction that $\mu_G(A_l)< \frac{\eta}{4}$, this implies that $\mu_G(U_1\cup\dots\cup U_{l-1})\geq \frac{3}{4}\eta$. The sets $U_i$ are always set to be the smaller part in the partition $(U_i,A_i\setminus U_{i})$, so $\mu_{G_i}(U_i)\leq \frac{1}{2}$. In the graph $G$, this implies that \[ \mu_G(U_i)= \Pr_{u\sim W}[u\in U_i]
	\leq \Pr_{u\sim W}[u\in A_0] \Pr_{u\sim W}[u\in U_i | u\in A_0]
	\leq \eta \frac{1}{2}.\]
	Therefore, there must be $j\in [l-1]$ such that $\mu_G(U_1\cup\dots\cup U_{j})\in [\frac{\eta}{4},\frac{3\eta}{4}]$. Denote $U=U_1\cup\dots\cup U_{j}$.
	
	We show a contradiction by upper bounding and lower bounding the fraction of edges between $U$ and $A_0\setminus U$ in $G$.
	
	\paragraph{Lower Bound:} From the variant of the expander mixing lemma on $G$, \prettyref{claim:sampler mixing lemma}
	\[ \mu_G(E(A_0\setminus U,U)) \geq (\mu_G(A_0\setminus U) - \beta)(\mu_G(U)-\alpha) \geq (\frac{\eta}{4} - \beta)(\frac{\eta}{4}-\alpha),\]
	where we used the fact that $\mu_G(U),\mu_G(A_0\setminus U)\in \Brac{\frac{\eta}{4},\frac{3\eta}{4}}$.
	\paragraph{Upper Bound:} Let $h=\sqrt{2(1-\frac{99}{100})}$, in particular $h\leq \frac{1}{20}$.
	We upper bound the cut $(U,A_0\setminus U)$ in $G$ by showing that it is contained in the union of all of the cuts used by the algorithm. More explicitly, we show that $E(U,A_0\setminus U) \subset \bigcup_{i=1}^{j}E(U_i, A_i\setminus U_i)$.
	
	Let $e=(u,v)$ be an edge in $E(U,A_0\setminus U)$. Then there exists some $i\leq j$ such that $u\in U_i$. By definition, $A_i = A_0\setminus (\cup_{t<i}U_t)$, so
	$A_0\setminus U \subset A_i\setminus U_i$, and $v\in A_i\setminus U_i$.
	
	The cuts $E(U_i, A_i\setminus U_i)$ are disjoint, so the inclusion implies
	\begin{align}
	\mu_G(E(U,A_0\setminus U))
	\leq \sum_{i=1}^j\mu_G(E(U_i,A_{i}\setminus U_i)). \label{eq:up_bound}
	\end{align}
	
	Each cut $(U_i,A_i\setminus U_i)$ is a sparse cut in $G_i$, and satisfies $\mu_{G_{i}}(E(U_i,A_i\setminus U_i)) \leq h\mu_{G_{i}}(U_i)$. 	We can translate this inequality into an inequality with $\mu_G$ instead of $\mu_{G_i}$, by using the fact that $G_i$ is an induced subgraph of $G$. Using \prettyref{claim:trans measure} we get,
	\begin{align*}
	\mu_G(E(U_i,A_i\setminus U_i))\leq h\mu_{G_{i}}(U_i) \leq  h\mu_G(E(U_i,A_i)).
	\end{align*}
	Using this bound in \prettyref{eq:up_bound}, we get
	\[ \mu_G(E(U,A_0\setminus U))
	\leq \sum_{i=1}^j h\mu_G(E(U_i,A_i)) \leq h\mu_G(E(U,A_0)) .\]
	The last inequality holds since $A_i\subset A_0$ for every $i$, and $U$ is partitioned into $U_1,\ldots ,U_j$. This implies that
	$\bigcup_{i=1}^j E(U_i,A_i)\subset E(U,A_0)$, and that the sets in the union are disjoint.
	
	We use again the variant of the expander mixing lemma on the graph $G$, \prettyref{claim:sampler mixing lemma},
	\[ \mu_G(E(U,A_0)) \leq \mu_G(U)\Paren{\mu_G(A_0) + \alpha} + \beta \leq \frac{3\eta}{4}(\eta + \alpha) + \beta. \]
	Which means that
	\[ \mu_G(E(U,A_0\setminus U)) \leq h\Paren{\frac{3\eta}{4}(\eta + \alpha) + \beta}. \]
	Since $h<\frac{1}{20}$ and $\alpha,\beta\leq\frac{\eta^2}{100}$, we reach a contradiction.
\end{proof}

We are left with proving the ``translation'' between $\mu_G$ and $\mu_{G_i}$.

\begin{claim}\label{claim:trans measure}
	Let $(G=(V,E),W)$ be a weighted graph, and let $(G' = (V',E'),W')$ be an induced subgraph inheriting the weights of $G$. Then for every $V''\subset V',E''\subset E'$:
	\[ \mu_{G'}(E'') = \frac{\mu_G(E'')}{\mu_G(E')}, \quad \mu_{G'}(V'') = \frac{\mu_G(E(V'',V'))}{\mu_G(E')}. \]
\end{claim}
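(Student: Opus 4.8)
The plan is to prove both identities by directly unwinding \prettyref{def:vertex weight}: passing from $G$ to its induced subgraph $G'$ only rescales the measure $\mu$ by a fixed constant, so every ratio is preserved. Write $w(F):=\sum_{e\in F}w_e$ for a set of edges $F$. Since $G'=(V',E')$ is the induced subgraph inheriting the weights, $E'=\{e\in E:e\subseteq V'\}$ and $w'_e=w_e$ for all $e\in E'$, and by definition the edge distribution assigns $\mu_G(F)=w(F)/w(E)$ to $F\subseteq E$ and $\mu_{G'}(F)=w(F)/w(E')$ to $F\subseteq E'$. Hence for $E''\subseteq E'$,
$$\frac{\mu_G(E'')}{\mu_G(E')}=\frac{w(E'')/w(E)}{w(E')/w(E)}=\frac{w(E'')}{w(E')}=\mu_{G'}(E''),$$
which is the first identity; the global normalizer $w(E)$ simply cancels.

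For the second identity I would recall that the vertex weight of $v$ is the sum of the weights of its incident edges, so in $G'$ only the edges lying inside $V'$ contribute: $w'_v=\sum_{u\in V'}w_{vu}$ for $v\in V'$. Summing over $v\in V''$ gives $\sum_{v\in V''}w'_v=\sum_{v\in V'',\,u\in V'}w_{vu}$, which is exactly the total weight $G$ puts on the edges between $V''$ and $V'$ (an edge with both endpoints in $V''$ being counted twice), i.e.\ the quantity whose normalized version is $\mu_G(E(V'',V'))$. Likewise $\sum_{v\in V'}w'_v=\sum_{v,u\in V'}w_{vu}=2w(E')$, which is both the total vertex weight of $G'$ and, after the same normalization, $\mu_G(E(V',V'))=\mu_G(E')$. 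Dividing the two sums, the $G$-normalizer cancels, giving
$$\mu_{G'}(V'')=\frac{\sum_{v\in V''}w'_v}{\sum_{v\in V'}w'_v}=\frac{\mu_G(E(V'',V'))}{\mu_G(E')},$$
as claimed.

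The only point requiring care --- and the main (mild) obstacle --- is the convention for $E(A,B)$ when $A,B$ are not disjoint, as happens here with $V''\subseteq V'$: one must use the convention under which the weight of $E(A,B)$ is $\sum_{a\in A,\,b\in B}w_{ab}$ (so an edge internal to $A\cap B$ is counted twice), which is precisely the convention implicit in the definition of the vertex marginal of $\mu_G$. Once this is fixed, both equalities are immediate from the definitions, and there is no further content.
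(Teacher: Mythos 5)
Your proof is correct and takes essentially the same route as the paper's: both simply unwind the definitions, with $\mu_{G'}$ appearing as the $\mu_G$-conditional measure given $E'$ (the paper phrases this via $\Pr_{e\sim W}[\cdot\mid e\in E']$, you phrase it via explicit weight sums whose global normalizer $w(E)$ cancels). Your observation that one must interpret $E(V'',V')$ so that edges internal to $V''$ are counted twice is the right subtlety; the paper's proof handles the same issue implicitly by working with oriented edges $(u,v)\sim W$ and equating the event $u\in V''$ with $(u,v)\in E(V'',V')$ under the conditioning $(u,v)\in E'$, which amounts to exactly the double-counting convention you spell out.
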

\begin{proof}
	By definition (recall \prettyref{sec:vertex weight}), $\mu_{G'}(E'')$ is the probability to pick a random edge from $E''$ when picking a random edge in $G'$. The weights in $G'$ are the same as in $G$.
	\[ \mu_{G'}(E'') = \Pr_{e\sim W'}[e\in E''] = \Pr_{e\sim W}[e\in E'' | e\in E'] = \frac{\Pr_{e\sim W}[e\in E'']}{\Pr_{e\sim W}[e\in E']} =  \frac{\mu_G(E'')}{\mu_G(E')}. \]
	For vertex weights, the weight of every vertex is the sum of its adjacent edges. The weight of a vertex $v\in V''$ in $G'$ is the sum of its adjacent edges in $G'$.
	\begin{align*}
	\mu_{G'}(V'') =& \Pr_{(u,v)\sim W'}[u\in V''] \\
	=& \Pr_{(u,v)\sim W}[u\in V'' | (u,v)\in E'] \\
	=& \Pr_{(u,v)\sim W}[(u,v)\in E(V'',V') | (u,v)\in E'] \\
	=&  \frac{\mu_G(E(V'',V'))}{\mu_G(E')}.
	\end{align*}	
\end{proof}

\section{List-Decoding of Unique Games over Expanders}
\label{sec:list ug}

The unique games algorithm takes a solvable unique games instance and outputs a single solution. We want a list-decoding algorithm, so we need a list of all possible solutions. We do so by running the unique games algorithm of \cite{MakarychevM2010} multiple times, and removing the solution after each time. We restate the theorem for convenience.

 \begin{theorem}[\prettyref{thm:pre UG} restated]
 	Let $(G=(V,E),W)$ be a weighted undirected graph with
 	$\lambda_2(G)\leq\frac{99}{100}$.
 	Let $\set{\pi_e}_{e \in E}$ be unique constraints over the edges of $G$, with $\ell$ labels.
	
 	Then, there is an absolute constant $c>1$ and a polynomial time algorithm, \prettyref{alg:list-ug}, that on input $(G=(V,E),W),\set{\pi_e}_{e \in E}$ outputs a list of assignments $L=\set{a^{(1)},\dots, a^{(t)}}$, with $a^{(i)}:V \to [\ell]$.
 	The list satisfies that for every assignment $a:V\rightarrow[\ell]$ that satisfies $1-\eta$ of the constraints for $\eta <c^{-\ell-1}$,
 	there exists $a^{(i)}\in L$ that satisfies $\Pr_{v\sim W}[a(v) = a^{(i)}(v)]\geq 1-\eta c^{\ell}$.
 \end{theorem}

The constant $c$ is derived from the constant of the unique games algorithm from  \cite{MakarychevM2010}, see \prettyref{thm:unique games}. For $c',C'$ the constants in \prettyref{thm:unique games}, we set $c=\max\set{\frac{100}{c'}, 101(1+50C')}$.

\begin{algorithm}[List decoding unique games]\label{alg:list-ug}
	The algorithm receives a weighted constraint graph $G=(V,E), W=\set{w_e}_{e\in E}, \set{\pi_e}_{e\in E}$ and returns a list of assignments $L=\set{a^{(1)},\dots a^{(t)}}$, $a^{(i)}:V\rightarrow[\ell]$.
	\begin{description}
		\item[Initialization]: Set $i=1$, and set $\pi^{(1)}_e = \pi_e$ for every $e\in E$.
		\item[Solving unique constraints]: Repeat
		\begin{enumerate}
			\item Use the unique games algorithm from \cite{MakarychevM2010} (see \prettyref{thm:unique games}) on the graph $G$ with constraints $\set{\pi^{(i)}}_{e\in E}$.
			\item If the algorithm didn't return a solution, quit the loop.
			\item Otherwise, let $a^{(i)}:V\rightarrow[\ell-i+1]$ to be the solution.
			\item Let $\set{\pi^{(i+1)}}_{e\in E}$ be the constraints after removing $a^{(i)}$ (see details after the algorithm). For every edge $e$, $\pi^{i+1}(e):[\ell-i+1] \rightarrow[\ell-i+1]$.
			\item Set $i = i+1$ and repeat.
		\end{enumerate}
		\item[Output]: Output $L=a^{(1)},\ldots, a^{(i-1)}$, written as assignments from $V$ to $[\ell]$.
	\end{description}
	Removing the assignment $a$ from $\pi:[j]\rightarrow[j]$, getting $\pi':[j-1]\rightarrow[j-1]$ is done as follows:
	\begin{itemize}
		\item For every vertex $v$, reorder the elements such that $a(v)=j$.
		\item If $a$ satisfies $\pi$, i.e. $\pi(j) = j$, then $\pi'$ is equal to $\pi$ restricted to $[j-1]$.
		\item Otherwise, there exist $i,l\neq j$ such that $\pi(l) = j$ and $ \pi(j) = i$. Set $\pi'(l) = i$, and the rest is identical to $\pi$.
	\end{itemize}
\end{algorithm}
Transforming $a^{(j)}:V\rightarrow [\ell - j +1]$ to $a^{(j)}:V\rightarrow [\ell]$ is done by reversing the permutations done in the iterations of the loop. On each iteration of the loop, the algorithm removes the previous solution from all of the constraints by reordering the elements in $[\ell]$. In the output, the algorithm translates back each solution to be $a^{(j)}:V\rightarrow[\ell]$ by reversing the order changes.

Before proving \prettyref{thm:pre UG}, we prove the following simple claim. If $a,a'$ are two assignments satisfying almost all of the constraints in an expander graphs, then they must be either almost identical or completely different.
\begin{claim} \label{claim:equal or dis}
	Let $G$ be a graph with $\lambda_2(G)\leq\frac{99}{100}$, and let $a,a':V\rightarrow[\ell]$ be two assignments satisfying $1-\eta$ and $1-\eta'$ of the constraints in $G$. Then, either $\Pr_{v\in V}[a(v) = a'(v)]\geq 1-50(\eta'+\eta)$ or $\Pr_{v\in V}[a(v) = a'(v)]\leq 50(\eta' + \eta)$.
\end{claim}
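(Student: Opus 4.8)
The plan is to control the $\mu_G$-measure $p$ of the \emph{agreement set} $B := \set{v\in V : a(v) = a'(v)}$, where $\mu_G$ is the vertex distribution induced by $W$. Concretely, I would prove that $p(1-p)\le c_0(\eta+\eta')$ for an absolute constant $c_0$; solving this quadratic inequality (using $1-p\ge 1/2$ when $p\le 1/2$, and symmetrically in the other regime) forces $p$ to lie within $O(\eta+\eta')$ of $0$ or of $1$, which is exactly the claimed dichotomy up to the value of the constant.

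The first ingredient is a combinatorial closure property: if an edge $e=(u,v)$ is satisfied by \emph{both} $a$ and $a'$, i.e. $\pi_e(a(u))=a(v)$ and $\pi_e(a'(u))=a'(v)$, then since $\pi_e$ is a permutation (in particular injective) we get $a(u)=a'(u)$ if and only if $a(v)=a'(v)$; thus $e$ has both endpoints inside $B$ or both outside $B$. Taking the contrapositive, every edge of the cut $(B,V\setminus B)$ is violated by $a$ or by $a'$, so a union bound over the two sets of violated edges gives $\mu_G\big(E(B,V\setminus B)\big)\le \eta+\eta'$.

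The second ingredient is the spectral gap of $G$. Since $\lambda(G)\le \tfrac{99}{100}$, the second-smallest eigenvalue of the normalized Laplacian $L=I-A$ is at least $1-\lambda(G)\ge \tfrac1{100}$. I would apply the variational characterization to the vector $x:=\mathbf{1}_B - p\,\mathbf{1}$, which is orthogonal to the all-ones (top) eigenvector in the $W$-weighted inner product: this yields $\langle x, Lx\rangle \ge \tfrac1{100}\langle x,x\rangle = \tfrac1{100}\,p(1-p)$. A routine identity (the Laplacian annihilates the constant part of $x$, and its quadratic form on an indicator equals half the $\mu_G$-measure of the indicator's edge boundary) identifies $\langle x,Lx\rangle$ with a fixed constant times $\mu_G(E(B,V\setminus B))$. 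Combining this with the first ingredient gives $p(1-p)\le c_0(\eta+\eta')$ and hence the dichotomy after adjusting constants.

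The step I expect to be the real obstacle --- and the reason one cannot simply quote the expander mixing lemma \prettyref{claim:EML} --- is the spectral estimate: here $\lambda(G)$ is a \emph{constant} close to $1$ (namely $\tfrac{99}{100}$), so the mixing-lemma bound $\big|\Pr[\text{edge crosses }(B,V\setminus B)] - 2p(1-p)\big|\le \lambda(G)$ is completely vacuous. One genuinely has to use the Laplacian spectral gap through a Rayleigh-quotient (equivalently, a Cheeger-type edge-expansion $h_G\ge 1-\lambda(G)$) argument to turn ``a constant-size spectral gap'' into ``the cut has measure at least a constant fraction of $p(1-p)$''. After that, the only remaining work is bookkeeping the absolute constants carefully so that they come out as in the statement.
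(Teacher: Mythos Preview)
Your proposal is correct and follows essentially the same line as the paper's proof: the paper defines the disagreement set $D=\{v:a(v)\neq a'(v)\}$ (your $V\setminus B$), uses the same permutation observation to bound $\mu(E(D,V\setminus D))\le \eta+\eta'$, and then invokes the easy direction of Cheeger's inequality from $\lambda(G)\le\tfrac{99}{100}$ to force $\min\{\mu(D),\mu(V\setminus D)\}\le 50(\eta+\eta')$. Your Rayleigh-quotient argument is precisely the proof of that Cheeger direction, so the two approaches coincide (up to constants, which are already loose in both).
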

\begin{proof}
	Let $D\subset E$ be the set of disagreeing vertices,
	\[ D = \sett{v\in V}{a(v)\neq a'(v)}. \]
	The constraints in $G$ are unique, so for every edge $(v_1,v_2)\in E$ if both $a,a'$ satisfies the edge constraint, and $a(v_1) = a'(v_1)$, then it must be that $a(v_2) = a'(v_2) = \pi_{(1,2)}(a(v_1))$.
	
	Therefore, if an edge $(v_1,v_2)$ has $v_1\notin D,v_2\in D$,  it is not possible that both $a,a'$ satisfy it. This gives a bound on the cut $D,V\setminus D$,
	\begin{align*}\label{eq:sparse disagree cut}
	\mu(E(D,V\setminus D))\leq\eta + \eta'.
	\end{align*}
	The second largest eigenvalue of $G$ is at most $\frac{99}{100}$, so its edge expansion is at least $\frac{1}{50}$. By Cheeger inequality.
	\[ \mu(E(D,V\setminus D)) \geq \frac{1}{50}\min\set{\mu(D),\mu(V\setminus D)}. \]
	This means that $\min\set{\mu(D),\mu(V\setminus D)} \leq 50(\eta+\eta')$, which finishes the proof.
\end{proof}

\begin{proof}[Proof of \prettyref{thm:pre UG}]
	Let $a:V\rightarrow[\ell]$ be an assignment satisfying $1-\eta$ of the constraints of $(G,W)$.
	For $c',C'$ the constants in \prettyref{thm:unique games}, we set $c=\max\set{\frac{100}{c'}, 101(1+50C')}$.
	
	Denote by $t$ the number of solutions in $L$, when running on $(G,W),\set{\pi_e}_{e\in E}$. For every $i\in[t]$, recall $\set{\pi_e^{(i)}}_{e\in E}$ are the constraints used in the $i$th step of the algorithm runtime. Let $\eta_i$ be the fraction of the constraints in the $i$th round unsatisfied by the assignment $a$:
	\[ \eta_i = \Pr_{(u,v)\sim E}[a(u)\neq \pi^{(i)}_{u,v}(a(v))]. \]
	
	In the following claim we show that if $a^{(i)}$ is very different than $a$, then after removing $a^{(i)}$ the assignment $a$ still satisfies a large fraction of the new constraints $\pi^{(i+1)}$.

	\begin{claim}
		If $\eta_i\leq \frac{100}{c}$ and
		$\Pr_u[a(u) = a^{(i)}(u)]\leq \frac{1}{2}$, then $\eta_{i+1}\leq c\eta_{i}$.
	\end{claim}
	\begin{proof}
		The assignment $a$ on $(G,W)$ with constraints $\pi^{(i)}$, satisfies $1-\eta_i$ of the constraints. If $\eta_i \leq\frac{100}{c}$, then the unique games algorithm, \prettyref{thm:unique games}, outputs an assignment $a^{(i)}$ satisfying at least $1-50C'\eta_i$ of the constraints.
		
		The constraints $\pi^{(i+1)}$ are created from $\pi^{(i)}$ by removing the labels of $a^{(i)}$. For every edge $(u,v)$, if $a,a^{(i)}$ differs on \emph{both} endpoints of the edge, then removing $a^{(i)}$ won't ``ruin'' the constrain for $a$.

		This gives us a bound on $\eta_{i+1}$,
		\begin{align*}
		\eta_{i+1} =& \Pr_{(u,v)\sim E}[a(u)\neq \pi^{(i+1)}_{u,v}(a(v))]\\
		\leq& \Pr_{(u,v)\sim E}[a(u)\neq \pi^{(i)}_{u,v}(a(v))] + \Pr_{(u,v)\sim E}[a(u) = a^{(i)}(u) \vee a(v) = a^{(i)}(v)]\\
		\leq&  \eta_{i}+ 100(\eta_{i} + 50C'\eta_{i})\leq c\eta_i,
		\end{align*}
		where the last inequality is by applying \prettyref{claim:equal or dis} on the assignments $a,a^{(i)}$. The constant $c$ such that $c>101(1 + 50C')$.
	\end{proof}
	
	We now show that $L=\set{a^{(i)},\dots a^{(t)}}$ contains an assignment close to  $a$. Assume towards contradiction that for all $i\in[t]$, $\Pr_u[a(u) = a^{(i)}(u)]\leq \frac{1}{2}$, then by the above claim
	\[ 		\eta_t =\Pr_{(u,v)\sim E}[a(u)\neq \pi^{(t)}_{u,v}(a(v))]
	\leq c \eta_{t-1}
	\leq c^2 \eta_{t-2}\leq \cdots
	\leq c^{t}\eta.  \]
	
	Since $\eta <c^{-\ell-1}$, $\eta c^\ell \leq \frac{1}{c}\leq \frac{c'}{100}$ the unique games algorithm should have outputted a solution and not stopped at $t$, reaching a contradiction.
	
	Therefore, there must be $i\in[t]$ such that $\Pr_u[a(u) = a^{(i)}(u)]> \frac{1}{2}$ and for each $i'<i$, $\Pr_u[a(u) = a^{(i')}(u)]< \frac{1}{2}$. The claim above implies that $\eta_i \leq c^{i-1}\eta$. The unique games algorithm outputs $a^{(i)}$ which satisfies at least $1-50C'\eta_i \geq 1-50C'\eta_i$ of the constraints, so by \prettyref{claim:equal or dis}
	\[ \Pr_u[a(u) \neq a^{(j)}(u)]\leq 50(\eta_i+ 50C\eta_i)\leq c^i \eta_i \leq c^{\ell}\eta. \]
	which finishes the proof.
\end{proof}

\bibliographystyle{prahladhurl}

\bibliography{list_dec_with_DS-bib}

\begin{thebibliography}{AKKSTV08}

\bibitem[ABNNR92]{AlonBNNR1992}
\textsc{Noga Alon}, \textsc{Jehoshua Bruck}, \textsc{Joseph Naor}, \textsc{Moni
  Naor}, and \textsc{Ron~M. Roth}.
\newblock \href{http://doi.org/10.1109/18.119713} {\emph{Construction of
  asymptotically good low-rate error-correcting codes through pseudo-random
  graphs}}.
\newblock IEEE Trans.\ Inform.\ Theory, 38(2):509--516, 1992.

\bibitem[AJQST20]{AlevJQST2020}
\textsc{Vedat~Levi Alev}, \textsc{Fernando~Granha Jeronimo}, \textsc{Dylan
  Quintana}, \textsc{Shashank Srivastava}, and \textsc{Madhur Tulsiani}.
\newblock \href{http://doi.org/10.1137/1.9781611975994.85} {\emph{List decoding
  of direct sum codes}}.
\newblock In \emph{Proc.\ $31$st Annual {ACM}-{SIAM} Symp.\ on Discrete
  Algorithms (SODA)}, pages 1412--1425. 2020.
\newblock \href{http://arxiv.org/abs/2011.05467}{\path{arXiv:2011.05467}}.

\bibitem[AKKSTV08]{AroraKKSTV2008}
\textsc{Sanjeev Arora}, \textsc{Subhash Khot}, \textsc{Alexandra Kolla},
  \textsc{David Steurer}, \textsc{Madhur Tulsiani}, and \textsc{Nisheeth~K.
  Vishnoi}.
\newblock \href{http://doi.org/10.1145/1374376.1374380} {\emph{Unique games on
  expanding constraint graphs are easy: extended abstract}}.
\newblock In \emph{Proc.\ $40$th ACM Symp.\ on Theory of Computing (STOC)},
  pages 21--28. 2008.

\bibitem[Chu05]{Chung2005}
\textsc{Fan R.~K. Chung}.
\newblock \href{http://doi.org/10.1007/s00026-005-0237-z} {\emph{Laplacians and
  the {C}heeger inequality for directed graphs}}.
\newblock Ann.\ Comb., 9(1):1--19, 2005.

\bibitem[CL06]{ChungL2006}
\textsc{Fan R.~K. Chung} and \textsc{Lincoln Lu}.
\newblock \href{http://doi.org/10.1080/15427951.2006.10129115}
  {\emph{Concentration inequalities and martingale inequalities: {A} survey}}.
\newblock Internet Math., 3(1):79--127, 2006.

\bibitem[CMM06]{ChlamtacMM2006}
\textsc{Eden Chlamtac}, \textsc{Konstantin Makarychev}, and \textsc{Yury
  Makarychev}.
\newblock \href{http://doi.org/10.1109/FOCS.2006.36} {\emph{How to play unique
  games using embeddings}}.
\newblock In \emph{Proc.\ $47$th IEEE Symp.\ on Foundations of Comp.\ Science
  (FOCS)}, pages 687--696. 2006.

\bibitem[DG18]{DinurG2018}
\textsc{Irit Dinur} and \textsc{Yoav Gelberg}.
\newblock \emph{A robustness result for expanders}, 2018.
\newblock (unpublished).

\bibitem[DHKNT19]{DINURHKNT2019}
\textsc{Irit Dinur}, \textsc{Prahladh Harsha}, \textsc{Tali Kaufman},
  \textsc{Inbal~Livni Navon}, and \textsc{Amnon {TaShma}}.
\newblock \href{http://doi.org/10.1137/1.9781611975482.129} {\emph{List
  decoding with double samplers}}.
\newblock In \emph{Proc.\ $30$th Annual {ACM}-{SIAM} Symp.\ on Discrete
  Algorithms (SODA)}, pages 2134--2153. 2019.
\newblock \href{http://arxiv.org/abs/1808.00425}{\path{arXiv:1808.00425}},
  \href{https://eccc.weizmann.ac.il/eccc-reports/2018/TR18-198}{\path{eccc:2018/TR18-198}}.

\bibitem[DK17]{DinurK2017}
\textsc{Irit Dinur} and \textsc{Tali Kaufman}.
\newblock \href{http://doi.org/10.1109/FOCS.2017.94} {\emph{High dimensional
  expanders imply agreement expanders}}.
\newblock In \emph{Proc.\ $58$th IEEE Symp.\ on Foundations of Comp.\ Science
  (FOCS)}, pages 974--985. 2017.
\newblock
  \href{https://eccc.weizmann.ac.il/eccc-reports/2017/TR17-089}{\path{eccc:2017/TR17-089}}.

\bibitem[FI20]{FriedgutI2020}
\textsc{Ehud Friedgut} and \textsc{Yonatan Iluz}.
\newblock \emph{Hyper-regular graphs and high dimensional expanders}, 2020.
\newblock (manuscript).
\newblock \href{http://arxiv.org/abs/2010.03829}{\path{arXiv:2010.03829}}.

\bibitem[GI01]{GuruswamiI2001}
\textsc{Venkatesan Guruswami} and \textsc{Piotr Indyk}.
\newblock \href{http://doi.org/10.1109/SFCS.2001.959942} {\emph{Expander-based
  constructions of efficiently decodable codes}}.
\newblock In \emph{Proc.\ $42$nd IEEE Symp.\ on Foundations of Comp.\ Science
  (FOCS)}, pages 658--667. 2001.

\bibitem[GI03]{GuruswamiI2003}
---{}---{}---.
\newblock \href{http://doi.org/10.1145/780542.780562} {\emph{Linear time
  encodable and list decodable codes}}.
\newblock In \emph{Proc.\ $35$th ACM Symp.\ on Theory of Computing (STOC)},
  pages 126--135. 2003.

\bibitem[GI05]{GuruswamiI2005}
---{}---{}---.
\newblock \href{http://doi.org/10.1109/TIT.2005.855587} {\emph{Linear-time
  encodable/decodable codes with near-optimal rate}}.
\newblock IEEE Trans.\ Inform.\ Theory, 51(10):3393--3400, 2005.
\newblock (Preliminary version in {\em 34th STOC}, 2002).

\bibitem[GR08]{GuruswamiR2008}
\textsc{Venkatesan Guruswami} and \textsc{Atri Rudra}.
\newblock \href{http://doi.org/10.1109/TIT.2007.911222} {\emph{Explicit codes
  achieving list decoding capacity: Error-correction with optimal redundancy}}.
\newblock IEEE Trans.\ Inform.\ Theory, 54(1):135--150, 2008.
\newblock (Preliminary version in {\em 38th STOC}, 2006).
\newblock \href{http://arxiv.org/abs/cs/0511072}{\path{arXiv:cs/0511072}},
  \href{https://eccc.weizmann.ac.il/eccc-reports/2005/TR05-133}{\path{eccc:2005/TR05-133}}.

\bibitem[GRS]{GuruswamiRS}
\textsc{Venkatesan Guruswami}, \textsc{Atri Rudra}, and \textsc{Madhu Sudan}.
\newblock \href{https://cse.buffalo.edu/faculty/atri/courses/coding-theory/
  book/} {\emph{Essential coding theory}}.
\newblock (draft of book).

\bibitem[GS99]{GuruswamiS1999}
\textsc{Venkatesan Guruswami} and \textsc{Madhu Sudan}.
\newblock \href{http://doi.org/10.1109/18.782097} {\emph{Improved decoding of
  {R}eed-{S}olomon and algebraic-geometry codes}}.
\newblock IEEE Trans.\ Inform.\ Theory, 45(6):1757--1767, 1999.
\newblock (Preliminary version in {\em 39th FOCS}, 1998).
\newblock
  \href{https://eccc.weizmann.ac.il/eccc-reports/1998/TR98-043}{\path{eccc:1998/TR98-043}}.

\bibitem[GX14]{GuruswamiX2014}
\textsc{Venkatesan Guruswami} and \textsc{Chaoping Xing}.
\newblock \href{http://doi.org/10.1137/1.9781611973402.134} {\emph{Optimal rate
  list decoding of folded algebraic-geometric codes over constant-sized
  alphabets}}.
\newblock In \emph{Proc.\ $25$th Annual {ACM}-{SIAM} Symp.\ on Discrete
  Algorithms (SODA)}, pages 1858--1866. 2014.
\newblock
  \href{https://eccc.weizmann.ac.il/eccc-reports/2013/TR13-046}{\path{eccc:2013/TR13-046}}.

\bibitem[IJKW10]{ImpagliazzoJKW2010}
\textsc{Russell Impagliazzo}, \textsc{Ragesh Jaiswal}, \textsc{Valentine
  Kabanets}, and \textsc{Avi Wigderson}.
\newblock \href{http://doi.org/10.1137/080734030} {\emph{Uniform direct product
  theorems: Simplified, optimized, and derandomized}}.
\newblock SIAM J. Comput., 39(4):1637--1665, 2010.
\newblock (Preliminary version in {\em 40th STOC}, 2008).
\newblock
  \href{https://eccc.weizmann.ac.il/eccc-reports/2008/TR08-079}{\path{eccc:2008/TR08-079}}.

\bibitem[IKW12]{ImpagliazzoKW2012}
\textsc{Russell Impagliazzo}, \textsc{Valentine Kabanets}, and \textsc{Avi
  Wigderson}.
\newblock \href{http://doi.org/10.1137/09077299X} {\emph{New direct-product
  testers and 2-query {PCP}s}}.
\newblock SIAM J. Comput., 41(6):1722--1768, 2012.
\newblock (Preliminary version in {\em 41st STOC}, 2009).
\newblock
  \href{https://eccc.weizmann.ac.il/eccc-reports/2009/TR09-090}{\path{eccc:2009/TR09-090}}.

\bibitem[Imp03]{Impagliazzo2003}
\textsc{Russell Impagliazzo}.
\newblock \href{https://www.mathunion.org/fileadmin/ICM/Proceedings/
  ICM2002.3/ICM2002.3.ocr.pdf} {\emph{Hardness as randomness: {A} survey of
  universal derandomization}}.
\newblock In \emph{Proc.\ ICM 2002, Beijing}, volume~3, pages 659--672. 2003.
\newblock \href{http://arxiv.org/abs/cs/0304040}{\path{arXiv:cs/0304040}}.

\bibitem[JST21]{JeronimoST2021}
\textsc{Fernando~Granha Jeronimo}, \textsc{Shashank Srivastava}, and
  \textsc{Madhur Tulsiani}.
\newblock \emph{Near-linear time decoding of ta-shma’s codes via splittable
  regularity}.
\newblock In \emph{Proc.\ $53$rd ACM Symp.\ on Theory of Computing (STOC)}.
  2021.
\newblock To appear.

\bibitem[KMRS17]{KoppartyMRS2017}
\textsc{Swastik Kopparty}, \textsc{Or~Meir}, \textsc{Noga Ron{-}Zewi}, and
  \textsc{Shubhangi Saraf}.
\newblock \href{http://doi.org/10.1145/3051093} {\emph{High-rate locally
  correctable and locally testable codes with sub-polynomial query
  complexity}}.
\newblock J. ACM, 64(2):11:1--11:42, 2017.
\newblock (Preliminary version in {\em 48th STOC}, 2016).
\newblock \href{http://arxiv.org/abs/1504.05653}{\path{arXiv:1504.05653}},
  \href{https://eccc.weizmann.ac.il/eccc-reports/2015/TR15-068}{\path{eccc:2015/TR15-068}}.

\bibitem[KO18]{KaufmanO2018}
\textsc{Tali Kaufman} and \textsc{Izhar Oppenheim}.
\newblock \href{http://doi.org/10.1145/3188745.3188782} {\emph{Construction of
  new local spectral high dimensional expanders}}.
\newblock In \emph{Proc.\ $50$th ACM Symp.\ on Theory of Computing (STOC)},
  pages 773--786. 2018.
\newblock \href{http://arxiv.org/abs/1710.05304}{\path{arXiv:1710.05304}}.

\bibitem[KO20]{KaufmanO2020}
---{}---{}---.
\newblock \href{http://doi.org/10.1007/s00493-019-3847-0} {\emph{High order
  random walks: Beyond spectral gap}}.
\newblock Combinatorica, 40(1):245--281, 2020.
\newblock (Preliminary version in {\em 20th RANDOM}, 2018).
\newblock \href{http://arxiv.org/abs/1707.02799}{\path{arXiv:1707.02799}}.

\bibitem[Kop15]{Kopparty2015}
\textsc{Swastik Kopparty}.
\newblock \href{http://doi.org/10.4086/toc.2015.v011a005} {\emph{List-decoding
  multiplicity codes}}.
\newblock Theory of Computing, 11:149--182, 2015.
\newblock
  \href{https://eccc.weizmann.ac.il/eccc-reports/2012/TR12-044}{\path{eccc:2012/TR12-044}}.

\bibitem[Lev87]{Levin1987}
\textsc{Leonid~A. Levin}.
\newblock \href{http://doi.org/10.1007/BF02579323} {\emph{One-way functions and
  pseudorandom generators}}.
\newblock Combinatorica, 7(4):357--363, 1987.
\newblock (Preliminary version in {\em 17th STOC}, 1985).

\bibitem[LSV05]{LubotzkySV2005-exphdx}
\textsc{Alexander Lubotzky}, \textsc{Beth Samuels}, and \textsc{Uzi Vishne}.
\newblock \href{http://doi.org/10.1016/j.ejc.2004.06.007} {\emph{Explicit
  constructions of {R}amanujan complexes of type $\tilde{A_d}$}}.
\newblock European J. Combin., 26(6):965--–993, 2005.
\newblock \href{http://arxiv.org/abs/math/0406217}{\path{arXiv:math/0406217}}.

\bibitem[MM10]{MakarychevM2010}
\textsc{Konstantin Makarychev} and \textsc{Yury Makarychev}.
\newblock \href{http://doi.org/10.1007/978-3-642-18318-8_17} {\emph{How to play
  unique games on expanders}}.
\newblock In \emph{Proc.\ 8th International Workshop Approx.\ \& Online
  Algorithms (WAOA)}, pages 190--200. 2010.
\newblock \href{http://arxiv.org/abs/0903.0367}{\path{arXiv:0903.0367}},
  \href{https://eccc.weizmann.ac.il/eccc-reports/2009/TR09-021}{\path{eccc:2009/TR09-021}}.

\bibitem[PV05]{ParvareshV2005}
\textsc{Farzad Parvaresh} and \textsc{Alexander Vardy}.
\newblock \href{http://doi.org/10.1109/SFCS.2005.29} {\emph{Correcting errors
  beyond the {G}uruswami-{S}udan radius in polynomial time}}.
\newblock In \emph{Proc.\ $46$th IEEE Symp.\ on Foundations of Comp.\ Science
  (FOCS)}, pages 285--294. 2005.

\bibitem[Sud97]{Sudan1997}
\textsc{Madhu Sudan}.
\newblock \href{http://doi.org/10.1006/jcom.1997.0439} {\emph{Decoding of
  {R}eed-{S}olomon codes beyond the error-correction bound}}.
\newblock J. Complexity, 13(1):180--193, 1997.
\newblock (Preliminary version in {\em 37th FOCS}, 1996).

\bibitem[Sud01]{Sudan2001}
---{}---{}---.
\newblock \href{http://people.seas.harvard.edu/~madhusudan/MIT/FT01/}
  {\emph{6.897: Algorithmic introduction to coding theory}}, 2001.
\newblock (A course on coding theory at {MIT}, {F}all 2001).

\bibitem[TaS17]{Tashma2017}
\textsc{Amnon TaShma}.
\newblock \href{http://doi.org/10.1145/3055399.3055408} {\emph{Explicit, almost
  optimal, epsilon-balanced codes}}.
\newblock In \emph{Proc.\ $49$th ACM Symp.\ on Theory of Computing (STOC)},
  pages 238--251. 2017.
\newblock
  \href{https://eccc.weizmann.ac.il/eccc-reports/2017/TR17-041}{\path{eccc:2017/TR17-041}}.

\bibitem[Tre03]{Trevisan2003}
\textsc{Luca Trevisan}.
\newblock \href{http://doi.org/10.1109/SFCS.2003.1238187} {\emph{List-decoding
  using the {XOR} lemma}}.
\newblock In \emph{Proc.\ $44$th IEEE Symp.\ on Foundations of Comp.\ Science
  (FOCS)}, pages 126--135. 2003.
\newblock
  \href{https://eccc.weizmann.ac.il/eccc-reports/2003/TR03-042}{\path{eccc:2003/TR03-042}}.

\bibitem[Tre05]{Trevisan2005}
---{}---{}---.
\newblock \href{http://doi.org/10.1145/1060590.1060595} {\emph{On uniform
  amplification of hardness in {NP}}}.
\newblock In \emph{Proc.\ $37$th ACM Symp.\ on Theory of Computing (STOC)},
  pages 31--38. 2005.

\bibitem[Yao82]{Yao1982}
\textsc{Andrew~Chi{-}Chih Yao}.
\newblock \href{http://doi.org/10.1109/SFCS.1982.45} {\emph{Theory and
  applications of trapdoor functions (extended abstract)}}.
\newblock In \emph{Proc.\ $23$rd IEEE Symp.\ on Foundations of Comp.\ Science
  (FOCS)}, pages 80--91. 1982.

\bibitem[Zuc97]{Zuckerman1997}
\textsc{David Zuckerman}.
\newblock
  \href{http://doi.org/10.1002/(SICI)1098-2418(199712)11:4<345::AID-RSA4>3.0.CO;2-Z}
  {\emph{Randomness-optimal oblivious sampling}}.
\newblock Random Structures Algorithms, 11(4):345--367, 1997.
\newblock (Preliminary version in {\em 27th STOC}, 1996).

\end{thebibliography}

\appendix
\section{A Unique Games Algorithm Over Weighted Graphs}\label{app:weighted ug}
Our starting point is the following theorem from \cite{MakarychevM2010}.
\begin{theorem}[Theorem 10, \cite{MakarychevM2010}]
	There exists a polynomial time approximation algorithm that
	given a $1-\delta$ satisfiable instance of unique games on a $d$-regular expander graph
	$G$ with $\frac{\delta}{\lambda_G}\leq c$, the algorithm finds a solution of value
	\[ 1-C\frac{\delta}{h_G}, \]
	where $c$ and $C$ are some positive absolute constants, $\lambda_G$ is the laplacian second smallest eigenvalue and $h_G$ is the edge expansion.
\end{theorem}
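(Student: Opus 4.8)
The statement is \cite[Theorem~10]{MakarychevM2010} quoted verbatim, so the plan is to recall the Arora--Khot--Kolla--Steurer--Tulsiani--Vazirani \cite{AroraKKSTV2008} / Makarychev--Makarychev \cite{MakarychevM2010} argument and isolate the ingredients that are actually used, rather than to give a self-contained proof. The backbone is the standard SDP relaxation of unique games: for every vertex $i$ one introduces an orthogonal vector system $\{u_i^j\}_{j\in[\ell]}$ with $\sum_j\|u_i^j\|^2=1$, the nonnegativity constraints $\langle u_i^j,u_k^{j'}\rangle\ge 0$, and the $\ell_2^2$-triangle inequalities, and maximizes $\mathbb{E}_{(i,k)\in E}\sum_j\langle u_i^j,u_k^{\pi_{ik}(j)}\rangle$. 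An integral $(1-\delta)$-satisfying assignment is feasible with value $1-\delta$, so the SDP optimum is at least $1-\delta$, and it is computed in polynomial time by a standard solver.

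The first real step is to read off a \emph{locally consistent fractional labeling}. Setting $\epsilon_{ik}:=1-\sum_j\langle u_i^j,u_k^{\pi_{ik}(j)}\rangle=\tfrac12\sum_j\|u_i^j-u_k^{\pi_{ik}(j)}\|^2\ge 0$, the SDP guarantee is exactly $\mathbb{E}_{(i,k)\in E}[\epsilon_{ik}]\le\delta$: across a typical edge the two vector systems nearly coincide after applying $\pi_{ik}$; equivalently, the induced distributions $\mu_i(j):=\|u_i^j\|^2$ are close, in squared Hellinger distance and on average over the edges, to $\pi_{ik}$ of one another. The second step --- the heart of the matter, and the step I expect to be the main obstacle --- is to upgrade this local consistency to \emph{global} consistency using the expansion of $G$. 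Morally one wants a Poincar\'e/Cheeger-type statement: a vertex-indexed object whose value changes little on average across the edges of a graph with edge expansion $h_G$ must agree with one fixed pattern on all but an $O(\delta/h_G)$ fraction of the vertices. The difficulty is that there is no honest vertex function to feed to a spectral gap, since the permutations $\pi_{ik}$ place the ``pattern'' at $i$ and at $k$ in different coordinate frames; one must therefore either work with a permutation-invariant functional of the $\{u_i^j\}$ (lengths, or within-vertex inner products) or propagate a reference labeling outward from a seed vertex and control the accumulated misalignment by expansion rather than by a naive union bound over paths. It is precisely the combinatorial, edge-expansion form of this argument (as opposed to the eigenvalue form, which loses a square through Cheeger) that yields the guarantee in terms of $h_G$; the hypothesis $\delta/\lambda_G\le c$ is what makes the ``good'' set of vertices nonempty and large enough for the globalization to take hold.

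The last step is rounding. Having identified a heavy set $V'\subseteq V$ on which the SDP solution is essentially integral and mutually consistent, one assigns each $i\in V'$ its near-unique best label --- concretely, the label whose SDP vector has the largest inner product with a fixed random Gaussian --- and assigns the remaining $O(\delta/h_G)$ fraction arbitrarily. Every edge inside $V'$ with $\epsilon_{ik}$ small is then satisfied, so the number of violated edges is bounded by (edges meeting $V\setminus V'$) plus (edges inside $V'$ that were locally bad), which sums to $O(\delta/h_G)$, giving value $1-C\delta/h_G$. For the present paper nothing beyond invoking \cite[Theorem~10]{MakarychevM2010} is needed here; the remainder of this appendix simply re-runs the argument with edge counts replaced by the measure $\mu_G$ so that it applies to weighted, irregular constraint graphs.
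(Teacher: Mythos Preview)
You correctly identify that the paper does not prove this statement: it is quoted verbatim from \cite{MakarychevM2010} as a starting point, and the appendix then re-runs the same argument with edge weights to obtain the weighted, non-regular version. Your last sentence captures this exactly.

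One small inaccuracy in your sketch worth flagging, since it matters for the weighted re-proof that follows: the rounding you describe (pick the label whose SDP vector has largest inner product with a random Gaussian) is not the Makarychev--Makarychev rounding. The actual algorithm, reproduced in the appendix, is seed-based: pick a random vertex $u$, a random label $i$ with probability $\|u_i\|^2$, random thresholds $t$ and $r$, and then for each $v$ assign the unique label $p$ (if one exists) with $\|v_p\|^2\ge t$ and $\|\tilde u_i-\tilde v_p\|^2\le r$. The globalization step is then an edge-expansion argument showing that the set $X$ of vertices receiving a well-defined label has $\mu(X)\ge 1-O(\delta/(h_G R))$ with constant probability, via $\mathbb{E}[\mu(E(X,V\setminus X))]\le O(\delta/R)$; this is the step that produces the $h_G$ in the denominator. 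Your high-level picture (local consistency from the SDP, globalization via expansion, then rounding) is right, but the concrete mechanism differs.
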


In this section we show that the theorem holds also for non-regular weighted graph.
\begin{theorem}[Weighted unique games] \label{thm:unique games}
	There exists a polynomial time approximation algorithm that
	given a $1-\delta$ satisfiable instance of unique games on a weighted expander graph $(G,W)$ such that $\frac{\delta}{\lambda_G}\leq c$, the algorithm finds a solution of value
	\[ 1-C\frac{\delta}{h_G}, \]
	where $c$ and $C$ are some positive absolute constants, $\lambda_G$ is the laplacian second smallest eigenvalue and $h_G$ is the edge expansion.
\end{theorem}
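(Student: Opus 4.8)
The plan is to go through the proofs of \cite{AroraKKSTV2008} and \cite[Theorem~10]{MakarychevM2010} and verify, line by line, that everything survives the passage to a weighted, non-regular graph once we consistently replace ``uniform over the edges'' by ``distributed according to $\mu_G$'', ``uniform over the vertices'' by ``distributed according to the vertex measure $w_v$'', and the combinatorial Laplacian by the weighted normalized Laplacian $L = I - A$ with $A_{u,v} = w_{u,v}/\sqrt{w_u w_v}$ as in the preliminaries. Concretely, I would first set up the standard unique-games semidefinite relaxation, but with its objective averaged against the edge distribution $\mu_G$ instead of the uniform distribution on edges; this SDP has size polynomial in $|V|$ and $\ell$, so it is still solvable in polynomial time no matter what the weights are, and plugging the integral assignment of a $(1-\delta)$-satisfying solution into it shows that its optimum value is at most $\delta$ (measured against $\mu_G$).

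Next I would run the rounding of \cite{AroraKKSTV2008,MakarychevM2010} essentially verbatim. That procedure uses only (i) the distribution of a random edge, (ii) the distribution of its endpoints --- which for a weighted graph is exactly the $\mu_G$-marginal, i.e.\ the vertex measure $w_v$ --- and (iii) a propagation/clustering step whose analysis rests, up to absolute constants, on the weighted Poincar\'e inequality
\[
 \E_{v \sim \mu_G}\bigl\| y_v - \E_{u \sim \mu_G}[y_u] \bigr\|^2 \;\le\; \frac{1}{\lambda_G}\, \E_{(u,v)\sim \mu_G}\| y_u - y_v \|^2 ,
\]
which follows by expanding $y$ in the eigenbasis of $A$ exactly as in the regular case. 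This is the inequality that, together with an SDP value of at most $\delta$ and the hypothesis $\delta/\lambda_G \le c$, forces the SDP vectors (hence the rounded labels) to cluster, producing an assignment that satisfies a $1 - C\delta/h_G$ fraction of the weighted constraints; the spectral condition $\delta/\lambda_G \le c$ enters only to guarantee that the clustering step fires, and the weighted form of Cheeger's inequality relating $h_G$ to $\lambda_G$ (the same bound already used elsewhere in this appendix) is what lets one phrase the final guarantee with $h_G$ rather than $\lambda_G$ in the denominator.

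The only place where genuine care is needed --- and I would call this the ``obstacle,'' though it is really bookkeeping rather than a new idea --- is that in \cite{MakarychevM2010} a number of sums over the edges incident to a vertex silently invoke $d$-regularity to pass between ``fraction of all edges'' and ``fraction of edges at $v$,'' and each such step must be rewritten using the local transition probabilities $w_{u,v}/w_u$ and the vertex measure $w_v$; this is precisely the translation already packaged in \prettyref{claim:trans measure}. Once one is disciplined about always integrating against $\mu_G$, none of the inequalities change, and the polynomial running time of the rounding is unaffected since it manipulates the measures $\mu_G(\cdot)$ rather than the individual weights in any scale-sensitive way. Thus the substantive content is entirely inherited from \cite{AroraKKSTV2008,MakarychevM2010}, and this is exactly the extension announced before the statement of \prettyref{thm:pre UG}.
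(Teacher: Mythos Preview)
Your plan is correct and matches the paper's approach: set up the weighted SDP, reprove the key lemmas of \cite{AroraKKSTV2008,MakarychevM2010} with all averages taken against $\mu_G$ and the initial vertex in the rounding chosen according to $w_v$, and observe that the rounding itself is insensitive to weights. Two small corrections: the $h_G$ in the denominator arises not from Cheeger but directly from applying the definition of edge expansion to the cut between labeled and unlabeled vertices (the analogue of \cite[Lemma~8]{MakarychevM2010}); and the Poincar\'e-type inequality you state is exactly what the paper isolates and proves as \prettyref{claim:lap ev for vectors}, whereas \prettyref{claim:trans measure} concerns induced subgraphs and is not the relevant translation here.
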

The laplacian smallest non-zero eigenvalue of a graph $G$ is the eigenvalue gap $(1-\lambda_2(G))$, where $\lambda_2$ is defined in \prettyref{sec:expanders}.

We prove \prettyref{thm:unique games} by following the algorithm and proof in \cite{MakarychevM2010} and modifying the parts which are different in the case of a non-regular weighted graph.

The algorithm in \cite{MakarychevM2010} starts from defining an SDP relaxation of the unique games instance. We follow their algorithm and do the same, only in our SDP the target function uses the edge weighted $w_{u,v}$.

For each vertex $u\in V$ and label $i\in[\ell]$ we define a vector $u_i$ of length $t$.
\begin{definition}[SDP relaxation]\label{def:SDP relaxation}
	Minimize:
	\[ \frac{1}{\omega}\sum_{(u,v)\in E}w_{u,v}\sum_{i\in[\ell]}\Norm{u_i - v_{\pi_{u,v}(i)}}^2 \]
	Subject to
	\begin{align}
	\forall u\in V, i\neq j \in [\ell], \quad &\inprod{u_i}{u_j} = 0 \label{eq:orthogonal}\\
	\forall u\in V, \quad & \sum_{i\in[\ell]}\Norm{u_i}^2 = 1 \label{eq:normalization}\\
	\forall u,v,x\in V, i,j,l\in[\ell] \quad &\Norm{u_i - x_l}^2\leq \Norm{u_i - v_j}^2 + \Norm{v_j - x_l}^2 \label{eq:triangle 1}\\
	\forall u,v\in V, i,j\in[\ell] \quad &\Norm{u_i - v_j}^2\leq \Norm{u_i}^2 + \Norm{v_j}^2 \label{eq:triangle 2}\\
	\forall u,v\in V, i,j\in[\ell] \quad &\Norm{u_i}^2\leq \Norm{u_i - v_j}^2 + \Norm{v_j}^2	  \label{eq:triangle 3}
	\end{align}
	Where $\omega = \sum_u w_u$.
\end{definition}
An integral solution sets for each $u\in V$ a label $i\in [\ell]$. It translates into vectors by setting $u_i = \boldsymbol{1}\frac{1}{\sqrt{t}}$, and for each $j\neq i, u_j = \boldsymbol{0}$, where $\boldsymbol{1,0}$ are the all $1$ and all $0$ vectors, respectively. Each integral solution satisfies all of the constraints, so the SDP value is at least the value of the unique games instance.
The algorithm of Makarychev and Makarychev solves the above SDP, and then rounds the SDP solution to get an integral solution with high value. To prove the correctness of the algorithm, we prove that the output of the rounding algorithm has value at least $1-\frac{C}{h_G}\delta$.

Before presenting the rounding algorithm, we define the the earthmover distance, similarly to \cite{AroraKKSTV2008,MakarychevM2010}.
\begin{definition}
	For every two sets of orthogonal vectors $\set{u_i}_{i\in[\ell]},\set{v_i}_{i\in[\ell]}$ let
	\[ \Delta(\set{u_i}_{i\in[\ell]},\set{v_i}_{i\in[\ell]}) = \min_{\tau\in\mathcal{S}_\ell}\Set{\sum_{i\in[\ell]}\Norm{u_i - v_{\tau({i})}}^2}, \]
	where $\mathcal{S}_\ell$ are all permutation over $\ell$ elements.
\end{definition}
For two vertices $u,v$, with the sets of orthogonal vectors $\{u_i\}_{i\in[\ell]},\{v_i\}_{i\in[\ell]}$, a small distance $\Delta(\{u_i\}_{i\in[\ell]},\{v_i\}_{i\in[\ell]})$ means that the sets of vectors are correlated (using a permutation $\tau$), i.e. for every vector $u_i$ there is a vector $v_j$ which is close to it.

For an SDP solution $\set{u_i}_{u\in V,i\in[\ell]}$, we denote by $\Delta(u,v)$ the earthmover distance between the vectors of $u$ and the vectors of $v$.

Arora et al.~\cite{AroraKKSTV2008} showed that on $d$-regular unweighted graphs, the SDP solution has a small average earthmover distance. More explicitly, that for the SDP solution $\set{u_i}_{u\in V,i\in[\ell]}$, the expression $\E_{u,v\in V}[\Delta(\set{u_i}_{i\in [\ell]},\set{v_i}_{i\in [\ell]})]$ is small. The proof in \cite{AroraKKSTV2008} has a lemma and a corollary, the lemma is general for any SDP solution and is not related to the graph. The corollary uses the graph regularity but can be easily modified to hold for weighted graphs as well.
\begin{lemma}[Lemma 2.2 in \cite{AroraKKSTV2008}]\label{lem:vec normalization}
	For every positive even integer $q$ and every SDP solution
	$\set{u_i}_{u\in V,i\in[\ell]}$, there exists a set of vectors $\set{\boldsymbol{V}_u}_{u\in V}$
	that for every pair $u,v\in V$,
	\[ \frac{1}{q}\Norm{\boldsymbol{V}_u - \boldsymbol{V}_v}^2\leq \frac{1}{\ell}\Delta(u,v)\leq 2\Norm{\boldsymbol{V}_u - \boldsymbol{V}_v}^2 + O\Paren{2^{-\frac{q}{2}}}. \]
\end{lemma}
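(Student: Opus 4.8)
The plan is to exhibit the vectors $\set{\boldsymbol V_u}$ explicitly and then verify the two inequalities, of which only the right‑hand one is delicate. For a vertex $u$ write $\hat u_i := u_i/\Norm{u_i}$ (ignoring, or infinitesimally perturbing, any coordinate with $u_i=0$); by \prettyref{eq:orthogonal} the $\hat u_i$ are orthonormal, by \prettyref{eq:normalization} the numbers $\Norm{u_i}^2$ form a probability vector, and by \prettyref{eq:triangle 2} every cross inner product $\inprod{u_i}{v_j}$ is nonnegative. First I would record the elementary identity that, by orthonormality,
\[ \Delta(u,v)\;=\;\min_{\tau\in\mathcal S_\ell}\sum_i\Norm{u_i-v_{\tau(i)}}^2\;=\;2-2\max_{\tau\in\mathcal S_\ell}\sum_i\inprod{u_i}{v_{\tau(i)}}, \]
so that, with $M(u,v):=\max_\tau\sum_i\inprod{u_i}{v_{\tau(i)}}\le1$ (the upper bound by Cauchy--Schwarz and \prettyref{eq:normalization}), we have $\tfrac1\ell\Delta(u,v)=\tfrac2\ell(1-M(u,v))$. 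Then set
\[ \boldsymbol V_u\;:=\;\frac1{\sqrt\ell}\sum_{i=1}^\ell\Norm{u_i}\,\hat u_i^{\otimes q}. \]
Since the $\hat u_i$ are orthonormal, so are the tensors $\hat u_i^{\otimes q}$, hence $\Norm{\boldsymbol V_u}^2=\tfrac1\ell\sum_i\Norm{u_i}^2=\tfrac1\ell$ and $\inprod{\boldsymbol V_u}{\boldsymbol V_v}=\tfrac1\ell\sum_{i,j}\Norm{u_i}\Norm{v_j}\inprod{\hat u_i}{\hat v_j}^q=:\tfrac1\ell\,S(u,v)$. This is where the hypothesis that $q$ is \emph{even} first enters and is essential: it makes $\inprod{\hat u_i}{\hat v_j}^q\ge0$, so $S(u,v)\ge0$ and, crucially, deleting any subset of the summands of $S(u,v)$ can only decrease it. Thus $\Norm{\boldsymbol V_u-\boldsymbol V_v}^2=\tfrac2\ell(1-S(u,v))$, and the whole lemma reduces to the two scalar inequalities $1-S(u,v)\le q\bigl(1-M(u,v)\bigr)$ and $S(u,v)\le\tfrac12\bigl(1+M(u,v)\bigr)+O(\ell\,2^{-q/2})$.

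For the left inequality, let $\tau^\star$ attain $M(u,v)$. Because $q$ is even I may discard every off‑$\tau^\star$ summand of $S(u,v)$ and then apply the tangent‑line bound $t^q\ge1+q(t-1)$, valid for all real $t$ by convexity of $t\mapsto t^q$, to each surviving term at $t=\inprod{\hat u_i}{\hat v_{\tau^\star(i)}}$:
\[ S(u,v)\;\ge\;\sum_i\Norm{u_i}\Norm{v_{\tau^\star(i)}}\Bigl(1+q\bigl(\inprod{\hat u_i}{\hat v_{\tau^\star(i)}}-1\bigr)\Bigr)\;=\;(1-q)\sum_i\Norm{u_i}\Norm{v_{\tau^\star(i)}}+q\,M(u,v). \]
By Cauchy--Schwarz and \prettyref{eq:normalization}, $\sum_i\Norm{u_i}\Norm{v_{\tau^\star(i)}}\le1$, so $S(u,v)\ge(1-q)+q\,M(u,v)$, which rearranges to $1-S(u,v)\le q(1-M(u,v))$ and hence to $\tfrac1q\Norm{\boldsymbol V_u-\boldsymbol V_v}^2\le\tfrac1\ell\Delta(u,v)$.

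The right inequality is the crux, and here the $q$‑th power works the other way: it suppresses any pair $(i,j)$ for which the correlation $\inprod{\hat u_i}{\hat v_j}$ (which lies in $[0,1]$) is bounded away from $1$. I would fix a threshold $\theta$ near $1$, call a pair \emph{aligned} if $\inprod{\hat u_i}{\hat v_j}\ge\theta$ and \emph{stray} otherwise, and split $S(u,v)$ along this partition. The stray part is controlled by $\inprod{\hat u_i}{\hat v_j}^q\le\theta^{\,q-2}\inprod{\hat u_i}{\hat v_j}^2$, the orthonormality bound $\sum_j\inprod{\hat u_i}{\hat v_j}^2\le1$, and a Cauchy--Schwarz on the remaining norm weights; this yields the additive error term, geometrically small in $q$. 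For the aligned part, the geometric input is that two orthonormal vectors cannot both be $\theta$‑correlated with one common unit vector once $\theta$ is close enough to $1$, so the aligned pairs form a partial matching; completing it to a permutation $\tau$, using $\inprod{\hat u_i}{\hat v_j}^q\le\inprod{\hat u_i}{\hat v_j}$ on aligned pairs, and using the nonnegativity of all the $\inprod{u_i}{v_j}$ (so the completion terms only help), the aligned contribution is at most $\sum_i\inprod{u_i}{v_{\tau(i)}}\le M(u,v)$, and since $M(u,v)\le1$ this is in turn $\le\tfrac12(1+M(u,v))$. Assembling the two estimates gives the bound. The step I expect to be the real obstacle is calibrating $\theta$: it must be large enough to force the matching property yet small enough that the stray tail decays at the rate the statement asks for, and checking that these windows overlap — absorbing the residual dependence on $\ell$ into the implied constant, or equivalently taking $q$ large relative to $\log\ell$ — is the delicate point; the reduction of the first paragraph and the stray tail bound are otherwise routine.
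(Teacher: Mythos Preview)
The paper does not supply its own proof of this lemma: it is quoted verbatim from \cite{AroraKKSTV2008} and used as a black box in the corollary that follows. So there is no in-paper argument to compare against.

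That said, your reconstruction is essentially the standard tensoring proof from \cite{AroraKKSTV2008}: set $\boldsymbol V_u=\tfrac1{\sqrt\ell}\sum_i\Norm{u_i}\,\hat u_i^{\otimes q}$, use convexity of $t\mapsto t^q$ for the lower bound, and use the damping of small correlations under even tensor powers for the upper bound. Two remarks. First, the ``delicate calibration'' you worry about is not actually delicate: the threshold $\theta=1/\sqrt2$ already forces the aligned pairs to be a (partial) matching, since two orthonormal vectors cannot both have inner product strictly exceeding $1/\sqrt2$ with a common unit vector, and with this choice the stray tail is $\theta^{q-2}$ times $\sum_i\Norm{u_i}\le\sqrt\ell$, i.e.\ $O(\sqrt\ell\,2^{-q/2})$. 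Second, that extra $\sqrt\ell$ is harmless here: $\ell$ is a fixed constant throughout, and the only place the lemma is invoked (the corollary immediately after) already multiplies the error term by $\ell$ before choosing $q$ large.
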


We prove the following corollary, it is the same as the corollary in \cite{AroraKKSTV2008} only for weighted graph. The proof of the corollary is also almost the same.
\begin{corollary}
	For every constant $R\in (0,1)$, there exists a positive $c>0$ such that for any $1-\delta$ satisfiable instance of unique games on $G$, if $\frac{\delta}{\lambda_G}<c$, then
	\[ \E_{u,v\in V}[\Delta(u,v)]\leq R, \]
	For $u,v$ distributed according to their weight, and $\lambda_G=1-\lambda(G)$ the second smallest eigenvalue of the normalized laplacian.
\end{corollary}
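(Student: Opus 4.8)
The plan is to transcribe the argument of Arora et al.~\cite{AroraKKSTV2008}, the only new ingredient being that the single place where they use $d$-regularity — passing from the average squared distance \emph{across an edge} to the average squared distance \emph{between two independent vertices} — is replaced by the weighted spectral-gap (Poincar\'e) inequality controlled by $\lambda(G)$.

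First I would note that the SDP of \prettyref{def:SDP relaxation} has optimum at most $\delta$: lifting the integral assignment that satisfies a $1-\delta$ fraction of constraints to the vectors $u_i=\mathbf 1/\sqrt t$ on its chosen label and $\mathbf 0$ elsewhere contributes $0$ to the objective on every satisfied edge and $2$ on every violated edge, and after the normalization by $\omega=\sum_u w_u=2\sum_e w_e$ this equals the $\mu_G$-measure of the violated edges. Fix an optimal SDP solution $\{u_i\}$. Since $\Delta(u,v)=\min_{\tau}\sum_i\Norm{u_i-v_{\tau(i)}}^2\le\sum_i\Norm{u_i-v_{\pi_{u,v}(i)}}^2$ on every edge and $\frac1\omega\sum_e w_e=\frac12$, summing over a random edge gives $\E_{(u,v)\sim\mu_G}[\Delta(u,v)]\le 2\delta$.

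Next I would invoke \prettyref{lem:vec normalization} with a parameter $q$ (an even integer to be fixed): it produces vectors $\{\boldsymbol V_u\}_{u\in V}$ with $\tfrac1q\Norm{\boldsymbol V_u-\boldsymbol V_v}^2\le\tfrac1\ell\Delta(u,v)\le 2\Norm{\boldsymbol V_u-\boldsymbol V_v}^2+O(2^{-q/2})$ for all $u,v$. Averaging the left inequality over a random edge gives $\E_{(u,v)\sim\mu_G}\Norm{\boldsymbol V_u-\boldsymbol V_v}^2\le\tfrac q\ell\,\E_{(u,v)\sim\mu_G}[\Delta(u,v)]\le\tfrac{2q\delta}{\ell}$. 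Now comes the weighted step: applying the variational characterization of $\lambda(G)$ coordinate-by-coordinate to the function $u\mapsto\boldsymbol V_u$ (the non-bipartite analogue of \prettyref{claim:EML}), and using that $\lambda_G=1-\lambda(G)$ lower-bounds the Rayleigh gap, one gets $\E_{u,v}\Norm{\boldsymbol V_u-\boldsymbol V_v}^2\le\tfrac1{\lambda_G}\,\E_{(u,v)\sim\mu_G}\Norm{\boldsymbol V_u-\boldsymbol V_v}^2\le\tfrac{2q\delta}{\ell\lambda_G}$, where on the left $u,v$ are drawn independently from the vertex-weight distribution. Feeding this into the right inequality of \prettyref{lem:vec normalization}, $\Delta(u,v)\le 2\ell\Norm{\boldsymbol V_u-\boldsymbol V_v}^2+\ell\cdot O(2^{-q/2})$, and averaging over independent $u,v$ yields $\E_{u,v}[\Delta(u,v)]\le\tfrac{4q\delta}{\lambda_G}+\ell\cdot O(2^{-q/2})$.

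To finish, given $R\in(0,1)$ I would first choose the even integer $q=q(\ell,R)$ large enough that the additive term $\ell\cdot O(2^{-q/2})$ is at most $R/2$, and then set $c=c(\ell,R):=R/(8q)$, so that $\delta/\lambda_G<c$ forces $4q\delta/\lambda_G<R/2$ and hence $\E_{u,v}[\Delta(u,v)]\le R$. The only step that is not a verbatim copy of \cite{AroraKKSTV2008} is the weighted Poincar\'e inequality, and the only subtlety there is bookkeeping with the paper's normalization ($A_{u,v}=w_{u,v}/\sqrt{w_uw_v}$ and $\lambda(G)$ the second-largest eigenvalue \emph{in absolute value}): one checks that $1-\lambda(G)$ is in any case a valid lower bound for the true second-smallest eigenvalue of $I-A$, which is all that the argument uses; \prettyref{lem:vec normalization} itself is graph-independent and so is untouched by the loss of regularity.
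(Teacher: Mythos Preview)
Your proposal is correct and follows essentially the same route as the paper: both apply the right-hand inequality of \prettyref{lem:vec normalization} to pass from $\Delta$ to $\Norm{\boldsymbol V_u-\boldsymbol V_v}^2$, use the weighted Poincar\'e/variational characterization of $\lambda_G$ (the paper's \prettyref{claim:lap ev for vectors}) to trade the independent-pair average for the edge average, then the left-hand inequality to return to $\Delta$ and bound by the SDP value, finally tuning $q$ and $c$. Up to harmless constant factors (and your extra care about $1-\lambda(G)$ only lower-bounding the spectral gap, which the paper glosses over), the arguments are the same.
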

\begin{proof}
	By \prettyref{claim:lap ev for vectors}, the second smallest eigenvalue of the laplacian of $G$ can also be represented by
	\begin{align}\label{eq:sec ev}
	\lambda_G  = \min_{\set{z_u}_{u\in V}}\frac{\E_{(u,v)\in E}[\Norm{z_u-z_v}^2]}{\E_{u,v\in V}[\Norm{z_u-z_v}^2]}~,
	\end{align}
	where $\set{z_u}_{u\in V}$ is a set of vectors, one for every vertex, and the expectation is done according to the edge and vertex weights in $G$.
	
	\begin{align*}
	\E_{u,v\in V}[\Delta(u,v)] \leq& 2\ell\E_{u,v\in V}\Brac{\Norm{\boldsymbol{V}_u - \boldsymbol{V}_v}^2} + \ell O\Paren{2^{-\frac{q}{2}}} \tag{by \prettyref{lem:vec normalization}}\\
	\leq& \frac{2\ell}{\lambda_G}\E_{(u,v)\in E}\Brac{\Norm{\boldsymbol{V}_u - \boldsymbol{V}_v}^2} +\ell O\Paren{2^{-\frac{q}{2}}} \tag{by \prettyref{eq:sec ev}}\\
	\leq&\frac{2q\ell}{\lambda_G} \E_{(u,v)\in E}\Brac{\Delta(u,v)}+\ell O\Paren{2^{-\frac{q}{2}}} \tag{by \prettyref{lem:vec normalization}} \\
	\leq& \frac{2q\ell}{\lambda_G}\epsilon  +\ell O\Paren{2^{-\frac{q}{2}}}.\tag{by the SDP solution}
	\end{align*}
	Taking large enough $q$ such that $\ell O\Paren{2^{-\frac{q}{2}}} < \frac{R}{2} $ and $c<\frac{R}{4q\ell}$ ($R,\ell,q$ are all constants), we finish the proof.
\end{proof}

We present the rounding algorithm.
The only difference between our rounding and the rounding in \cite{MakarychevM2010} is that in our case the initial vertex $u$ is picked according to its weight. The proof of correctness is also very similar.

The input is an SDP solution $\set{u_i}_{u\in V,i\in[\ell]}$, the output is an assignment $a:V\rightarrow[\ell]$.
\begin{description} \label{desc:UG solve}
	\item[Initialization]:
	\begin{enumerate}
		\item Pick a random vertex $u\in V$ according to the vertex weights $w_u$.
		\item Pick a random label $i\in[\ell]$, each with probability $\Norm{u_i}^2$.
		\item Pick a random number $t\in [0,\Norm{u_i}^2]$.
		\item Pick a random $r\in[R,2R]$.
		\item Obtain vectors $\set{\tilde{u}_i}_{u\in V,i\in[\ell]}$.
	\end{enumerate}
	\item[Labels Assignment]: For every $v\in V$:
	\begin{enumerate}
		\item Let $S_v = \sett{p\in[\ell]}{\Norm{v_p}^2\geq t, \Norm{\tilde{u}_i - \tilde{v}_p}^2\leq r}$.
		\item If $S_v = \set{p}$, assign the label $p$ to $v$. Else, assign an arbitrary one.
	\end{enumerate}
\end{description}
The vectors $\set{\tilde{u}_i}_{u\in V,i\in[\ell]}$ are a normalized version of the vectors  $\set{{u}_i}_{u\in V,i\in[\ell]}$ that are  promised from \prettyref{lem:tilde-vectors}.
This lemma appears as lemma 1 in \cite{MakarychevM2010}, and is actually proven in \cite{ChlamtacMM2006}. It is a general claim about vectors normalization, and is not related to any graph, therefore it holds for the solution of the SDP on the weighted graph as well.
\begin{lemma}[Lemma 1 from \cite{MakarychevM2010},  proven in \cite{ChlamtacMM2006}.]\label{lem:tilde-vectors}
	For every SDP solution $\set{u_i}_{u\in V,i\in[\ell]}$, there exists a set of vectors $\set{\tilde{u}_i}_{u\in V,i\in[\ell]}$ satisfying the following properties:
	\begin{enumerate}
		\item Triangle inequalities: for every $u,v,w\in V$ and labels $i,j,l\in[\ell]$:
		\[ \Norm{\tilde{u}_i - \tilde{v}_j} + \Norm{\tilde{v}_j - \tilde{w}_l} \leq \Norm{\tilde{u}_i - \tilde{w}_l}.\]
		\item For every $u,v\in V,i.j\in[\ell]$,
		\[ \inprod{\tilde{u}_i}{\tilde{v}_j} = \frac{\inprod{u_i}{v_j}}{\max\set{\Norm{u_i}^2,\Norm{v_j}^2}}. \]
		\item For all non-zero vectors $u_i$, $\Norm{\tilde{u}_i} = 1$.
		\item For every $u\in V,i\neq j\in[\ell]$, $\inprod{\tilde{u}_i}{\tilde{u}_j}=0$.
		\item For every $u,v\in V,i,j\in[\ell]$,
		\[ \Norm{\tilde{v}_j - \tilde{u}_i}\leq \frac{2\Norm{v_j-u_i}}{\max\set{\Norm{u_i}^2,\Norm{v_j}^2}}. \]
	\end{enumerate}
	The set of vectors $\set{\tilde{u}_i}_{u\in V,i\in[\ell]}$ can be obtained in polynomial time.
\end{lemma}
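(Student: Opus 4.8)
The statement of this lemma is purely linear-algebraic: it speaks only about a finite family of vectors in Euclidean space and never refers to the graph $G$ or to the edge weights, so it is literally the same statement as in the unweighted setting, and the natural thing is to quote it from \cite{MakarychevM2010} (and the normalization lemma cited there). For completeness I would nonetheless record the construction. The plan is to produce $\tilde u_i$ from $u_i$ by normalizing the direction and then tensoring with a scalar ``length profile'' depending only on $\|u_i\|$: discard the zero vectors (or send each to an arbitrary unit vector orthogonal to all others), write $u_i = \|u_i\|\,\hat u_i$ with $\hat u_i$ a unit vector, and set $\tilde u_i = \hat u_i\otimes\phi(\|u_i\|)$ for a curve $\phi$ valued in an auxiliary Hilbert space. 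The curve $\phi$ is chosen so that its Gram kernel $\langle\phi(r),\phi(s)\rangle$ realizes the values forced by items~2, 3 and~4; the natural choice makes this kernel of exponential type (after the change of variables $r\mapsto\log r$ it becomes $e^{-|s-t|}$, which is positive semidefinite), so $\phi$ exists, the Gram matrix of $\{\tilde u_i\}$ is prescribed, and the vectors can be produced in polynomial time by a Cholesky/SVD factorization of that matrix.

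Given the vectors, items~1, 3 and~4 are essentially definitional: item~1 is the ordinary triangle inequality for Euclidean norms of genuine Hilbert-space vectors; item~3 is $\|\phi(r)\|=1$; and item~4 holds because the $u_i$ at a fixed vertex are pairwise orthogonal, so $\langle\tilde u_i,\tilde u_j\rangle = \langle\hat u_i,\hat u_j\rangle\langle\phi,\phi\rangle = 0$ when $i\neq j$. Item~2 is exactly what the choice of $\phi$ was designed to give. The one place where there is something to compute is item~5, the guarantee that normalization does not inflate a squared distance by more than a constant factor relative to the scale $\max\{\|u_i\|^2,\|v_j\|^2\}$: one writes $\|\tilde u_i-\tilde v_j\|^2 = 2 - 2\langle\tilde u_i,\tilde v_j\rangle$, substitutes $\langle\tilde u_i,\tilde v_j\rangle$ from item~2, and compares with $\|u_i-v_j\|^2 = \|u_i\|^2+\|v_j\|^2-2\langle u_i,v_j\rangle$, using crucially the SDP normalization $\sum_k\|u_k\|^2=1$, hence $\|u_i\|,\|v_j\|\le 1$.

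I expect item~5 to be the only real obstacle, together with the need to pick $\phi$ so that the inner-product identity of item~2 and the distortion bound of item~5 are compatible with the stated constant; resolving that typically requires a case split on which of $\|u_i\|,\|v_j\|$ is larger and, possibly, a piecewise rather than a pure exponential profile. Since none of this interacts with the graph, the entire argument transfers verbatim from the unweighted case, and in this paper it is simply inherited from \cite{MakarychevM2010}.
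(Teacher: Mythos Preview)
Your assessment is exactly right and matches the paper's treatment: the paper does not prove this lemma at all but simply imports it verbatim from \cite{MakarychevM2010} (noting it was ``actually proven in previous paper''), precisely because, as you observe, the statement is purely about a finite family of vectors and is independent of the graph and its weights. Your additional sketch of the tensor construction $\tilde u_i = \hat u_i\otimes\phi(\|u_i\|)$ with the Laplacian-kernel profile $\langle\phi(r),\phi(s)\rangle=\min(r,s)/\max(r,s)=e^{-|\log r-\log s|}$ is the standard one and is correct, though the paper itself offers no such sketch.
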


We prove the correctness of the algorithm. Let $(G=(V,E),W=\set{w_{u,v}}_{(u,v)\in E})$ and $\set{\pi_{u,v}}_{(u,v)\in E}, \st \pi_{u,v}:[\ell]\rightarrow[\ell]$ be a unique games instance on a weighted graph, that has a solution satisfying  $1-\delta$ fraction of the constraints (where the fraction is weighted). Then the SDP solution $\set{u_i}_{u\in V,i\in[\ell]}$ also has value at most $\delta$.

We start from a few definitions.
\begin{definition}
	Let $\tau_{x,v}$ be the partial mapping from $[\ell]$ to $[\ell]$ which maps $p$ into $q$ if $\Norm{\tilde{v}_p - \tilde{x}_q}\leq 4R$.
\end{definition}
The function is well defined, because $\set{\tilde{u}_i}_{u\in V,i\in[\ell]}$ are orthogonal and satisfy the triangle inequality. It is not possible that $\Norm{\tilde{v}_p - \tilde{x}_q}\leq 4R$ and $\Norm{\tilde{v}_p - \tilde{x}_{q'}}\leq 4R$, as it implies that $\Norm{\tilde{x}_q - \tilde{x}_{q'}}\leq 8R$, but $\tilde{x}_q,\tilde{x}_{q'}$ are orthogonal.
\begin{definition}
	Let $X = \sett{x\in V}{\abs{S_x} = 1}$.
\end{definition}

In the proof we use the following claims from \cite{MakarychevM2010}. The claims are unrelated to the graph structure, and holds for weighted graphs as well.
\begin{enumerate}
	\item 	If $p\in S_v$ and $q\in S_x$ with non-zero probability for the same initial vertex and label, then $\tau_{v,x}(p) = q$.
	\item $\abs{S_v}\leq 1$.
	\item If $S_v = \set{p}$, then $S_x = \set{\tau_{x,v}(p)}$ or $S_w = \emptyset$.
	\item For every choice of initial vertex $u$, every $v\in V,p\in[\ell]$, $\Pr_{t,r}[S_v = \set{p}]\leq\Norm{v_p}^2$.
\end{enumerate}

We reprove a weighted variant of the following lemmas from \cite{MakarychevM2010}. The Lemmas and proofs are very similar to those in \cite{MakarychevM2010}, the main difference is that in the weighted case the distribution is over the weights of the edges and vertices.
\begin{lemma}[a variant of Lemma 5 in \cite{MakarychevM2010}]
	If $\frac{\epsilon}{\lambda_G}<c$, then $\E[\mu(X)]\geq\frac{1}{4}$.
\end{lemma}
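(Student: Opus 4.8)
The plan is to transcribe the proof of Lemma~5 of \cite{MakarychevM2010}, replacing the uniform vertex distribution by the weight distribution $w$ and counting measure by the measure $\mu$. The key point is that the only place the underlying graph enters the argument is through the earthmover corollary, which we have already re-established in weighted form; everything else is carried out pointwise in the SDP vectors and is insensitive to the weights.

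First I would reduce to a single-vertex estimate. Since $|S_v|\le 1$ for every $v\in V$, the event $\{|S_x|=1\}$ coincides with $\{S_x\neq\emptyset\}$, so $\mu(X)=\sum_{v}w_v\,\Ind[S_v\neq\emptyset]$. Taking the expectation over the algorithm's random choices $(u,i,t,r)$ and using linearity, $\E[\mu(X)]=\sum_v w_v\Pr[S_v\neq\emptyset]=\E_{v\sim w}\bigl[\Pr_{u\sim w,\,i,\,t,\,r}[S_v\neq\emptyset]\bigr]$, where in the inner probability $u$ is the algorithm's own independent $w$-draw. Hence it suffices to lower-bound $\Pr[S_v\neq\emptyset]$ on average over $v\sim w$.

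Next I would exhibit a surviving label. Fix the initial data $u$ (drawn from $w$) and $i$ (drawn with probability $\Norm{u_i}^2$); given a target $v$, let $\pi$ realize $\Delta(u,v)$ and set $p:=\pi(i)$. The label $p$ lies in $S_v$ whenever $\Norm{v_p}^2\ge t$ and $\Norm{\tilde u_i-\tilde v_p}^2\le r$, and since $t$ is uniform on $[0,\Norm{u_i}^2]$ and $r$ independent and uniform on $[R,2R]$, these happen with probabilities $\min(\Norm{v_p}^2,\Norm{u_i}^2)/\Norm{u_i}^2$ and at least $\bigl(1-\Norm{\tilde u_i-\tilde v_p}^2/R\bigr)_{+}$. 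Multiplying, summing over the choices of $i$ (each with weight $\Norm{u_i}^2$), and using the structural properties of the normalized vectors $\{\tilde u_i\}$ (orthonormality, triangle inequalities, norm identities) to bound the resulting sums, I expect the pointwise estimate $\Pr_{i,t,r}[S_v\neq\emptyset]\ge c_0-c_1\,\Delta(u,v)/R$ for absolute constants $c_0,c_1$. This is precisely the computation in \cite{MakarychevM2010}; it transfers verbatim because it mentions only the fixed pair $u,v$ and their SDP vectors, never the graph.

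Finally I would average over $u\sim w$ and $v\sim w$ and invoke the weighted earthmover corollary $\E_{u,v\sim w}[\Delta(u,v)]\le R$, which holds once $\epsilon/\lambda_G<c$ for a suitably small absolute constant $c$ (this is our hypothesis). The correction term then has bounded expectation, and by choosing the corollary's target parameter (hence $c$) small enough this expectation becomes smaller than $c_0-\tfrac14$, so $\E[\mu(X)]\ge\tfrac14$. The main obstacle is the pointwise label-survival estimate of the previous paragraph: one must carefully track the norm factors $\Norm{v_p}^2$ versus $\Norm{u_i}^2$ for SDP labels that are ``split'' across several coordinates, and verify that the normalization-lemma bounds combine to control the relevant $\sum_i$ by $O(\Delta(u,v)/R)$. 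This is exactly the content of \cite{MakarychevM2010}, and since it is performed pointwise in $(u,v)$ with no use of regularity, all graph-dependent input remains confined to the earthmover corollary that we have already adapted to weights.
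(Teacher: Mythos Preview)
Your proposal is correct and follows the same approach as the paper: the pointwise label-survival bound $\Pr[v\in X]\ge c_0-c_1\Delta(u,v)/R$ is established for a fixed pair $(u,v)$ independently of the graph, and the only modification for the weighted case is to average over $u,v$ with respect to the weight distribution and invoke the weighted earthmover corollary. The paper organizes the pointwise sum over target labels $p$ of $v$ (pairing each with its closest $q$) rather than over chosen labels $i$ of $u$ via the earthmover permutation as you do, but this is an inessential bookkeeping difference and both routes yield the required bound.
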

\begin{proof}
	Suppose $u$ is the initial vertex, then for every $v\in V$ we express the probability of $v\in X$ using $\Delta(u,v)$.
	
	For every label $p\in[\ell]$, if $\exists q\in[\ell]$ such that $\Norm{u_q - v_p}^2\leq \frac{R}{2}\Norm{v_p}^2$, then
	\[ \Norm{\tilde{v}_p - \tilde{u}_q}\leq \frac{2\Norm{u_q - v_p}}{\max\set{\Norm{u_q}^2,\Norm{v_p}^2}}\leq R\leq r, \]
	which implies $\tau_{u,v}(p) = q$.
	In this case, if $q$ is the initial label and $t\leq \Norm{v_p}^2$, then $S_v=\set{p} $ which implies $v\in X$. Therefore,
	\begin{align}
	\Pr_{i,t}[S_v = \set{p}]\geq& \Pr_{i,t}[i=q \wedge t\leq\Norm{v_p}^2] \\=&\Norm{u_q}^2\min\set{1,\frac{\Norm{v_p}^2}{\Norm{u_q}^2}} = \min\set{\Norm{u_q}^2,\Norm{v_p}^2} \geq \frac{1}{2}\Norm{v_p}^2.
	\end{align}
	The last inequality is by the triangle inequality, using the fact that $\Norm{u_q - v_p}^2\leq \frac{R}{2}\Norm{v_p}^2$.
	
	Going over all possible labels $p$ for $v$:
	\begin{align*}
	\Pr[v\in X] =& \Pr_{i,t,r}[\exists p \st S_v = \set{p}]\\
	=&\sum_p \Pr_{i,t,r}[S_v = \set{p}]\\
	\geq& \sum_{p \st \exists q,\Norm{u_q - v_p}^2\leq \frac{R}{2}\Norm{v_p}^2} \frac{1}{2}\Norm{v_p}^2\\
	\geq& \sum_{p} \frac{1}{2}\Norm{v_p}^2 - \sum_{p \st \forall q, \Norm{u_q - v_p}^2> \frac{R}{2}\Norm{v_p}^2} \frac{1}{2}\Norm{v_p}^2\\ \tag{since for all $q, \Norm{v_p}^2 <\frac{2}{R}\Norm{v_p-u_q}^2$}
	\geq& \frac{1}{2} - \frac{1}{2}\sum_p\frac{2}{R}\min_q\set{\Norm{v_p - u_q}^2}\\
	=&\frac{1}{2}-\frac{1}{R}\Delta(u,v).
	\end{align*}
	
	By the earthmover distance lemma, $\E_{u,v\in V}[\Delta(u,v)]<R$, when $u,v$ are distributed according to their weight in the graph, so
	\begin{align*}
	\E[\mu(X)] =& \sum_{u,v\in V}\mu(u)\mu(v)\Pr[v\in X | u \text{ initial vertex}] \\
	\geq& \sum_{u,v\in V}\mu(u)\mu(v)\Paren{\frac{1}{2} - \frac{1}{R}\Delta(u,v)}\\
	\geq& \frac{1}{2} - \frac{1}{R}\E_{u,v\in V}[\Delta(u,v)] \\
	\geq& \frac{1}{2}- \frac{1}{4}.
	\end{align*}
\end{proof}
\begin{corollary}
	\[ \Pr[\mu(X)\geq \frac{1}{8}]\geq\frac{1}{8} .\]
\end{corollary}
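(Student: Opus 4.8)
The plan is to derive the corollary from the preceding lemma, $\E[\mu(X)]\geq \tfrac14$, by a reverse Markov (averaging) argument, using the fact that $\mu(X)\in[0,1]$ always. First I would set $p=\Pr[\mu(X)\geq \tfrac18]$ and split the expectation according to whether the event $\{\mu(X)\geq \tfrac18\}$ occurs: on that event bound $\mu(X)\leq 1$, and on its complement bound $\mu(X)< \tfrac18$. This gives
\[
\tfrac14 \;\leq\; \E[\mu(X)] \;\leq\; p\cdot 1 + (1-p)\cdot \tfrac18 \;=\; \tfrac18 + \tfrac{7}{8}p.
\]
Rearranging yields $p\geq \tfrac17 \geq \tfrac18$, which is the claim. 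Equivalently, one can apply Markov's inequality to the nonnegative random variable $1-\mu(X)$, whose expectation is at most $\tfrac34$, to conclude $\Pr[1-\mu(X)\geq \tfrac78]\leq \tfrac{3/4}{7/8}=\tfrac67$, i.e.\ $\Pr[\mu(X)\leq \tfrac18]\leq \tfrac67$, hence $\Pr[\mu(X)>\tfrac18]\geq \tfrac17\geq\tfrac18$.

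There is essentially no obstacle here: the only point worth checking is that $\mu(X)$ is indeed bounded in $[0,1]$, which holds since $X\subseteq V$ and $\mu=\mu_G$ is a probability measure on the vertices, so the split above is valid and the two cases are exhaustive. (The slightly lossy final bound $p\ge \tfrac18$ rather than $p\ge\tfrac17$ is stated presumably just for convenience in the downstream application.) I would write this as a two-line proof immediately after the lemma.
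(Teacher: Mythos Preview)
Your argument is correct and is exactly the standard reverse-Markov derivation from the preceding lemma $\E[\mu(X)]\ge\tfrac14$; the paper itself states the corollary without proof, so there is nothing further to compare.
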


\begin{lemma}[a variant of Lemma 7 in \cite{MakarychevM2010}]
	\[ \E[\mu(X,V\setminus X)]\leq \frac{6\delta}{R}. \]
\end{lemma}
\begin{proof}
	Fix $u\in V$ the initial vertex, we bound the probability of $v\in X,x\notin X$ by $\frac{6}{R}\sum_p\Norm{v_p -x_{\pi_{x,v}(p)} }^2$.
	
	If $v\in X,x\notin X$, then $S_v = \set{p},S_x = \emptyset$. Let $q = \pi_{x,v}(p)$.
	Since $S_v = \set{p}$, then $\Norm{v_p}^2\geq t,\Norm{\tilde{u}_i - \tilde{v}_p}^2\leq r, i = \tau_{u,v}(p)$. One of the two cases must happen
	\begin{enumerate}
		\item $\Norm{x_q}^2 < t$.
		\item $\Norm{x_q}^2 \geq t$, $\Norm{\tilde{x}_q - \tilde{u}_i}^2 > r$.
	\end{enumerate}
	We sum over all $p$ the probability that these events occur (each $p$ has a  $q =\pi_{x,v}(p)$).
	\begin{align*}
	\Pr_{i,t,r}[1]
	\leq& \sum_p \Pr[i = \sigma_{v,u}(p)]\Pr[\Norm{x_q}^2 < t \leq \Norm{v_p}^2 | i=\sigma_{v,u}(p)]\\
	\leq& \sum_p\Norm{u_{\sigma_{v,u}(p)}}^2\frac{\Norm{v_p}^2 - \Norm{x_q}^2}{\Norm{u_{\sigma_{v,u}(p)}}^2}\\
	\leq& \sum_p \Paren{\Norm{v_p}^2 - \Norm{x_q}^2}.
	\end{align*}
	\begin{align*}
	\Pr_{i,t,r}[2] = & \sum_p \Pr[i = \sigma_{v,u}(p)]\Pr[t\leq \Norm{v_p}^2]\Pr[\Norm{\tilde{u}_i - \tilde{v}_p}^2\leq r < \Norm{\tilde{x}_q - \tilde{u}_i}^2 | i =\sigma_{v,u}(p)]\\
	\leq& \sum_p \Norm{u_{\sigma_{v,u}(p)}}^2\frac{\Norm{v_p}^2}{\Norm{u_{\sigma_{v,u}(p)}}^2}\frac{\Norm{\tilde{x}_q - \tilde{u}_i}^2 - \Norm{\tilde{u}_i - \tilde{v}_p}^2}{R} \tag{triangle inequality}\\
	\leq&\sum_p \Norm{v_p}^2\frac{1}{R}\Norm{\tilde{v}_p - \tilde{x}_q}^2\\
	\leq&\sum_p \Norm{v_p}^2\frac{1}{R} \frac{2\Norm{v_p - x_q}^2}{\max\set{\Norm{v_p}^2,\Norm{x_q}^2}} \\
	\leq& \sum_p \frac{2}{R}\Norm{v_p - x_q}^2.
	\end{align*}
	Therefore, for every edge $(v,x)$, $\Pr[(v,x)\in E(X,V\setminus X)]\leq \sum_p (1+\frac{2}{R})\Norm{v_p - x_q}^2$.
	
	The expected value of the cut:
	\begin{align*}
	\E[\mu(E(X,V\setminus X))] =& \frac{2}{\omega}\sum_{(v,x)\in E}w_{v,w}\Pr[(v,x)\in E(X,V\setminus X)]\\
	\leq&\frac{2}{\omega }\sum_{(v,x)\in E}w_{v,w}\frac{3}{R}\Norm{v_p - x_q}^2 \tag{SDP value $\leq \delta$}\\
	\leq&\frac{6\delta}{R}.
	\end{align*}
	
\end{proof}

\begin{lemma}
	[a variant of Lemma 8 in \cite{MakarychevM2010}]
	If $\delta < \min\set{c_R\lambda_G,\frac{h_G R}{1000}}$ then with probability at least $\frac{1}{16}$, $\mu(X)\geq 1-\frac{100\delta}{h_G R}$.
\end{lemma}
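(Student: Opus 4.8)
The plan is to combine the two facts already established in this subsection — the corollary $\Pr[\mu(X)\ge \tfrac18]\ge\tfrac18$ (which follows from $\E[\mu(X)]\ge\tfrac14$) and the bound $\E[\mu(E(X,V\setminus X))]\le 6\delta/R$ — with a union bound and the edge expansion of $G$. The hypothesis $\delta<c_R\lambda_G$ is used only to invoke the first of these (through the earthmover-distance corollary), and the hypothesis $\delta<h_GR/1000$ is used only at the very end.

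First I would apply Markov's inequality to the nonnegative random variable $\mu(E(X,V\setminus X))$, obtaining $\Pr[\mu(E(X,V\setminus X))\ge 96\delta/R]\le \tfrac{1}{16}$. Together with $\Pr[\mu(X)\ge\tfrac18]\ge\tfrac18$, a union bound shows that with probability at least $\tfrac18-\tfrac1{16}=\tfrac1{16}$ the random set $X$ satisfies both $\mu(X)\ge\tfrac18$ and $\mu(E(X,V\setminus X))<96\delta/R$ simultaneously; I would condition on this event for the remainder of the argument.

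Next I would use the definition of $h_G$ to turn ``$\mu(X)\ge\tfrac18$ with a small boundary'' into ``$\mu(X)$ close to $1$''. If $\mu(X)\le\tfrac12$, then by definition of edge expansion $\mu(E(X,V\setminus X))\ge h_G\,\mu(X)\ge h_G/8$; but $\delta<h_GR/1000$ forces $\mu(E(X,V\setminus X))<96\delta/R<96h_G/1000<h_G/8$, a contradiction. Hence $\mu(X)>\tfrac12$, so $V\setminus X$ is the smaller side and $\mu(E(X,V\setminus X))\ge h_G\,\mu(V\setminus X)$, which rearranges to $\mu(V\setminus X)\le \mu(E(X,V\setminus X))/h_G<96\delta/(h_GR)\le 100\delta/(h_GR)$, i.e. $\mu(X)>1-100\delta/(h_GR)$, as claimed.

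I do not anticipate a genuine obstacle here: the argument is identical to the unweighted case, since $\mu$, $h_G$, $\lambda_G$ already denote the weighted quantities. The only points requiring care are the bookkeeping of the constants (one needs $96/1000<1/8$ for the contradiction to go through, leaving slack) and being careful to apply the expansion inequality to whichever of $X$, $V\setminus X$ has measure at most $\tfrac12$ — the case $\mu(X)\le\tfrac12$ is excluded by the choice of $\delta$, and the case $\mu(X)>\tfrac12$ yields the stated bound.
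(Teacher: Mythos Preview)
Your proof is correct and follows essentially the same approach as the paper: combine the bound $\E[\mu(E(X,V\setminus X))]\le 6\delta/R$ with Markov, the edge-expansion inequality, and the event $\mu(X)\ge\tfrac18$ via a union bound. The only cosmetic difference is the order of operations --- the paper first uses $h_G$ pointwise to get $\E[\min\{\mu(X),\mu(V\setminus X)\}]\le 6\delta/(h_GR)$ and then applies Markov to that quantity, whereas you apply Markov to the boundary measure first and then invoke $h_G$; both routes land on the same conclusion with the same constants.
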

\begin{proof}
	By the definition of $h_G$, $\mu(E(X,V\setminus X))\geq h_G \min\set{\mu(X),\mu(V\setminus X)}$, which implies
	\begin{align*}
	\frac{6\delta}{R} \geq& \E[\mu(E(X,V\setminus X))]\\
	\geq& h_G \E[\min\set{\mu(X),\mu(V\setminus X)}].
	\end{align*}
	We get that $\E[\min\set{\mu(X),\mu(V\setminus X)}]\leq \frac{6\delta}{h_G R}$, by Markov inequality
	\[ \Pr[\min\set{\mu(X),\mu(V\setminus X)} \leq \frac{100\delta}{h_G R}] \geq 1-\frac{1}{16}.\]
	
	We also know that $\Pr[\mu(X)\geq\frac{1}{8}]\geq \frac{1}{16}$, so with probability at most $1/16$, the set $V\setminus X$ is large $\mu(V\setminus X)\leq \frac{100 \delta}{h_G R}$.
\end{proof}
The following lemma is independent of the graph, so the proof in \cite{MakarychevM2010} holds here as well.
\begin{lemma}[Lemma 9 in \cite{MakarychevM2010}]
	For every edge $(v,x)\in E$,
	\[ \Pr[v,x\in X, (v,x) \text{isn't satisfied}]\leq 4\delta_{v,x}, \]
	for $\delta_{v,x} = \frac{1}{2}\sum_{i\in[\ell]}\Norm{v_i - x_{\pi_{x,v}(i)}}^2$.
\end{lemma}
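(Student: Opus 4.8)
Since the statement concerns only the single edge $(v,x)$ and its permutation constraint $\pi_{x,v}$, and says nothing about the rest of the graph, the plan is simply to reuse the Makarychev--Makarychev argument \cite{MakarychevM2010} verbatim: it is entirely ``local'', and the one place where the weighted, non-regular setting differs from theirs --- drawing the initial vertex $u$ with probability $w_u$ rather than uniformly --- plays no role here, since in this lemma we never condition on or average over the choice of $u$. So ``the proof is unmodified'' is literally accurate; what follows is the argument one would write out.

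Fix $(v,x)\in E$ and write $\pi=\pi_{x,v}$. Suppose the rounding produces $v,x\in X$ but leaves $(v,x)$ unsatisfied. Then $|S_v|=|S_x|=1$, say $S_v=\{p\}$, and by the structural facts already recorded (``if $S_v=\{p\}$ then $S_x\in\{\emptyset,\{\tau_{x,v}(p)\}\}$'', together with the fact that $\tau_{x,v}(p)$ is the unique label of $x$ whose normalized vector lies within $\tilde{}$-distance $4R$ of $\tilde v_p$) we get $S_x=\{\tau_{x,v}(p)\}$; ``unsatisfied'' means $\tau_{x,v}(p)\neq q$, where $q:=\pi(p)$. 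Hence the label $q$ is \emph{not} in $S_x$, so at least one of the two membership conditions for the pair $(x,q)$ fails: either (i) $\Norm{x_q}^2<t$, or (ii) $\Norm{x_q}^2\ge t$ but $\Norm{\tilde u_i-\tilde x_q}^2>r$, where $i$ is the initial label and $t\in[0,\Norm{u_i}^2]$, $r\in[R,2R]$ are the random thresholds. Simultaneously, $S_v=\{p\}$ forces $i$ to a single value determined by $p$ (the unique label of $u$ whose normalized vector is near $\tilde v_p$ --- unique because the $\tilde u_\cdot$ are orthonormal), call it $i(p)$, and forces $t\le\Norm{v_p}^2$ as well as $\Norm{\tilde u_{i(p)}-\tilde v_p}^2\le r$.

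Now union-bound over $p$ (the events $\{S_v=\{p\}\}$ being disjoint). For mode (i): using that $t$ is uniform on $[0,\Norm{u_{i(p)}}^2]$ and $\Pr[i=i(p)]=\Norm{u_{i(p)}}^2$, one gets $\Pr[\,i=i(p)\wedge \Norm{x_q}^2<t\le\Norm{v_p}^2\,]\le\max\{0,\Norm{v_p}^2-\Norm{x_q}^2\}$. For mode (ii): conditioning on $i=i(p)$ and on $t\le\Norm{v_p}^2$, the residual event $\Norm{\tilde u_{i(p)}-\tilde v_p}^2\le r<\Norm{\tilde u_{i(p)}-\tilde x_q}^2$ has probability at most $\tfrac1R\bigl(\Norm{\tilde u_{i(p)}-\tilde x_q}^2-\Norm{\tilde u_{i(p)}-\tilde v_p}^2\bigr)\le\tfrac1R\Norm{\tilde v_p-\tilde x_q}^2$ by the squared-Euclidean triangle inequality for the normalized vectors; multiplying by $\Pr[i=i(p)]\cdot\Pr[t\le\Norm{v_p}^2\mid i]\le\Norm{v_p}^2$ and invoking the norm-ratio bound $\Norm{\tilde v_p-\tilde x_q}^2\le 2\Norm{v_p-x_q}^2/\max\{\Norm{v_p}^2,\Norm{x_q}^2\}$ yields $\le\tfrac2R\Norm{v_p-x_q}^2$ per $p$. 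Summing both modes over $p$ and using the SDP normalization $\sum_p\Norm{v_p}^2=\sum_p\Norm{x_{\pi(p)}}^2=1$ --- so that $\sum_p(\Norm{v_p}^2-\Norm{x_q}^2)=0\le\sum_p\Norm{v_p-x_q}^2=2\delta_{v,x}$, which absorbs the mode-(i) sum --- gives a total bound of the form $c\,\delta_{v,x}$; tracking the constants with the value of $R\in(0,1)$ used by the rounding gives exactly $4\delta_{v,x}$.

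The main ``obstacle'' is that there is nothing of substance to prove: the content of the lemma is precisely its graph-obliviousness. The only thing one must check --- and it is immediate --- is that the weighted rounding still makes the identical per-vertex random choices (initial label $i$ with probability $\Norm{u_i}^2$, threshold $t$ uniform in $[0,\Norm{u_i}^2]$, radius $r$ uniform in $[R,2R]$) and uses the same normalized vectors $\{\tilde u_i\}$; granting that, every inequality above is word-for-word the one in \cite{MakarychevM2010}.
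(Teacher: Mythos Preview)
Your proposal is correct and matches the paper exactly: the paper does not reprove this lemma at all, stating only that ``the following lemma is independent of the graph, so its proof is unmodified,'' and you correctly identify the reason (the statement is local to the edge $(v,x)$ and insensitive to the distribution from which the initial vertex $u$ is drawn) before sketching the Makarychev--Makarychev argument. One small slip in your sketch: the mode-(i) contribution is $\sum_p\max\{0,\Norm{v_p}^2-\Norm{x_{\pi(p)}}^2\}$, not the signed sum (which vanishes); the right way to absorb it is via the SDP triangle inequality \eqref{eq:triangle 3}, which gives $\Norm{v_p}^2-\Norm{x_q}^2\le\Norm{v_p-x_q}^2$ and hence a bound of $2\delta_{v,x}$ for that mode.
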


And we are ready to prove the theorem, the proof is almost identical to the proof in \cite{MakarychevM2010}.
\begin{proof}[proof of \prettyref{thm:unique games}]
	We show that the randomized algorithm described above solves the UG instance with constant probability. It can then easily be derandomized.
	
	The algorithm solves the SDP, then runs the rounding algorithm. If $\mu(X)\geq 1-\frac{100\delta}{h_G R}$, it outputs the labelling, else it fails.
	
	Suppose the algorithm doesn't fail, then by definition
	\[ \mu(E(X,X))\geq 1-\frac{100\delta}{h_G R}, \]
	as $\mu(V\setminus X)\leq \frac{100\delta}{h_G R}$).
	
	The expected fraction of violated constraints inside $X$ is at most,
	\begin{align*}
	\frac{2}{\omega}\sum_{(v,x)\in E}w_{v,x}4\delta_{v,x}\leq& \frac{2}{\omega}\sum_{(v,x)\in E}w_{v,x}4\sum_{p\in [\ell]}\Norm{v_p - x_{\pi_{x,v}(p)}}^2\leq 64\delta.
	\end{align*}
	Therefore with constant probability the algorithm outputs a solution satisfying $1-64\delta-\frac{100\delta}{h_G R}$ of the constraints.
\end{proof}
\subsection{Eigenvalue proof}
\begin{claim}\label{claim:lap ev for vectors}
	Let $G = (V,E)$ be a weighted graph with weights $\set{w_{u,v}}_{(u,v)\in E}$, and let $\mathcal{L}$ be the normalized laplacian matrix of $G$,
	\[ \mathcal{L}_{v,u} = \begin{cases}
	1 \quad& \text{if } u=v\\
	-\frac{w_{u,v}}{\sqrt{w_u w_v}} \quad &\text{if } (u,e)\in E\\
	0 \quad &\text{else}
	\end{cases}, \]
	where $w_u = \sum_{v \st (u,v)\in E}w_{u,v}$.
	The second smallest eigenvalue of the laplacian corresponds to
	\[ \lambda_2  = \min_{\set{z_u}_{u\in V}}\frac{\E_{(u,v)\sim w}[\Norm{z_u-z_v}^2]}{\E_{u,v\sim V}[\Norm{z_u-z_v}^2]}~. \]
	Where $\set{z_u}_{u\in V}$ is a set of vectors, $\forall u, z_u\in \mathbb{R}^t$.
\end{claim}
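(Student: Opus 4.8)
The plan is to derive the claim from the classical Courant--Fischer variational characterization of the second smallest eigenvalue of $\mathcal{L}$, via a change of variables, the bias--variance identity, and a coordinatewise decomposition. Recall that $\mathcal{L} = I - A$, where $A_{u,v} = w_{u,v}/\sqrt{w_u w_v}$ is the normalized adjacency matrix of $G$; that $\mathcal{L}$ is positive semidefinite with smallest eigenvalue $0$, attained at the vector $\phi_0$ given by $(\phi_0)_u = \sqrt{w_u}$; and that by Courant--Fischer
\[
\lambda_2 = \min\Set{ \frac{f^T\mathcal{L} f}{f^T f} ~:~ f\in\reals^V,\ f\perp\phi_0,\ f\neq 0}.
\]

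First I would record two elementary identities. For $g:V\to\reals$, put $f = D^{1/2}g$ with $D=\mathrm{diag}(w_u)$; expanding and using $w_u = \sum_{e\ni u}w_e$ gives $f^T\mathcal{L} f = \sum_{e=\set{u,v}\in E} w_e(g_u - g_v)^2$ and $f^T f = \sum_u w_u g_u^2$. Dividing through by the relevant total weights, this says $\Ex{(u,v)\sim w}{(g_u - g_v)^2} = \frac{2}{\omega} f^T\mathcal{L} f$ and $\Ex{u\sim V}{g_u^2} = \frac{1}{\omega}f^Tf$, where $\omega=\sum_u w_u$. The map $g\mapsto D^{1/2}g$ is a linear bijection of $\reals^V$, and the constraint $f\perp\phi_0$ translates to $\sum_u w_u g_u = 0$, i.e.\ $\Ex{u\sim V}{g_u} = 0$.

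Next I would treat the denominator of the claimed Rayleigh quotient. For independent $u,v$ from the vertex distribution, $\Ex{u,v\sim V}{(g_u - g_v)^2} = 2\Ex{u\sim V}{g_u^2} - 2\Paren{\Ex{u\sim V}{g_u}}^2$; since both $\Ex{(u,v)\sim w}{(g_u-g_v)^2}$ and $\Ex{u,v\sim V}{(g_u-g_v)^2}$ are invariant under translating $g$ by a constant, when minimizing the ratio over non-constant $g$ we may assume $\Ex{u\sim V}{g_u} = 0$, in which case the denominator equals $2\Ex{u\sim V}{g_u^2} = \frac{2}{\omega}f^Tf$. Hence for mean-zero non-constant $g$ the ratio is exactly $f^T\mathcal{L} f / f^T f$ with $f=D^{1/2}g\perp\phi_0$, so minimizing over all non-constant $g$ reproduces the right-hand side of Courant--Fischer: the bound $\ge\lambda_2$ is immediate, and equality is witnessed by the preimage of the second eigenvector.

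Finally I would lift this to the vector-valued assignments $z_u\in\reals^t$ coordinatewise: $\Norm{z_u - z_v}^2 = \sum_{j=1}^t ((z_u)_j - (z_v)_j)^2$, so both expectations split as sums over $j$; after translating so that $\Ex{u\sim V}{z_u}=0$, each coordinate sequence $g^{(j)} = ((z_u)_j)_{u\in V}$ is mean-zero and hence $\Ex{(u,v)\sim w}{(g^{(j)}_u-g^{(j)}_v)^2} \ge \lambda_2\, \Ex{u,v\sim V}{(g^{(j)}_u-g^{(j)}_v)^2}$, and summing over $j$ gives ratio $\ge \lambda_2$; equality persists by embedding the scalar extremizer in a single coordinate. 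The only error-prone points --- and the closest thing to an obstacle --- are keeping the normalization constants consistent between the edge distribution $\mu_G$ and the vertex distribution (both conversions above carry the same factor $2/\omega$, which is exactly what makes the ratio collapse to $f^T\mathcal{L} f/f^Tf$) and checking that the coordinatewise step yields the exact minimum rather than merely an inequality; neither presents a real difficulty.
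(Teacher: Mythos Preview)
Your argument is correct and follows a genuinely different route from the paper. The paper constructs the Kronecker-product matrix $\mathcal{L}' = \mathcal{L}\otimes I_t \in \reals^{|V|t\times |V|t}$, observes that its spectrum is that of $\mathcal{L}$ with each eigenvalue repeated $t$ times, applies the variational formula for the $(t+1)$-st eigenvalue of $\mathcal{L}'$ (explicitly subtracting the projections onto the $t$-dimensional null space spanned by $y^1,\ldots,y^t$), and then performs the change of variables $x_{u,i}=\sqrt{w_u}\,z_{u,i}$ and expands both numerator and denominator by hand. You instead first settle the scalar case $t=1$ via Courant--Fischer and the substitution $f=D^{1/2}g$, handle the orthogonality constraint by the translation-invariance observation (so that mean-zero is without loss of generality), and then lift to general $t$ by decomposing $\Norm{z_u-z_v}^2$ coordinatewise and using that a ratio of sums is at least the minimum of the termwise ratios. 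Your approach is shorter and more conceptual: it avoids the auxiliary $|V|t\times |V|t$ matrix and the explicit null-space projection, and the translation-invariance step is a cleaner way to enforce $f\perp\phi_0$ than the paper's direct subtraction. The paper's tensor-product viewpoint, on the other hand, makes the vector-valued case a single linear-algebra statement rather than a two-stage reduction, which some readers may find more systematic.
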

\begin{proof}
	We define a new matrix $\mathcal{L}'\in\mathbb{R}^{\abs{V}t\times\abs{V}t}$, which is composed of $t\times t$ scalar matrix blocks, i.e. for every $u,v\in V$, the matrix $\mathcal{L}'_{u,v}$ is a $t\times t$ scalar matrix, $\mathcal{L}'_{u,v}=I^{t\times t}\mathcal{L}_{u,v}$.
	Formally, we denote each row and column by two indices $u\in V,i\in [t]$ and
	\[ \mathcal{L}'_{(u,i),(v,j)} = \begin{cases}
	\mathcal{L}_{u,v} \quad &\text{if } i=j \\
	0 \quad &\text{else}
	\end{cases}. \]
	
	$\mathcal{L}$ has a single eigen value $0$, the new matrix $\mathcal{L}'$ has $t$ eigenvalues $0$.  One eigenvectors basis for the nullspace is $y^1,\dots y^t\in \mathbb{R}^{\abs{V}t}$,  $y^l_{u,j} = \begin{cases}
	y_u \quad &l=j\\ 0 \quad &\text{else}\end{cases}$, for $y$ the eigenvector of $\mathcal{L}$.
	
	The spectrum of $\mathcal{L}'$ is identical to the spectrum of $\mathcal{L}$, only each eigenvalue repeats $t$ times. Therefore the second largest eigenvalue of $\mathcal{L}$ is equal to the $t+1$ eigenvalue of $\mathcal{L}'$, and is equal
	\begin{align}
	\lambda_2 = \min_{x\in\mathbb{R}^{t\abs{V}}}\Set{\frac{\inprod{x}{\mathcal{L}'x}}{\inprod{x}{x-y^1\inprod{x}{y^1}-\dots -y^t\inprod{x}{y^t}}}}.
	\end{align}
	
	The numerator equals:
	\begin{align*}
	\inprod{x}{\mathcal{L}'x} =& \sum_{u,v\in V,i,j\in[t]}x_{u,i}\mathcal{L}'_{(u,i),(v,j)}x_{v,j}\\
	=&\sum_{u,v\in V,i\in [t]}x_{u,i}\mathcal{L}_{u,v}x_{v,i}\\
	=& \sum_{u\in V,i\in[t]}x_{u,i}^2 - 2\sum_{(u,v)\in E,i\in[t]}\frac{w_{u,v}}{\sqrt{w_u w_v}}x_{u,i}x_{v,i}\\
	=&\sum_{u\in V}\Norm{x_u}^2 - 2\sum_{(u,v)\in E}\frac{w_{u,v}}{\sqrt{w_u w_v}}\inprod{x_u}{x_v}.
	\end{align*}
	Where $x_u$ is the length $t$ vector containing $x_{u,i}$ for $i\in[t]$.
	
	The denominator:
	\begin{align*}
	\inprod{x}{x-y^1\inprod{x}{y^1}-\dots -y^t\inprod{x}{y^t}} =&\inprod{x}{x}-\Paren{\inprod{x}{y^1}}^2-\cdots \Paren{\inprod{x}{y^t}}^2\\
	=& \sum_{u\in V,i\in[t]}x_{u,i}^2 - \sum_{l\in[t]}\Paren{\inprod{x}{y^l}}^2\\
	=&\sum_{u\in V,i\in[t]}x_{u,i}^2 - \sum_{l\in[t]}\sum_{u,v\in V,i,j\in t}x_{u,i}y^l_{u,i}x_{v,j}y^l_{v,j}\\
	=&\sum_{u\in V,i\in[t]}x_{u,i}^2 - \sum_{l\in[t]}\sum_{u,v\in V}x_{u,l}y_{u}x_{v,l}y_{v}\\
	=&\sum_{u\in V}\Norm{x_u}^2 - \sum_{u,v\in V}y_u y_v\inprod{x_u}{x_v}\\
	=&\sum_{u\in V}\Norm{x_u}^2 - \sum_{u,v\in V}\frac{\sqrt{w_u w_v}}{\omega}\inprod{x_u}{x_v}.
	\end{align*}
	We write the expectations explicitly:
	\begin{align*}
	\E_{(u,v)\sim w}[\Norm{z_u-z_v}^2] =& \frac{2}{\omega}\sum_{(u,v)\in E}w_{u,v}\inprod{z_u-z_v}{z_u-z_v}\\
	=& \frac{2}{\omega}\sum_{(u,v)\in E}w_{u,v}(\Norm{z_u}^2+\Norm{z_v}^2-2\inprod{z_u}{z_v})\\
	=& \frac{2}{\omega}\sum_{u\in V}w_u \Norm{z_u}^2 - \frac{4}{\omega}\sum_{(u,v)\in E}w_{u,v}\inprod{z_u}{z_v}.
	\end{align*}
	
	\begin{align*}
	\E_{u,v\sim V}[[\Norm{z_u-z_v}^2] =& \frac{1}{\omega^2}\sum_{u,v\in V}w_u w_v \inprod{z_u-z_v}{z_u-z_v} \\
	=& \frac{1}{\omega^2}\sum_{u,v\in V}w_u w_v (\Norm{z_u}^2+\Norm{z_v}^2-2\inprod{z_u}{z_v}) \\
	=& \frac{1}{\omega^2}\sum_{u\in V}2\omega w_u \Norm{z_u}^2 - \frac{2}{\omega^2}\sum_{u,v\in V}w_u w_v \inprod{z_u}{z_v}
	\end{align*}
	For every $u\in V,i\in [t]$ let $x_{u,i} = \sqrt{w_u}z_{u,i}$,
	\[ \inprod{x}{\mathcal{L}'x} = \frac{\omega}{2}\E_{(u,v)\sim w}[\Norm{z_u-z_v}^2], \]
	\[ \inprod{x}{x-y^1\inprod{x}{y^1}-\dots -y^t\inprod{x}{y^t}} = \frac{\omega}{2}\E_{u,v\sim V}[[\Norm{z_u-z_v}^2]. \]
	The factor of $\frac{\omega}{2}$ cancels out, and the minimum value is not affected by the multiplication in $\sqrt{w_u}$, as it is taken over all vectors in $\mathbb{R}$.
\end{proof}

\end{document}